\pgfplotsset{compat=newest} 
\pgfplotsset{plot coordinates/math parser=false} 
\newlength\figureheight 
\newlength\figurewidth
\date{\today}
\begin{document}

\title{Mismatched Binary Hypothesis\\ Testing: Error Exponent Sensitivity}

\author{Parham Boroumand and Albert Guill\'en i F\`abregas

%

\thanks{P. Bouroumand is with the Department of Engineering, 
University
of Cambridge, Cambridge CB2 1PZ, U.K. (e-mail: pb702@cam.ac.uk).

A.~Guill\'en i F\`abregas is with the Department of Engineering, 
University
of Cambridge, Cambridge CB2 1PZ, U.K. and with the Department of Information and 
Communication
Technologies, Universitat Pompeu Fabra, Barcelona 08018, Spain. (e-mail: guillen@ieee.org).
}
\thanks{This work was supported in part by the European Research Council under 
Grant 725411.
}
\thanks{This work has been presented in part at the 2020 International Z\"urich Seminar on Information and Communication, Z\"urich, Switzerland, Feb. 2020 and at the 2021 IEEE International Symposium on Information Theory, Melbourne, Australia, Jul. 2021.}

\thanks{Copyright (c) 2017 IEEE. Personal use of this material is permitted.  However, permission to use this material for any other purposes must be obtained from the IEEE by sending a request to pubs-permissions@ieee.org.}
}

\maketitle

\begin{abstract} 

We study the problem of mismatched binary hypothesis testing between i.i.d. distributions. We analyze the tradeoff between the pairwise error probability exponents when the actual distributions generating the observation are different from the distributions used in the likelihood ratio test, sequential probability ratio test, and Hoeffding's generalized likelihood ratio test in the composite setting. When the real distributions are within a small divergence ball of the test distributions, we find the deviation of the worst-case error exponent of each test with respect to the matched error exponent. In addition, we consider the case where an adversary tampers with the observation, again within a divergence ball of the observation type. We show that the tests are more sensitive to distribution mismatch than to adversarial observation tampering.
\end{abstract}

\section{Introduction }
\label{sec:intro}

Hypothesis testing, or the problem of deciding the probability distribution that generated a given observation, is one of the main problems in statistics, finding applications in social, biological, medical and data sciences, including information theory, image processing and signal processing.  Depending on the specific subject and underlying assumptions, hypothesis testing has been termed classification, model selection, discrimination, or signal detection. The simplest instance of this problem is the binary case, i.e., deciding which of the two probability distributions have generated the observation.
Depending on whether or not priors on the hypotheses are available, the problem is referred to as Bayesian or non-Bayesian. In the Bayesian setting, the average probability of error emerges as the key performance measure. Instead, when priors are not available the average error probability cannot be computed and one must consider a tradeoff among the pairwise error probabilities (see e.g. \cite{Lehmann,poor2013introduction}).

In \cite{Neyman}, Neyman and Pearson studied non-Bayesian binary case and considered the minimization of one pairwise error probability subject to the other being upper-bounded by a constant. In this setting, they proved that a hypothesis test that computes the ratio between the two probability distributions for the given observation and compares it to a threshold is optimal. In order to implement the optimal likelihood ratio test, one must wait to process the whole block of observation data at once. In some applications, it may be preferable to attempt to make a decision as promptly as possible. In \cite{Wald1}, Wald proposed a sequential test that attempts to make a decision at every time instant, or waits one time instant for the arrival of a new observation sample; the optimality of the above test was established in \cite{Wald}.

A critical underlying assumption in the above and follow-up works, is that the probability distributions of each of the hypotheses are known and thus, can be employed by testing algorithms. While highly desirable, this is an optimistic assumption that is difficult to guarantee in practice. A number of solutions have been proposed in the literature. Composite hypothesis testing considers the case where the distributions that generate the observation belong to known families of distributions. Hoeffding proposed an asymptotically optimal test for this setting in \cite{Hoeffding}. Classification assumes no knowledge of the underlying probability distributions but assumes the availability of training data (see e.g. \cite{ziv1988classification,CoverClass,Gutman}). Robust hypothesis testing assumes a statistical model of the variability of the true distribution, which is then used to design the optimal test for that robustness model \cite{Huber, Kassam, Levy, poor2013introduction}.

The simplicity of the Neyman and Pearson's likelihood ratio or Wald's sequential tests has brought them into practice, even in settings where the distributions are unknown.  In this work, we consider an alternative to the above methods. We assume that the true probability distributions $P_0$ and $P_1$ are unknown, but instead,  two fixed probability distributions $\hat P_0$ and $\hat P_1$ are used for testing using the optimal tests for the cases where the distributions are known. We refer to this case to as mismatched  hypothesis testing. 


In this work, we study the exponential decay of the probability of error, or error exponents in short, as a proxy for the performance of hypothesis testing. In particular, we consider the error exponent tradeoff between both pairwise error probabilities. We consider the worst case error exponent tradeoff when the actual distributions generating the observation are within a certain distance of the test distributions. As a measure of distance, we use the family of R\'enyi relative entropies \cite{renyi1961measures} (see also \cite{Harremos}) as well as $f$-divergences \cite{ali1966general,csiszar1967information} with $f(t)$ such that its second derivative at $t=1$ is bounded. We study the behavior of the worst-case tradeoff when the distance between the test and the true distributions is small and provide an approximation of the worst-case exponent as an expansion around the matched exponent, i.e., the exponent attained when the distributions are known. In addition, we extend the results to the case where the mismatch is not in the testing distributions but in the observation: an adversary has modified the observation data within a certain divergence of the observation type.
 
This paper is structured as follows. Section \ref{sec:pre} introduces notation and reviews the preliminaries about likelihood ratio, Hoeffding's generalized likelihood ratio and sequential  testing. Sections \ref{sec:lrt}, \ref{sec:glrt} and \ref{sec:sequentialHT} discuss our main results for the likelihood ratio, generalized likelihood ratio and sequential probability ratio tests in the mismatched setting. Section \ref{sec:adv} discusses the case where the mismatch is not in the distribution, but in the observation, i.e., the cases where the observation data has been tampered by an adversary. Proofs of the main results can be found in the Appendices.

\section{Preliminaries}
\label{sec:pre}

Consider the binary hypothesis testing problem \cite{Lehmann} where an observation $\xv=(x_1,\dotsc,x_k)\in\Xc^k$ is generated from two possible distributions $P^k_0$ and $P^k_1$ defined on the probability simplex $\Pc(\Xc^k)$, for some observation alphabet $\Xc$. We assume that $P^k_0$ and $P^k_1$ are product distributions, i.e., $P_0^k(\xv)=\prod_{i=1}^k P_0(x_i)$, and similarly for $P_1^k$. For simplicity, we assume that both $P_0(x)>0$ and $P_1(x)>0$ for each $x\in\Xc$. In the following, we describe the settings considered in the paper.

\subsection{Likelihood Ratio Test}

In this setting $k$ is assumed to be a fixed integer $n$. Let  $\phi: \Xc^n \rightarrow \{0,1\}$ be a hypothesis test that decides which distribution generated the observation $\xv$. We consider deterministic tests $\phi$ that decide in favor of $P_0$ if $\xv\in \Ac_0$, where $\Ac_0\subset \Xc^n$ is the decision region for the first hypothesis, and decides in favor of $P_1$ otherwise. We define $\Ac_1=\Xc^n \setminus \Ac_0$ to be the decision region for the second hypothesis. The test performance is determined by the pairwise error probabilities
\begin{equation}\label{eq:e1}
\epsilon_0 (\phi)= \sum_{\bx \in \Ac_1} P_0^n(\bx),~~~~ \epsilon_1 (\phi)= \sum_{\bx \in \Ac_0} P_1^n(\bx).  
\end{equation}
A hypothesis test is said to be optimal whenever it achieves the optimal error probability tradeoff given by 
\begin{equation}\label{eq:trade}
 \min_{\phi: \epsilon_1 (\phi) \leq \xi} \epsilon_0 (\phi),
\end{equation}
where $\xi\in[0,1]$.

The likelihood ratio test defined as
\begin{equation}
\phi(\xv)= \mathbbm{1} \bigg \{ \frac{P_1^n(\bx)}{P_0^n(\bx)}  \geq e^{n\gamma} \bigg\},
\end{equation}
was shown in  \cite{Neyman} to attain the optimal tradeoff \eqref{eq:trade} for every $\gamma$. The type of a sequence $\bx= (x_1,\ldots,x_n)$ is defined as $\Th(a)=\frac{N(a|\bx)}{n}$, where $N(a|\bx)$ is the number of occurrences of the symbol $a\in\Xc$ in the string. The likelihood ratio test can also be expressed as a function of the type of the observation $\Th$ as \cite{Cover}
\begin{align}\label{eq:LRTtype}
\phi(\Th)= \mathbbm{1} \big\{ D(\Th\|P_0)-D(\Th\|P_1)    \geq  \gamma \big\},
\end{align}
where $D(P\|Q)= \sum_{x\in\Xc} P(x) \log \frac{P(x)}{Q(x)}$ is the relative entropy between distributions $P$ and $Q$. This expression of the likelihood ratio test will be used extensively throughout the paper.

In this paper, we  study the asymptotic exponential decay of the pairwise error probabilities as the observation length $n$ tends to infinity, i.e., 
\begin{equation}
E_0 \triangleq \liminf_{n\to\infty} -\frac{1}{n} \log \epsilon_0 (\phi), ~~~E_1 \triangleq \liminf_{n\to\infty} -\frac{1}{n} \log \epsilon_1 (\phi).
\label{eq:exponents}
\end{equation}
These limits are known to exist for i.i.d. observations, as is the case of this paper. In order to study the tradeoff between error exponents, it is sufficient to consider deterministic tests. The optimal error exponent tradeoff $(E_0,E_1)$ is defined as
\begin{align}\label{eq:tradefix}
E_1(E_0) \triangleq \sup \big\{&E_1\in \mathbb{R}_{+}: \exists \phi , \exists N_0 \in \ZZ_+  \  \text{s.t.} \  \forall   n>N_0,  \nonumber \\
&\epsilon_0(\phi) \leq e^{-nE_0}  ~ \text{and} ~  \epsilon_1(\phi) \leq e^{-nE_1}\big\}.
\end{align}


By using Sanov's Theorem \cite{Cover,Dembo}, the optimal error exponent tradeoff $(E_0,E_1)$, attained by the likelihood ratio test, can be shown to be \cite{Blahut}
\begin{align}
E_0=\min_{Q \in \Qc_0} D(Q\|P_0)\label{eq:min1},\\
E_1=\min_{Q \in \Qc_1} D(Q\|P_1)\label{eq:min2},
\end{align}
where 
\begin{align}
\Qc_0&=  \big\{Q\in \Pc(\Xc): D(Q\| P_0)-D(Q\|P_1) \geq  \gamma  \big\}, \label{eq:constraint1} \\
\Qc_1&= \big\{Q\in \Pc(\Xc): D(Q\| P_0)-D(Q\|P_1) \leq  \gamma    \big\}. \label{eq:constraint2}
\end{align}

The minimizing distribution in \eqref{eq:min1}, \eqref{eq:min2} is the tilted distribution
\begin{equation}\label{eq:tilted}
Q_{\lambda}(x)= \frac{ P_{0}^{1-\lambda}(x) P_{1}^{\lambda}(x) } {\sum_{a \in \Xc }  P_{0}^{1-\lambda}(a) P_{1}^{\lambda}(a) }, ~~~0\leq\lambda \leq 1,
\end{equation}
whenever $\gamma$ satisfies $-D(P_0\|P_1) \leq \gamma  \leq D(P_1\|P_0)$.
In this case, $\lambda$ is the solution of
\begin{equation}\label{eq:KKTgamma} 
D(Q_{\lambda}\| P_0)-D(Q_{\lambda} \| P_1) = \gamma.
\end{equation}
Instead, if $\gamma<-D(P_0\|P_1)$, the optimal distribution in \eqref{eq:min1} is $Q_\lambda(x)= P_0(x)$ and $E_0=0$, and if $\gamma>D(P_1\|P_0)$, the optimal distribution in \eqref{eq:min2} is $Q_\lambda(x)= P_1(x)$ and $E_1=0$. 

Equivalently, the dual expressions of \eqref{eq:min1} and \eqref{eq:min2} can be derived by substituting the minimizing distribution \eqref{eq:tilted} into the Lagrangian yielding \cite{Blahut,Dembo} 
\begin{align}
E_0&=\max_{\lambda \geq 0 } \lambda \gamma - \log \Big (  \sum_{x\in \Xc} P_0^{1-\lambda}(x) P_1^{\lambda}(x) \Big ),  \\
E_1&=\max_{\lambda \geq 0 } -\lambda \gamma - \log \Big (  \sum_{x\in \Xc} P_0^{\lambda}(x) P_1^{1-\lambda}(x)  \Big ).  
\end{align}

The Stein regime is defined as the highest error exponent  under one hypothesis  when the  error probability under the other hypothesis is at most some fixed $ \epsilon \in (0,\frac{1}{2})$ \cite{Cover}
\begin{align}\label{eq:steindef}
E_1^{(\epsilon)}  \triangleq \sup \big \{ & E_1\in \mathbb{R}_{+}: \exists \phi , \exists n_0 \in \ZZ_+  \  \text{s.t.} \  \forall   n>n_0 \nonumber \\
&\epsilon_0 (\phi)\leq \epsilon  \quad \text{and} \quad \epsilon_1(\phi) \leq e^{-nE_1} \big \}.
\end{align} 
The optimal $E_1^{(\epsilon)}$, given by \cite{Cover}
\begin{equation}\label{eq:stein2}
E_1^{(\epsilon)} = D(P_0\|P_1),
\end{equation}
can be achieved by setting the threshold in \eqref{eq:LRTtype} to be ${\gamma} = -D(P_0\|P_1)+\frac{c}{\sqrt{n}}$, where $c$ is a constant that depends on distributions $P_0, P_1$ and $\epsilon$.

\subsection{Generalized Likelihood Ratio Test} 

In this setting, $k$ is also a fixed integer $n$ similarly to the likelihood ratio test case. We consider the composite hypothesis testing problem where only the first distribution $P_0$ is known, and no prior information is available regarding the second distribution $P_1$. Hoeffding proposed in \cite{Hoeffding} the following generalized likelihood ratio test for when $P_0$ is known, while the second distribution is restricted to $\Pc(\Xc)$,
\begin{equation}
\phi(\bx)= \mathbbm{1} \bigg \{ \frac{\sup_{\substack{ P_1 \in  \Pc(\Xc) } }P_1^n (\xv)}{P_0^n(\bx)}  \geq e^{n\gamma} \bigg\}.
\end{equation}
Similarly to \eqref{eq:LRTtype}, Hoeffding's generalized likelihood ratio test can be expressed as a function of the type of the observation $\Th$ as \cite{Cover,Merhav}
\begin{align}\label{eq:GLRTtype}
\phi(\Th)= \mathbbm{1} \big\{ D(\Th\|P_0)    \geq  \gamma \big\}.
\end{align}
The pairwise error probabilities are defined as exactly as in \eqref{eq:e1} where $\Ac_0$ and $\Ac_1$ are the corresponding decision regions of Hoeffding's test. The optimal error exponent tradeoff is defined  as in \eqref{eq:tradefix}.

By using Sanov's Theorem, for any $0 \leq \gamma \leq D(P_1\|P_0)$ the error exponents of Hoeffding's generalized likelihood ratio test are given by
\begin{align}
E_0&=\min_{Q: D(Q\|P_0) \geq \gamma } D(Q\|P_0)=\gamma\label{eq:minH1},\\
E_1&=\min_{Q: D(Q\|P_0) \leq \gamma} D(Q\|P_1)\label{eq:minH2}.
\end{align}
The minimizing distribution in \eqref{eq:minH2} is the tilted distribution
\begin{equation}\label{eq:tiltedH}
Q_{\mu}(x)= \frac{ P_{0}^{\frac{\mu}{1+\mu}}(x) P_{1}^{\frac{1}{1+\mu}}(x) } {\sum_{a \in \Xc }  P_{0}^{\frac{\mu}{1+\mu}}(a) P_{1}^{\frac{1}{1+\mu}}(a) }, ~~~0\leq\mu,
\end{equation}
and the $\mu$ is the solution to $D(Q_\mu\|P_0)=\gamma$. Therefore, by comparing \eqref{eq:tilted} and \eqref{eq:tiltedH}, the optimizing distributions have the same form and there exist some thresholds for likelihood ratio test and Hoeffding's test such that $Q_\mu =Q_\lambda$. Hence Hoeffding's test can achieve the optimal error exponent tradeoff \cite{Merhav}.

\subsection{Sequential Probability Ratio Test}

In the sequential setting, the number of samples $k$ is a random variable called the stopping time $\tau$ taking values in $\ZZ_{+}$. A sequential hypothesis test is a pair $\Phi=(\phi: \Xc^\tau \rightarrow \{0,1\},\tau)$ where for every $n\geq 0$ the event $\{\tau\leq n\} \in \mathscr{F}_n$ where $\mathscr{F}_n$ is the sigma algebra induced by random variables $X_1,\ldots,X_n$, i.e., $ \sigma(X_1,\ldots,X_n)$. Moreover, $\phi$ is a $\mathscr{F}_{\tau}$ measurable decision rule, i.e., the decision rule determined by causally observing the sequence $X_i$. In other words, at each time instant, the test attempts to make a decision in favor of one of the hypotheses or chooses to take a new sample. 

The two possible pairwise error probabilities that measure the performance of the test are defined as
\begin{equation}\label{eq:errprob}
\epsilon_0(\Phi)=\PP_0\big [\phi(X^{\tau})\neq 0  \big]   ~ \text{,} ~ 	\epsilon_1(\Phi)=\PP_1\big[\phi(X^{\tau})\neq 1  \big],
\end{equation}
where the probabilities are over $P_0, P_1$, respectively. 

There are two definitions of achievable error exponents in the literature. According to \cite{Csiszar}  the optimal error exponent tradeoff is defined as
\begin{align}\label{eq:tradeseq1}
 E_1(E_0)& \triangleq \sup \Big \{E_1\in \mathbb{R}^{+}: \exists \Phi ,\ \exists \ n  \in \ZZ_{+}   \text{ s.t.} \     \mathbb{E}_{P_0} [\tau] \leq n, \nonumber \\
 &\mathbb{E}_{P_1} [\tau]  \leq n, \   \epsilon_0(\Phi) \leq 2^{- n E_0}    ~ \text{and} ~  \epsilon_1(\Phi) \leq 2^{- n E_1} \Big \} .
\end{align}
Alternatively, the expected stopping time can be different under each hypothesis by design to increase the reliability under one of the hypotheses by taking a larger number of samples compared to the number of samples the alternative hypothesis. Accordingly, \cite{Poly}  defined the error exponent tradeoff as
\begin{align}\label{eq:tradeseq2}
 &E_1(E_0)\notag\\
 &~~\triangleq  \sup \Big \{E_1\in \mathbb{R}^{+}: \exists \Phi ,\ \exists \ n_0, n_1  \in \ZZ_{+}, \text{s.t.}  \   \mathbb{E}_{P_0} [\tau]   \leq n_0, \nonumber \\
&~~~~ \mathbb{E}_{P_1}[\tau]  \leq n_1, \   \epsilon_0(\Phi) \leq 2^{- n_0 E_0}  ~ \text{and} ~   \epsilon_1(\Phi) \leq 2^{- n_1 E_1} \Big \},
\end{align}
which allows different stopping times under different hypothesis. 

%

The sequential probability ratio test (SPRT) $\Phi=(\phi,\tau)$ was proposed by Wald in \cite{Wald1}. The stopping time is defined as follows
\begin{align}
\tau=\inf \big\{n\geq1:S_n& \geq  \gamma_0 \  \text{or} \  S_n \leq -\gamma_1\big\},
\label{eq:deftau}
\end{align} 
where 
\begin{align}\label{eq:LLR}
S_n=\sum_{i=1}^n \log \frac{P_0(x_i)}{P_1(x_i)},
\end{align} 
is the the accumulated log-likelihood ratio (LLR) of the observed sequence $\xv$ and $\gamma_0, \gamma_1$ are two positive real numbers. Moreover, the test makes a decision according to the rule
\begin{align}
\phi= 
\begin{cases}
0&  \text{if } S_\tau \geq \gamma_0  \\
1 &  \text{if } S_\tau \leq  - \gamma_1,\\
\end{cases}
\label{eq:defsprt}
\end{align}    
It is shown in  \cite{Berk} that the above test attains the optimal error exponent tradeoff, i.e., as thresholds $\gamma_0, \gamma_1$  approach infinity, the test achieves the best error exponent trade-off in \eqref{eq:tradeseq1} and \eqref{eq:tradeseq2}. It is known that the error probabilities of sequential probability ratio test as a function of $\gamma_0$ and  $\gamma_1$ are \cite{Woodroofe} 
\begin{align}
\epsilon_0 = c_0 \cdot e^{-\gamma_1 }  \quad , \quad  \epsilon_1 =c_1 \cdot e^{-\gamma_0 } ,
\end{align}
as $\gamma_0, \gamma_1 \rightarrow \infty$ where $c_0, c_1$ are positive constants. Moreover, it can also be shown that
\begin{align}
\mathbb{E}_{P_0} [\tau] &= \frac{  \gamma_0}{D(P_0\|P_1)}(1+o(1))     \label{eq:SPRT1}, \\ 
\mathbb{E}_{P_1} [\tau] &= \frac{  \gamma_1}{D(P_1\|P_0)}(1+o(1))   \label{eq:SPRT2}.
\end{align}
Therefore, according to definition \eqref{eq:tradeseq1}, the optimal error exponent tradeoff is given by,
\begin{equation}
E_0 = D(P_1\|P_0)  , ~ E_1= D(P_0\|P_1),
\end{equation}
where thresholds $\gamma_0, \gamma_1$  are chosen as
\begin{equation}\label{eq:thresh}
\gamma_0=n\big (D(P_0\|P_1)+o(1)\big),~ \gamma_1= n\big(D(P_1\|P_0)+o(1)\big). 
\end{equation}
Hence, the sequential probability ratio test achieves the Stein regime error exponents of the standard likelihood ratio test simultaneously.  Moreover, according to definition \eqref{eq:tradeseq2} the optimal error exponent tradeoff is given by
\begin{equation}
E_0 = \ell D(P_1\|P_0)  , ~ E_1= \frac{1}{\ell} D(P_0\|P_1),
\end{equation}
where $\ell= \frac{n_1}{n_0} $. Equivalently, we have 
\begin{equation} \label{eq:seqexp}
E_0 E_1 = D(P_0\|P_1)D(P_1\|P_0).
\end{equation}
To achieve \eqref{eq:seqexp}, thresholds $\gamma_0, \gamma_1$ should be chosen as
\begin{equation}\label{eq:threshell}
\gamma_0=n_0\big (D(P_0\|P_1)+o(1)\big) ,~ \gamma_1=  n_1\big(D(P_1\|P_0)+o(1)\big).
\end{equation}


\section{Likelihood Ratio Testing Sensitivity}
\label{sec:lrt}

In this section, we study the robustness of mismatched likelihood ratio testing. We first derive the optimal error exponent tradeoff and find the worst-case error exponent when the true distribution lies in a small relative entropy ball around the testing distribution. Then, we study the deviation of the worst case error exponent around the matched likelihood ratio test exponent for small divergence balls, where the divergence is either the R\'enyi divergence of order $\alpha$ or the $f$-divergence with $\frac{d^2f(t)}{dt^2}\big|_{t=1}=\alpha$.

Let $\hat{P}_0(x)$ and $\hat{P}_1(x)$ be the testing distributions used in the likelihood ratio test with threshold $\hat{\gamma}$ given by
\begin{align}\label{eq:LRTtypeMM}
\hat{\phi}(\Th)= \mathbbm{1} \big\{ D(\Th\|\hat{P}_0)-D(\Th\|\hat{P}_1)    \geq \hat{\gamma} \big\}.
\end{align}
For simplicity, we assume that both $\hat P_0(x)>0$ and $\hat P_1(x)>0$ for each $x\in\Xc$. We are interested in the achievable error exponent tradeoff of the mismatched likelihood ratio test, i.e., 
\begin{align}\label{eq:opttestmism}
\hat{E}_1(\hat{E}_0) \triangleq \sup \big\{&\hat{E}_1\in \mathbb{R}_{+}: \exists \hat{\gamma}  , \exists N_0 \in \ZZ _+ \  \text{s.t.} \  \forall  n>N_0, \nonumber \\
&\epsilon_0 \leq e^{-n\hat{E}_0}  ~ \text{and} ~  \epsilon_1 \leq e^{-n\hat{E}_1}\big\}.
\end{align}

\begin{theorem}\label{thm:mismatchLRT}
	For fixed $\hat{P}_0, \hat{P}_1 \in \Pc(X)$ the optimal error exponent tradeoff in \eqref{eq:opttestmism} is given by		
	\begin{align}
	\hat{E}_0&=\min_{Q \in \hat{\Qc}_0} D(Q\|P_0) \label{eq:LRTmis1},\\
	\hat{E}_1&=\min_{Q \in \hat{\Qc}_1} D(Q\|P_1)\label{eq:LRTmis2},
	\end{align}	
	where 
	\begin{align}
	\hat{\Qc}_0&=  \big\{Q\in \Pc(\Xc): D(Q\| \hat{P}_0)-D(Q\|\hat{P}_1) \geq \hat{\gamma} \big \}, \label{eq:qhat1}\\
	\hat{\Qc}_1&= \big\{Q\in \Pc(\Xc): D(Q\| \hat{P}_0)-D(Q\|\hat{P}_1) \leq  \hat{\gamma}   \big \}. \label{eq:qhat2}
	\end{align}
	The minimizing distributions in \eqref{eq:LRTmis1} and \eqref{eq:LRTmis2} are 
	\begin{equation}\label{eq:tiltedMM1}
	\hat{Q}_{\lambda_0}(x)= \frac{ P_0(x)  \hat{P}_{0}^{-\lambda_0}(x) \hat{P}_{1}^{\lambda_0}(x) } {\sum_{a \in \Xc } P_0 (a) \hat{P}_{0}^{-\lambda_0}(a) \hat{P}_{1}^{\lambda_0}(a) },~~\lambda_0\geq0,
	\end{equation}
	\begin{equation}\label{eq:tiltedMM2}
	\hat{Q}_{\lambda_1}(x)= \frac{ P_1(x)  \hat{P}_{1}^{-\lambda_1}(x) \hat{P}_{0}^{\lambda_1}(x) } {\sum_{a \in \Xc } P_1 (a) \hat{P}_{1}^{-\lambda_1}(a) \hat{P}_{0}^{\lambda_1}(a) },~~\lambda_1\geq0
	\end{equation}
	respectively, where  $  \lambda_0$ is chosen so that	
	\begin{equation}\label{eq:KKTgamma1} 
	D(\hat{Q}_{\lambda_0}\|\hat{P}_0)-D(\hat{Q}_{\lambda_0} \| \hat{P}_1) = \hat{\gamma},
	\end{equation}
	whenever  
	$D(P_0\|\hat{P}_0)- D(P_0\|\hat{P}_1) \leq \hat{\gamma}$,
	and otherwise, $\hat{Q}_{\lambda_0}(x)=P_0(x)$ and  $\hat{E}_0=0$. Similarly,  $ \lambda_1 \geq 0$ is chosen so that
	\begin{equation}\label{eq:KKTgamma2} 
	D(\hat{Q}_{\lambda_1}\|\hat{P}_0)-D(\hat{Q}_{\lambda_1} \| \hat{P}_1) = \hat{\gamma},
	\end{equation}
	whenever
	$   D(P_1 \|\hat{P}_0) -D(P_1 \|\hat{P}_1) \geq\hat{\gamma},$    and otherwise, $\hat{Q}_{\lambda_1}(x)=P_1(x)$ and  $\hat{E}_1=0$. Furthermore,  the dual expressions for the type-\RNum{1} and type-\RNum{2} error exponents  are
	\begin{align}\label{eq:dual}
	\hat{E}_0&=\max_{\lambda \geq 0 } \lambda \hat{\gamma} - \log \Big (  \sum_{x\in \Xc} P_0(x) \hat{P}_0^{-\lambda}(x) \hat{P}_1^{\lambda}(x) \Big ),  \\
	\hat{E}_1&=\max_{\lambda \geq 0 } -\lambda \hat{\gamma} - \log \Big (  \sum_{x\in \Xc} \hat{P}_0^{\lambda}(x) P_1(x) \hat{P}_1^{-\lambda}(x)  \Big ).  
	\end{align} 
\end{theorem}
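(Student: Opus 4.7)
The plan is to prove Theorem~\ref{thm:mismatchLRT} by following the method-of-types pipeline used for the matched case in \eqref{eq:min1}--\eqref{eq:min2}, adapted to the fact that the decision regions in \eqref{eq:qhat1}--\eqref{eq:qhat2} are now defined by the test distributions $\hat P_0,\hat P_1$ rather than by the true ones. First I would observe that since the mismatched test \eqref{eq:LRTtypeMM} depends on $\bx$ only through its type $\Th$, the error event under $P_0$ is exactly $\{\Th\in\hat\Qc_0\}$. Applying Sanov's theorem to the closed, convex constraint set $\hat\Qc_0$ (which is the intersection of the simplex with a half-space that is linear in $Q$) yields
\begin{equation}
\hat E_0 = \min_{Q\in\hat\Qc_0} D(Q\|P_0),
\end{equation}
and symmetrically $\hat E_1 = \min_{Q\in\hat\Qc_1} D(Q\|P_1)$. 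The full-support assumption on $\hat P_0,\hat P_1$ guarantees that the linear functional $D(Q\|\hat P_0)-D(Q\|\hat P_1)=\E_Q[\log(\hat P_1/\hat P_0)]$ is bounded on the simplex, so the standard ``infimum over the interior equals infimum over the closure'' argument goes through and produces both matching upper and lower bounds on the exponent.

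Next I would solve these two convex programs using Lagrange multipliers. Writing the Lagrangian for \eqref{eq:LRTmis1} with multiplier $\lambda_0\geq 0$ for the inequality constraint and $\nu$ for the normalization, and using $\frac{\partial}{\partial Q(x)}[D(Q\|\hat P_0)-D(Q\|\hat P_1)]=\log(\hat P_1(x)/\hat P_0(x))$, the stationarity condition gives $\log Q(x) = \log P_0(x) + \lambda_0\log(\hat P_1(x)/\hat P_0(x)) + \text{const}$, so after normalization the minimizer is precisely the tilted distribution in \eqref{eq:tiltedMM1}. Complementary slackness then determines $\lambda_0$: if $P_0\in\hat\Qc_0$, i.e.\ $D(P_0\|\hat P_0)-D(P_0\|\hat P_1)\geq\hat\gamma$, then the unconstrained minimizer $Q=P_0$ is feasible, so $\lambda_0=0$ and $\hat E_0=0$; otherwise the constraint is tight, forcing $\lambda_0$ to solve \eqref{eq:KKTgamma1}. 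The derivation of \eqref{eq:tiltedMM2} is symmetric with $P_0\leftrightarrow P_1$ and the role of the two test distributions swapped, since in \eqref{eq:LRTmis2} the reversed inequality $\leq\hat\gamma$ flips the sign of the tilt.

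Finally, to derive the dual expressions \eqref{eq:dual}, I would substitute $\hat Q_{\lambda_0}$ back into $D(\hat Q_{\lambda_0}\|P_0)$ and use the KKT identity $\E_{\hat Q_{\lambda_0}}[\log(\hat P_1/\hat P_0)]=D(\hat Q_{\lambda_0}\|\hat P_0)-D(\hat Q_{\lambda_0}\|\hat P_1)=\hat\gamma$ to collapse the cross-terms, so that only the logarithm of the normalizer survives, giving $\hat E_0 = \lambda_0\hat\gamma - \log\sum_x P_0(x)\hat P_0^{-\lambda_0}(x)\hat P_1^{\lambda_0}(x)$. Upgrading this from the specific KKT value $\lambda_0$ to the explicit supremum over $\lambda\geq 0$ is justified by the (strict) concavity in $\lambda$ of the cumulant-generating dual objective, which is an elementary property of $\log$--sum--exp expressions; at $\lambda=0$ the dual evaluates to $0$, which handles the trivial case uniformly. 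The main technical obstacle I anticipate is the careful case analysis: one must verify that $\lambda\mapsto D(\hat Q_\lambda\|\hat P_0)-D(\hat Q_\lambda\|\hat P_1)$ is continuous and strictly monotone (so \eqref{eq:KKTgamma1} has a unique solution in the nontrivial regime), identify the range of $\hat\gamma$ for which a finite multiplier exists, and keep the bookkeeping straight for both $\hat E_0$ and $\hat E_1$ simultaneously so that the two threshold conditions stated in the theorem are matched exactly.
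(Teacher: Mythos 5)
Your proposal is correct and follows essentially the same route as the paper's proof in Appendix~\ref{apx:TmismatchLRT}: apply Sanov's theorem to the type-based decision regions, note that each exponent is a convex program over the simplex intersected with a half-space (since $D(Q\|\hat P_0)-D(Q\|\hat P_1)$ is linear in $Q$), solve the KKT stationarity condition to obtain the tilted minimizers \eqref{eq:tiltedMM1}--\eqref{eq:tiltedMM2}, use complementary slackness for the case split on $\hat\gamma$, and obtain \eqref{eq:dual} by substituting the minimizer back into the Lagrangian with zero duality gap. The extra care you flag about continuity and monotonicity of $\lambda\mapsto D(\hat Q_\lambda\|\hat P_0)-D(\hat Q_\lambda\|\hat P_1)$ is sound but not needed beyond what the paper's argument already supplies.
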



\begin{proof}
	Theorem \ref{thm:mismatchLRT}, proved in Appendix \ref{apx:TmismatchLRT} follows from a direct application of Sanov's Theorem.
\end{proof}

\begin{remark}
	For mismatched likelihood ratio testing, the optimizing distributions $\hat{Q}_{\lambda_0}, \hat{Q}_{\lambda_1}$ can be different, since the decision regions only depend on the mismatched distributions. However, if $\hat{P}_0, \hat{P}_1$ are tilted with respect to $P_0$ and $P_1$, then both $\hat{Q}_{\lambda_0}, \hat{Q}_{\lambda_1}$ are also tilted respect to $P_0$ and $P_1$. {This implies that for any set of mismatched distributions $\hat{P}_0, \hat{P}_1$ that  are tilted with respect to generating distributions, there exists  a threshold $\hat{\gamma}$ such that the mismatched likelihood ratio test achieves the optimal error exponent tradeoff in \eqref{eq:tradefix}. However, the thresholds to achieve the same type-\RNum{1} and type-\RNum{2} error exponents, under the matched and mismatched tests are different, and the difference can be found by equating the likelihood ratio test for $\hat{Q}_{\lambda}$}
\end{remark}

\begin{theorem}\label{thm:stein}
	
	In the Stein regime, the mismatched likelihood ratio test achieves 
	\begin{equation}\label{eq:steinMM2}
	\hat{E}_1^{(\epsilon)}=\min_{Q \in \hat{\Qc}_1} D(Q\|P_1),
	\end{equation}
	with threshold  
	\begin{equation}\label{eq:steinthresh2}
	\hat{\gamma}=D(P_0\|\hat{P}_0) -D(P_0\|\hat{P}_1) +\sqrt{\frac{V(P_0,\hat{P}_0,\hat{P}_1)}{n}}\Qsf^{-1}(\epsilon),
	\end{equation}
	where
	\begin{equation}
V(P_0,\hat{P}_0,\hat{P}_1) = {\rm Var}_{P_0}\bigg[\log \frac{\hat{P}_0(X)}{\hat{P}_1(X)} \bigg ],
\end{equation}
is the variance of the random variable $\log \frac{\hat{P}_0(X)}{\hat{P}_1(X)}$ where $X$ is distributed according to $P_0$, and 
$\Qsf^{-1}(\epsilon)$ is the inverse cumulative distribution function of a zero-mean unit-variance Gaussian random variable.
\end{theorem}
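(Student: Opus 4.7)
The plan is to verify the two Stein-regime requirements in \eqref{eq:steindef} separately for the mismatched likelihood ratio test: first, that $\epsilon_0 \le \epsilon$ holds for all sufficiently large $n$, and second, that $\epsilon_1$ decays exponentially at rate $\min_{Q\in\hat{\Qc}_1}D(Q\|P_1)$. The argument parallels the matched Stein derivation behind \eqref{eq:stein2}, but the log-likelihood ratio now involves the mismatched distributions $\hat{P}_0,\hat{P}_1$, so its mean under $P_0$ is $\mu \triangleq D(P_0\|\hat{P}_0)-D(P_0\|\hat{P}_1)$ (rather than $-D(P_0\|P_1)$) and its variance is $V(P_0,\hat{P}_0,\hat{P}_1)$.

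For the Type-I constraint, I would rewrite the decision rule using the identity $D(T\|\hat{P}_0) - D(T\|\hat{P}_1) = \sum_{x\in\Xc} T(x)\log\tfrac{\hat{P}_1(x)}{\hat{P}_0(x)}$, so that $\hat{\phi}(\hat{T}_{\bx})=1$ is equivalent to $\tfrac{1}{n}\sum_{i=1}^n Y_i \ge \hat{\gamma}$ with $Y_i \triangleq \log\tfrac{\hat{P}_1(X_i)}{\hat{P}_0(X_i)}$. Under $P_0^n$ the $Y_i$ are i.i.d.\ with mean $\mu$ and variance $V(P_0,\hat{P}_0,\hat{P}_1)$, both finite because $\hat{P}_0,\hat{P}_1$ are strictly positive on the finite alphabet $\Xc$. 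Plugging in the prescribed threshold,
\begin{align}
\epsilon_0 = \Pp_{P_0^n}\!\left[\tfrac{\sum_{i=1}^n(Y_i-\mu)}{\sqrt{nV(P_0,\hat{P}_0,\hat{P}_1)}} \ge \Qsf^{-1}(\epsilon)\right],
\end{align}
and the central limit theorem gives $\epsilon_0 \to \epsilon$, which meets the Stein constraint.

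For the Type-II exponent I would apply Sanov's theorem exactly as in the proof of Theorem~\ref{thm:mismatchLRT}. Since $\epsilon_1 = \Pp_{P_1^n}[\hat{T}_{\bx} \in \hat{\Qc}_1]$, Sanov yields $-\tfrac{1}{n}\log\epsilon_1 = \min_{Q\in\hat{\Qc}_1}D(Q\|P_1) + o(1)$, and since $\hat{\gamma} \to \mu$ the set $\hat{\Qc}_1$ converges to $\{Q : \sum_x Q(x)\log\tfrac{\hat{P}_1(x)}{\hat{P}_0(x)} \le \mu\}$, whose minimizer is the tilted distribution $\hat{Q}_{\lambda_1}$ of \eqref{eq:tiltedMM2} with $\hat{\gamma}$ replaced by $\mu$. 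The main technical point is that the $O(1/\sqrt{n})$ perturbation in the threshold must not alter the exponent. Because the constraint is linear in $Q$ and $D(\cdot\|P_1)$ is continuous on the compact simplex, the minimum depends continuously on the threshold, so I would sandwich $\hat{\Qc}_1$ between deterministic sets with thresholds $\mu\pm\delta$, apply Sanov to each, take $n\to\infty$, and then let $\delta\to 0$. Polynomial prefactors from the method of types are absorbed into the $o(1)$ term exactly as in the matched Stein proof.
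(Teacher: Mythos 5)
Your achievability argument is sound and coincides with the first half of the paper's proof: the CLT computation for the type-\RNum{1} constraint is exactly the paper's (the paper works with $\log\frac{\hat P_0}{\hat P_1}$ rather than your $Y_i$, which is only a sign convention), and your Sanov-plus-sandwiching treatment of the $O(1/\sqrt n)$ threshold perturbation for the type-\RNum{2} exponent is, if anything, spelled out more carefully than in the paper, which simply invokes Theorem~\ref{thm:mismatchLRT}.

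The gap is that the theorem asserts an \emph{equality} $\hat E_1^{(\epsilon)}=\min_{Q\in\hat{\Qc}_1}D(Q\|P_1)$, where $\hat E_1^{(\epsilon)}$ is, by analogy with \eqref{eq:steindef}, a supremum over all thresholds meeting the constraint $\epsilon_0\leq\epsilon$; your proposal only proves the achievability direction, i.e., $\hat E_1^{(\epsilon)}\geq\min_{Q\in\hat{\Qc}_1}D(Q\|P_1)$. The missing converse occupies more than half of the paper's proof and has two ingredients: (i) since $\hat{\Qc}_1(\hat\gamma)$ is nested increasing in $\hat\gamma$, the exponent $\hat E_1(\hat\gamma)$ is non-increasing in $\hat\gamma$, so maximizing the type-\RNum{2} exponent amounts to taking the smallest admissible threshold; and (ii) for any threshold sequence satisfying $\limsup_{n\to\infty}\hat\gamma+\varepsilon\leq D(P_0\|\hat P_0)-D(P_0\|\hat P_1)$ the type-\RNum{1} error tends to $1$ (not merely exceeds $\epsilon$), which the paper proves by concentrating the type $\Th$ in an $\ell_\infty$-ball around $P_0$ via Hoeffding's inequality and using continuity of $Q\mapsto D(Q\|\hat P_0)-D(Q\|\hat P_1)$. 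Together these show that $D(P_0\|\hat P_0)-D(P_0\|\hat P_1)$ is the smallest asymptotic threshold compatible with $\limsup_n\epsilon_0\leq\epsilon$, and hence that the exponent you achieved is also the largest achievable. Without step (ii), or some substitute lower bound on $\epsilon_0$ for smaller thresholds, the equality in \eqref{eq:steinMM2} is not established. You should add this converse; the rest of your argument can stand as written.
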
 

\begin{proof}
	Theorem \ref{thm:stein}, proved in Appendix \ref{apx:Tstein} follows from Central Limit Theorem.
\end{proof}

\begin{remark}	
	Note that since $P_0$ satisfies the constraint in \eqref{eq:steinMM2} then $\hat{E}_1^{(\epsilon)} \leq {E}_1^{(\epsilon)}$. In fact, if $\hat{P}_0, \hat{P}_1$ are tilted respect to $P_0, P_1$ then this inequality is met with equality. Moreover, it is easy to find a set of data and test distributions where  $\hat{E}_1^{(\epsilon)} < {E}_1^{(\epsilon)}$. 
	
\end{remark}


Next, we study the worst-case error-exponent performance of mismatched likelihood ratio testing when the distributions generating the observation fulfil
 \begin{equation}\label{eq:ball2}
 P_0 \in \Bc(\hat P_0,r_0),~~ P_1 \in \Bc(\hat P_1, r_1),
 \end{equation}
 where
  \begin{equation}\label{eq:ball}
 \Bc(Q,r)= \big\{P\in\Pc(\Xc): d(Q,P) \leq r \big\},
 \end{equation} 
 is a ball centred at distribution $Q$ containing all distributions whose  distance is smaller or equal than radius $r$, and for the R\'enyi divergence of positive order $\alpha$ where $\alpha\neq 1$ we set $d(Q,P)=D_\alpha(Q\|P)=\frac{1}{\alpha-1} \log  \sum_{x\in\Xc} Q(x)^\alpha P(x)^{1-\alpha}$, and for $\alpha=1$, the continuity in $\alpha$ leads to defining the R\'enyi divergence of order $1$ to be the relative entropy. Similarly, given a convex and twice differentiable function $f$, we set $d(Q,P)=D_{f}(Q\|P)=\sum_{x\in\Xc} P(x) f\Big( \frac{Q(x)}{P(x)} \Big) $ to be the $f$-divergence, and we set $\alpha=\frac{d^2f(t)}{dt^2}\Big|_{t=1}$.

 	For every $P_0$ the achievable error exponent  $\hat{E}_0$  does not depend on $P_1$ therefore, for every  $r_0, r_1 \geq 0$ the least favorable exponents  $\underline{\hat{E}}_0(r_0), \underline{\hat{E}}_1(r_1)$ can be written as
	\begin{align}
	\underline{\hat{E}}_0(r_0)&=\min_{P_0 \in \Bc(\hat P_0,r_0)    } \ \ \min_{Q \in \hat{\Qc}_0} D(Q\|P_0),\label{eq:MMlower1}\\
	\underline{\hat{E}}_1(r_1)&=\min_{P_1 \in  \Bc(\hat P_1,r_1)} \ \ \min_{ Q \in \hat{\Qc}_1  } D(Q\|P_1), \label{eq:MMlower2}
	\end{align}
	where $\hat{\Qc}_0, \hat{\Qc}_1 $ are defined in \eqref{eq:qhat1}, \eqref{eq:qhat2}. Then, for any distribution pair $P_0 \in\Bc (\hat P_0,r_0), P_1 \in \Bc(\hat P_1,r_1)$, the corresponding error exponent pair $(\hat{E}_0, \hat{E}_1)$  satisfies
	\begin{equation}\label{eq:LU1}
	\underline{\hat{E}}_0(r_0)  \leq		\hat{E}_0  \text{,} \quad \underline{\hat{E}}_1(r_1)  \leq		\hat{E}_1.
	\end{equation}

	Figure \ref{fig:mismatch} depicts the mismatched probability distributions and the mismatched likelihood ratio test as a hyperplane dividing the probability space into the two decision regions. The worst-case achievable error exponents of mismatched likelihood ratio testing for data distributions in a divergence ball are essentially the minimum relative entropy between two sets of convex probability distributions. Specifically, the minimum relative entropy between $\Bc(\hat P_0,r_0)$ and $\hat \Qc_1$ gives $\underline{\hat{E}}_0(r_0)$, and similarly for $\underline{\hat{E}}_1(r_1)$. Observe that in the matched case, i.e., $\hat P_0=P_0$ and $\hat P_1=P_1$, $\hat{Q}_{\lambda_0}= \hat{Q}_{\lambda_1}$.
	
	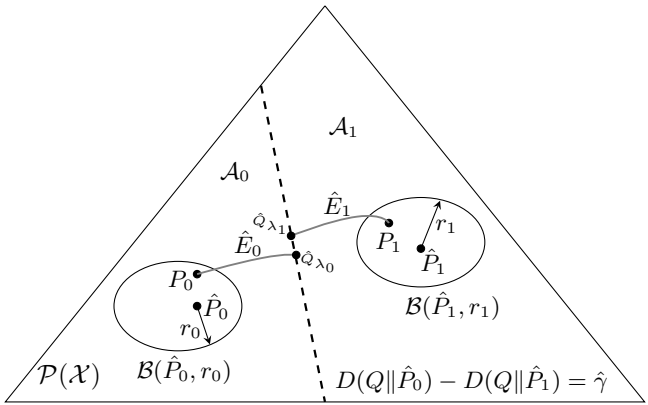
\begin{figure}[!h]
		\centering
		\begin{tikzpicture}[scale=0.85]
		\draw (5,1.2) -- (0,-5) -- (10,-5) --(5,1.2)  ;
		\draw [line width=0.3mm, dashed] (4,-0.05) --   (5.,-5)  ;
		\node at (7.29,-4.65) {\small $D(Q\|\hat{P}_0)-D(Q\|\hat{P}_1) =  \hat{\gamma}$};
		\node[draw,circle,inner sep=1pt,fill] at (3,-3.5) {};
		\node at (3.3,-3.5) {\small $ \hat{P}_0$};
		\node[draw,circle,inner sep=1pt,fill] at (6.5,-2.6) {};
		\node at (6.7,-2.8) {\small $\hat{P}_1$};
		\draw (2.7,-3.5) ellipse (1cm and 0.7cm);
		\draw (6.5,-2.5) ellipse (1 cm and 0.7 cm);
		\node at (1,-4.6) {$\Pc({\Xc})$};
		\draw [->,>=stealth] (6.5,-2.6) -- (6.8,-1.85);
		\draw [->,>=stealth] (3,-3.5) -- (3.2,-4.1);
		\node at (2.9,-3.9) {\small $r_0$};
		\node at (6.9,-2.25) {\small $r_1$};
		\node[draw,circle,inner sep=1pt,fill] at (3,-3) {};
		\node[draw,circle,inner sep=1pt,fill] at (6,-2.2) {};
		\draw [line width=0.25mm, gray]plot [smooth, tension=1] coordinates{(3,-3)  (4,-2.75) (4.55,-2.7)} ;
		\draw [line width=0.25mm, gray]plot [smooth, tension=1] coordinates{ (6,-2.2)  (5.5,-2.1)  (4.47,-2.4)} ;
		\node[draw,circle,inner sep=1pt,fill] at (4.55,-2.7) {};
		\node[draw,circle,inner sep=1pt,fill] at (4.47,-2.4) {};
		\node at (4.9,-2.8) {\tiny $\hat{Q}_{\lambda_0}$};
		\node at (4.15,-2.2) {\tiny $\hat{Q}_{\lambda_1}$};
		\node at (3.8,-2.52) {\small $\hat E_0$};
		\node at (5.2,-1.9) {\small $\hat E_1$};
		\node at (2.75,-3.1) {\small $P_0$};
		\node at (6,-2.5) {\small $P_1$};
		\node at (3.6,-1.4) {\small $\Ac_0$};
		\node at (5.3,-0.7) {\small $\Ac_1$};
		\node at (2.8,-4.5) {\small $\Bc(\hat P_0,r_0)$};
		\node at (7,-3.5) {\small $\Bc(\hat P_1,r_1)$};
		\end{tikzpicture}
		\caption{Mismatched likelihood ratio test with real distributions in divergence balls $\Bc(\hat P_0,r_0), \Bc(\hat P_1,r_1)$. } 
		\label{fig:mismatch}
	\end{figure}

	Furthermore, since the R\'enyi divergence $D_\alpha(Q\|P)$ is convex in $P$ for $\alpha\geq0$ \cite{Harremos}, and $f$-divergence $D_{f}(Q\|P)$ is convex in $P$ \cite{Csiszar}, then \eqref{eq:MMlower1} is a convex optimization problem and the KKT conditions are also sufficient. In addition, for the relative entropy, writing the Lagrangian gives
	\begin{align}\label{eq:lagrange}
	L(Q,P_0,&\lambda_0,\lambda_0', \nu_0, \nu_0')= D(Q\|P_0) + \lambda_0 \big( D(Q\|\hat{P}_1) \nonumber \\
	&-D(Q\|\hat{P}_0) +\hat{\gamma} \big) +  \lambda_0' \big ( D(\hat{P}_0\|P_0) -r_0\big )  \nonumber \\
	&+ \nu_0 \Big ( \sum_{x\in \Xc} Q(x)-1\Big )+ \nu_0' \Big ( \sum_{x\in \Xc} P_0(x)-1\Big ).
	\end{align}
	where $\lambda_0,\lambda_1, \nu_0,\nu_1$ are the Lagrange multipliers corresponding to the optimization \eqref{eq:MMlower1} constraints. Differentiating with respect to $Q(x)$ and  $P_0(x)$ and setting the derivatives to zero we have
	\begin{align}
	1+\log \frac{Q(x)}{P_0(x)} +\lambda_0 \log \frac{\hat{P}_0(x)}{\hat{P}_1(x)} + \nu_0&=0,\label{eq:lagrange1}\\
	-\frac{Q(x)}{P_0(x)}-\lambda_0' \frac{\hat{P}_0(x)}{P_0(x)}+\nu_0'&=0, \label{eq:lagrange2}
	\end{align} 
	respectively. Solving equations \eqref{eq:lagrange1}, \eqref{eq:lagrange2} for every $x\in\Xc$  we obtain
	\begin{align}\label{eq:lowerworstKKT1}
	\underline{Q}_{\lambda_0}(x)&= \frac{ \underline{P}_0(x) \hat{P}_0^{-\lambda_0}(x) \hat{P}_1^{\lambda_0}(x) } {\sum_{a \in \Xc } 	\underline{P}_0(a) \hat{P}_0^{-\lambda_0}(a) \hat{P}_1^{\lambda_0}(a) },\\
	\underline{P}_0(x)&=\frac{1}{1+\lambda_0'} \underline{Q}_{\lambda_0}(x) + \Big(1-\frac{1}{1+\lambda_0'}\Big) \hat{P}_0(x),	\label{eq:lowerworstKKT11}
	\end{align}
     where $ \lambda_0 \geq 0$. Moreover, from the complementary slackness condition \cite{Boyd} if  for all  $P_0$ in $\Bc(\hat{P}_0,r_0)$,   $D(P_0\|\hat{P}_0)- D(P_0\|\hat{P}_1) < \hat{\gamma}$  then 	     
     \begin{align}
     D(\underline{Q}_{\lambda_0}\|\hat{P}_0)-D(\underline{Q}_{\lambda_0} \| \hat{P}_1) &=\hat{ \gamma},\\
     D(\hat{P}_0\| \underline{P}_0) &= r_0,
     \end{align}
     Otherwise,  if there exists a  $\underline{P}_0$  in $\Bc(\hat P_0,r_0) $ such that $D(\underline{P}_0\|\hat{P}_0)- D(\underline{P}_0\|\hat{P}_1) \geq \hat{\gamma}$, then for this distribution $\hat{E}_0=0$. Therefore, if
     	\begin{equation}\label{eq:gammaworsL}
     \max_{P_0 \in\Bc(\hat P_0,r_0) }  D(P_0\| \hat{P}_0) - D(P_0\|\hat{P}_1)  < \hat{\gamma}
     \end{equation}
      holds, for all $P_0$ in the relative entropy ball, then  $\underline{\hat{E}}_0(r_0) >0$, otherwise $\underline{\hat{E}}_0(r_0) =0$.
	 Similar steps hold for the second hypothesis by only substituting the distributions.

Next we will study how the worst-case error exponents $(\underline{\hat{E}}_0, \underline{\hat{E}}_1)$ behave when the divergence ball radii $r_0,r_1$ are small. In particular, we  derive a Taylor series expansion of the worst-case error exponent, when the true distributions $P_0, P_1$ are within a R\'enyi entropy ball of radii $r_0,r_1$ centered at the testing distributions $\hat P_0,\hat P_1$. This approximation can also be interpreted as the worst-case sensitivity of the test, i.e., how does the test perform when actual distributions are very close to the mismatched distributions.    

\begin{theorem}\label{thm:lowerworst}
Consider a hypothesis testing setting with mismatch, with true distributions $P_0, P_1$ and testing distributions $\hat P_0,\hat P_1$. 
	For every 	$r_i \geq 0$, where $i\in \{0,1\}$, and threshold $\hat \gamma$ satisfying
	\begin{equation}\label{eq:threshcodsen}
	-D(\hat{P}_0\|\hat{P}_1) \leq \hat{\gamma} \leq D(\hat{P}_1\|\hat{P}_0),
	\end{equation}
	 the worst-case error exponents $\underline{\hat{E}}_i(r_i)$ can be expressed as
	\begin{equation}\label{eq:worstapproxlrt}
	\underline{\hat{E}}_i (r_i) = E_i -  \sqrt{r_i \cdot \theta_i(\hat{P}_0,\hat{P}_1,\hat{\gamma})}+ o \big(\sqrt{r_i} \big),  
	\end{equation}
	where	
	\begin{equation}\label{eq:sensitivity}
	\theta_i(\hat{P}_0,\hat{P}_1,\hat{\gamma}) =\frac{2}{\alpha}  {\rm Var}_{\hat{P}_i} \bigg(\frac{\hat{Q}_{\lambda}(X)}{\hat{P}_i(X)}  \bigg) 
	\end{equation}
	and $\hat{Q}_{\lambda}$ is the minimizing distribution in \eqref{eq:tilted} for test $\hat{\phi}$. 
\end{theorem}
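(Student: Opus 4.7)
The plan is to combine the KKT characterization of the worst-case distributions derived just above in \eqref{eq:lowerworstKKT1}--\eqref{eq:lowerworstKKT11} with a first-order expansion as $r_i \downarrow 0$. I focus on $i=0$; the case $i=1$ follows by swapping $(P_0,\hat{P}_0)$ with $(P_1,\hat{P}_1)$. Introducing $\epsilon = 1/(1+\lambda_0')$, relation \eqref{eq:lowerworstKKT11} becomes $\underline{P}_0 = \epsilon\, \underline{Q}_{\lambda_0} + (1-\epsilon)\hat{P}_0$. The threshold hypothesis \eqref{eq:threshcodsen} ensures that the inequality in \eqref{eq:qhat1} is active at the matched optimum, so that the $r_0=0$ minimizer coincides with the tilted distribution $\hat{Q}_{\lambda}$ of \eqref{eq:tilted}. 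Continuity of the program in $r_0$ then yields $\underline{P}_0 \to \hat{P}_0$, $\underline{Q}_{\lambda_0} \to \hat{Q}_{\lambda}$ and $\epsilon \to 0$, with the crucial displacement identity $\underline{P}_0 - \hat{P}_0 = \epsilon(\hat{Q}_{\lambda} - \hat{P}_0) + o(\epsilon)$.

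Both the R\'enyi divergence of order $\alpha$ and the $f$-divergence with $f''(1)=\alpha$ admit the local quadratic expansion $d(P,Q) = \frac{\alpha}{2}\chi^2(P,Q) + o(\|P-Q\|^2)$ as $P \to Q$, where $\chi^2(P,Q) = \sum_x (P(x)-Q(x))^2/Q(x)$. Plugging the displacement above into the binding constraint $d(\hat{P}_0,\underline{P}_0) = r_0$ gives
\begin{equation}
r_0 = \frac{\alpha\, \epsilon^2}{2}\, \chi^2(\hat{Q}_{\lambda}, \hat{P}_0) + o(\epsilon^2),
\end{equation}
so that $\epsilon = \sqrt{2 r_0 / \bigl(\alpha\, \chi^2(\hat{Q}_{\lambda}, \hat{P}_0)\bigr)} + o(\sqrt{r_0})$. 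A direct calculation also identifies $\chi^2(\hat{Q}_{\lambda}, \hat{P}_0) = {\rm Var}_{\hat{P}_0}\bigl(\hat{Q}_{\lambda}(X)/\hat{P}_0(X)\bigr)$, the quantity appearing in $\theta_0$.

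For the objective $F(P_0) = \min_{Q \in \hat{\Qc}_0} D(Q\|P_0)$ I would apply the envelope theorem. Since $\hat{\Qc}_0$ is independent of $P_0$, one obtains $\partial F/\partial P_0(x)\bigr|_{\hat{P}_0} = -\hat{Q}_{\lambda}(x)/\hat{P}_0(x)$, and therefore
\begin{equation}
\underline{\hat{E}}_0(r_0) = E_0 - \epsilon \sum_x \frac{\hat{Q}_{\lambda}(x)}{\hat{P}_0(x)}\bigl(\hat{Q}_{\lambda}(x)-\hat{P}_0(x)\bigr) + o(\epsilon) = E_0 - \epsilon\, \chi^2(\hat{Q}_{\lambda},\hat{P}_0) + o(\epsilon),
\end{equation}
where I used $\sum_x \hat{Q}_\lambda = 1$. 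Substituting the value of $\epsilon$ from the previous paragraph collapses this to $E_0 - \sqrt{(2/\alpha)\, r_0\, \chi^2(\hat{Q}_{\lambda}, \hat{P}_0)} + o(\sqrt{r_0})$, i.e.\ the claimed expansion with $\theta_0 = (2/\alpha)\,{\rm Var}_{\hat{P}_0}(\hat{Q}_{\lambda}(X)/\hat{P}_0(X))$.

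The main technical hurdle is to promote these formal first-order computations into rigorous $o(\sqrt{r_0})$ statements. Specifically, I need to justify $\lambda_0 \to \lambda$ and the sharper control $\underline{Q}_{\lambda_0} - \hat{Q}_{\lambda} = O(\epsilon)$ by simultaneously linearizing \eqref{eq:lowerworstKKT1} and \eqref{eq:lowerworstKKT11} at the matched point, and to bound the Taylor remainders of $d(\cdot,\cdot)$ and $D(\cdot\|\cdot)$ uniformly over the feasible perturbations. The full-support assumption $\hat{P}_0,\hat{P}_1>0$ together with the strict threshold condition \eqref{eq:threshcodsen} is precisely what keeps the problem away from boundary degeneracies and makes these uniform bounds go through.
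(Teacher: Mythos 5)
Your proposal reaches the correct expansion and agrees with the paper on every quantitative step (the displacement direction $\underline{P}_0-\hat{P}_0\propto \hat{Q}_{\lambda}-\hat{P}_0$, the identity $\chi^2(\hat{Q}_{\lambda},\hat{P}_0)={\rm Var}_{\hat{P}_0}(\hat{Q}_{\lambda}(X)/\hat{P}_0(X))$, the envelope-theorem gradient $-\hat{Q}_{\lambda}(x)/\hat{P}_0(x)$, and the final substitution of $\epsilon$), but it takes a genuinely different route. The paper first Taylor-expands: it linearizes the value function $P_0\mapsto\min_{Q\in\hat{\Qc}_0}D(Q\|P_0)$ via the maximum/envelope theorems, replaces the divergence ball by the quadratic set $\{\thetav_{P_0}:\tfrac12\thetav_{P_0}^T\Jm(\hat P_0)\thetav_{P_0}\le r_0,\ \onev^T\thetav_{P_0}=0\}$, and then solves the resulting linear-objective/quadratic-constraint program from scratch with a fresh Lagrangian, extracting the optimal $\thetav_{P_0}$ explicitly. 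You instead expand the \emph{exact} optimizer: you read the perturbation direction off the KKT parametrization $\underline{P}_0=\epsilon\,\underline{Q}_{\lambda_0}+(1-\epsilon)\hat{P}_0$ of the unlocalized problem and then send $\epsilon\to0$. Your route is arguably more economical and makes the geometry (true distribution sliding toward the tilted distribution) transparent; the paper's route has the advantage that it never needs the exact stationarity conditions, only the local quadratic model.

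One caveat you should address: the convex-combination identity \eqref{eq:lowerworstKKT11} that your argument leans on is derived in the paper only for the \emph{relative-entropy} ball (it comes from differentiating $D(\hat{P}_0\|P_0)$ in $P_0$). For a general R\'enyi or $f$-divergence ball the exact stationarity condition in $P_0$ is different, so $\underline{P}_0$ is not exactly a convex combination of $\underline{Q}_{\lambda_0}$ and $\hat{P}_0$, and you cannot simply quote that relation and then swap in the general quadratic expansion of $d$. The fix is available and cheap — to first order the optimal direction is governed only by the local quadratic form $\tfrac{\alpha}{2}\chi^2$, which is the same for all the divergences considered up to the scalar $\alpha$ (which cancels in the direction), so the displacement $\underline{P}_0-\hat{P}_0=\epsilon(\hat{Q}_{\lambda}-\hat{P}_0)+o(\epsilon)$ survives — but as written that step is only justified for $\alpha$-divergences coinciding with relative entropy. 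The remaining technical items you flag (continuity of the minimizers via the maximum theorem, uniform control of the Taylor remainders using $\hat P_0,\hat P_1>0$ and the strict threshold condition) are exactly the ones the paper also has to handle, and your plan for them is sound.
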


Observe that as a result of the $\sqrt{r}$ expansion, the slope of the exponent for small $r$ tends to infinity, which implies that the likelihood ratio test is very sensitive to mismatch. In addition, observe that the sensitivity terms $\theta_i(\hat{P}_0,\hat{P}_1,\hat{\gamma})$ in \eqref{eq:sensitivity} can also be expressed as the chi-squared distance between $\hat{Q}_{\lambda}$ and $\hat{P}_i$.

\begin{proposition}\label{cor:varderivative}
	For every $\hat{P}_0,\hat P_1 \in \Pc(\Xc)$,  and $\hat{\gamma}$ satisfying \eqref{eq:threshcodsen}
	\begin{align}
	\frac{\partial }{\partial \hat{\gamma}}\theta_0(\hat{P}_0,\hat{P}_1,\hat{\gamma}) \geq 0, ~~~ 	\frac{\partial }{\partial \hat{\gamma}}\theta_1(\hat{P}_0,\hat{P}_1,\hat{\gamma}) \leq 0.
	\end{align}
\end{proposition}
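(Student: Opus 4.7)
The plan is to reduce the claim to a monotonicity property of the normalization function $Z(\mu) \triangleq \sum_{x\in\Xc} \hat{P}_0^{1-\mu}(x) \hat{P}_1^{\mu}(x)$, viewed through its log-moment-generating-function structure. Writing $\partial \theta_i / \partial \hat{\gamma} = (\partial \theta_i/\partial \lambda)\cdot(d\lambda/d\hat{\gamma})$ by the chain rule, where $\lambda$ and $\hat{\gamma}$ are coupled through the tilting relation \eqref{eq:KKTgamma} applied to $(\hat{P}_0,\hat{P}_1)$, I would first control the sign of $d\lambda/d\hat{\gamma}$ and then the signs of the $\lambda$-derivatives of $\theta_0$ and $\theta_1$ separately.

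The key algebraic step is short: substituting $\hat{Q}_\lambda(x) = \hat{P}_0^{1-\lambda}(x)\hat{P}_1^{\lambda}(x)/Z(\lambda)$ into the chi-squared identity ${\rm Var}_{\hat{P}_i}(\hat{Q}_\lambda(X)/\hat{P}_i(X)) = \sum_x \hat{Q}_\lambda^2(x)/\hat{P}_i(x) - 1$ collapses each variance into a ratio of $Z$-values,
\begin{equation*}
\theta_0 = \frac{2}{\alpha}\left(\frac{Z(2\lambda)}{Z(\lambda)^2}-1\right), \qquad \theta_1 = \frac{2}{\alpha}\left(\frac{Z(2\lambda-1)}{Z(\lambda)^2}-1\right).
\end{equation*}
Setting $L(x) \triangleq \log(\hat{P}_1(x)/\hat{P}_0(x))$, one recognizes $Z(\mu) = \E_{\hat{P}_0}[e^{\mu L(X)}]$ as the moment generating function of $L$ under $\hat{P}_0$, so $\mu\mapsto \log Z(\mu)$ is convex with derivative $(\log Z)'(\mu) = \E_{\hat{Q}_\mu}[L(X)] = D(\hat{Q}_\mu\|\hat{P}_0)-D(\hat{Q}_\mu\|\hat{P}_1)$. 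Denote this derivative $\hat{\gamma}(\mu)$; for $\mu\in[0,1]$ it is exactly the threshold associated with $\hat{Q}_\mu$ by \eqref{eq:KKTgamma}. Convexity of $\log Z$ then forces $\hat{\gamma}(\cdot)$ to be nondecreasing, hence $d\lambda/d\hat{\gamma}\geq 0$.

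Differentiating the two $Z$-ratio expressions with respect to $\lambda$ yields
\begin{equation*}
\frac{\partial \theta_0}{\partial \lambda} = \frac{4Z(2\lambda)}{\alpha Z(\lambda)^2}\bigl(\hat{\gamma}(2\lambda)-\hat{\gamma}(\lambda)\bigr), \quad \frac{\partial \theta_1}{\partial \lambda} = \frac{4Z(2\lambda-1)}{\alpha Z(\lambda)^2}\bigl(\hat{\gamma}(2\lambda-1)-\hat{\gamma}(\lambda)\bigr).
\end{equation*}
The hypothesis \eqref{eq:threshcodsen} places $\lambda\in[0,1]$, so $2\lambda\geq\lambda$ and $2\lambda-1\leq\lambda$; combined with the monotonicity of $\hat{\gamma}(\cdot)$ and the nonnegativity of $d\lambda/d\hat{\gamma}$, this delivers the two desired signs after multiplying through.

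The only subtlety I anticipate is the bookkeeping around the range of $\lambda$: verifying that $\lambda\in[0,1]$ under \eqref{eq:threshcodsen} and that $2\lambda$ and $2\lambda-1$ stay in the natural domain of $Z$, so that the comparisons with $\lambda$ land on the correct side. This follows from the endpoint identifications $\lambda=0\Leftrightarrow\hat{Q}_\lambda=\hat{P}_0$ and $\lambda=1\Leftrightarrow\hat{Q}_\lambda=\hat{P}_1$ together with the positivity assumption on $\hat{P}_0$ and $\hat{P}_1$, which makes $Z$ finite and smooth on all of $\mathbb{R}$. Beyond this bookkeeping the argument is a routine exercise in exponential-family identities.
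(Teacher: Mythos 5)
Your proposal is correct, and it takes a genuinely different route from the paper's proof. The paper first establishes $d\lambda/d\hat\gamma\geq 0$ by proving a general lemma that $\min_{\E_Q[X]\geq\gamma}D(Q\|P)$ is convex in $\gamma$ and then applying the envelope theorem twice ($\partial\hat E_0/\partial\hat\gamma=\lambda^*$, so $\partial\lambda^*/\partial\hat\gamma=\partial^2\hat E_0/\partial\hat\gamma^2\geq0$); it then differentiates ${\rm Var}_{\hat P_0}[\hat Q_\lambda/\hat P_0]$ directly in $\lambda$ and shows nonnegativity by combining the log-sum inequality with Jensen's inequality, treating only $\theta_0$ explicitly and deferring $\theta_1$ to ``similar steps.'' You instead collapse both variances into partition-function ratios $Z(2\lambda)/Z(\lambda)^2$ and $Z(2\lambda-1)/Z(\lambda)^2$ (both identities check out) and derive everything from the single fact that $\log Z$ is a cumulant generating function, hence convex with derivative $\hat\gamma(\mu)=D(\hat Q_\mu\|\hat P_0)-D(\hat Q_\mu\|\hat P_1)$: this gives $d\lambda/d\hat\gamma\geq0$ and the signs of $\partial\theta_i/\partial\lambda$ via the comparisons $2\lambda\geq\lambda$ and $2\lambda-1\leq\lambda$ on $[0,1]$, with the endpoint identifications matching \eqref{eq:threshcodsen} exactly as you note. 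Your approach buys a unified, symmetric treatment of both exponents and replaces the log-sum/Jensen computation with a one-line exponential-family identity; the paper's approach buys a reusable convexity lemma and avoids introducing the $Z$-calculus. Both are complete; the only point worth making explicit in a write-up is that $\hat P_0\neq\hat P_1$ makes $\log Z$ strictly convex, so $\hat\gamma(\cdot)$ is strictly increasing and $\lambda(\hat\gamma)$ is a well-defined differentiable inverse.
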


This proposition shows that $\theta_0(\hat{P}_0,\hat{P}_1,\hat{\gamma})$  is a non-decreasing  function of $\hat{\gamma}$, i.e., as $\hat{\gamma}$ increases from $-D(\hat{P}_0\|\hat{P}_1) $ to $D(\hat{P}_1\|\hat{P}_0)$, the worst-case exponent $\underline{\hat E}_0(r_0)$ becomes more sensitive to mismatch. Conversely, $\theta_1(\hat{P}_0,\hat{P}_1,\hat{\gamma})$  is a non-increasing  function of $\hat{\gamma}$, i.e., as $\hat{\gamma}$ increases from $-D(\hat{P}_0\|\hat{P}_1) $ to $D(\hat{P}_1\|\hat{P}_0)$, the worst-case exponent $ \underline{\hat E}_1(r_1)$ becomes less sensitive (more robust) to mismatch. Moreover, when $\lambda=\frac{1}{2}$, we have 	
\begin{equation}
\hat{Q}_{\frac{1}{2}}(x)=\frac{\sqrt{\hat{P}_0(x) \hat{P}_1(x) }}{ \sum_{a \in \Xc } \sqrt{\hat{P}_0(a) \hat{P}_1(a) }    },
\end{equation}
and then $\theta_0(\hat{P}_0,\hat{P}_1,\hat{\gamma})=\theta_1(\hat{P}_0,\hat{P}_1,\hat{\gamma})$. In addition,  $\hat{Q}_{\frac{1}{2}}$ minimizes ${E}_0 + {E}_1$  yielding \cite{Veeravalli}
\begin{align}
{E}_0 + {E}_1&=D(\hat{Q}_{\frac{1}{2}}\|\hat{P}_0) + D(\hat{Q}_{\frac{1}{2}}\| \hat{P}_1)\\
&=\min_{Q \in \Pc(\Xc)} D(Q\|\hat{P}_0) +  D(Q\|\hat{P}_1)\\& = 2 B(\hat{P}_0,\hat{P}_1) 
\end{align}
where $B(\hat{P}_0,\hat{P}_1)=-\log\sum_{x\in\Xc}\sqrt{\hat{P}_0(x)\hat{P}_1(x)}$ is the Bhattacharyya distance between the mismatched distributions $\hat{P}_0$ and $\hat{P}_1$. This suggests that having equal sensitivity (or robustness) for both hypotheses minimizes the sum of the exponents.


\begin{example}
	When $\gamma=0$ the likelihood ratio test becomes the maximum-likelihood test, which is known to achieve the lowest average probability of error in the Bayes setting for equal priors. For fixed priors $\pi_0,\pi_1$, the error probability in the Bayes setting is
	$\bar\epsilon= \pi_0\epsilon_0 +\pi_1\epsilon_1$,
	resulting in the following error exponent \cite{Cover}
	\begin{equation}
	\bar E= \lim_{n\rightarrow \infty} -\frac{1}{n} \log \bar \epsilon = \min \{E_0,E_1\},
	\end{equation}
	assuming that the priors $\pi_0,\pi_1$ are independent of $n$.
	Consider $\hat{P}_0 =\text{Bern}(0.1)$ ,    $\hat{P}_1 =\text{Bern}(0.8)$. Also, assume $r_0=r_1=r$. Figure \ref{fig:worstRsen} shows the worst-case error exponent in the Bayes setting given by $\min \{\underline{\hat E}_0,\underline{\hat E}_1\}$ by solving \eqref{eq:MMlower1} and \eqref{eq:MMlower2} as well as $ \min \{\underline{\tilde{E}}_0,\underline{\tilde{E}}_1\}  $ by  the approximations in   \eqref{eq:worstapproxlrt} for the R\'enyi divergence balls of order $\alpha\in \{\frac{1}{2},1,2 \}$. Similarly, Figure \ref{fig:worstRsenfdiver} shows the worst-case error exponent in the Bayes setting for two $f$-divergences, $\chi^2$, and Hellinger divergences.  We can see that the approximation is good, especially for small radii $r$. 	
	
	Moreover, it can be seen that error exponents are very sensitive to mismatch for small $r$, i.e., the slope of the worst-case exponent goes to infinity as $r$ approaches to zero.
	
	\begin{figure}[htp]
		 \centering
		\input{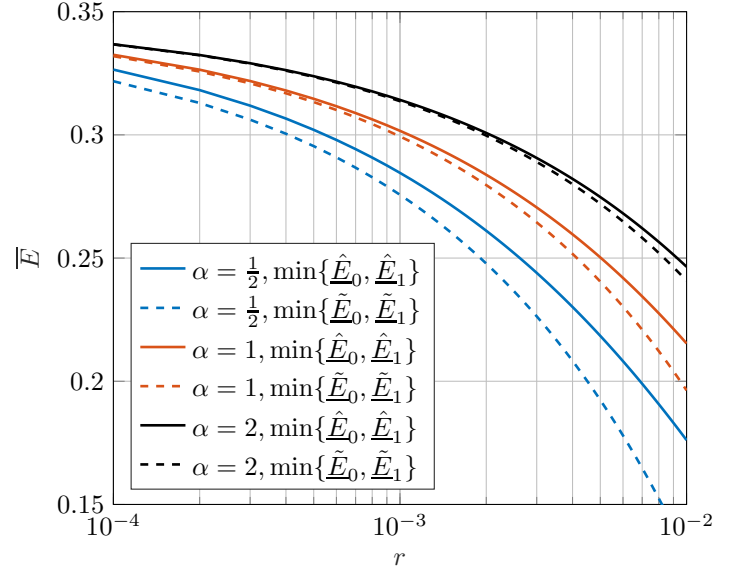} 
		\caption{Worst-case achievable Bayes error exponent for the R\'enyi divergence balls of order $\alpha\in \{\frac{1}{2},1,2 \}$. The solid lines correspond to the optimization problems in \eqref{eq:MMlower1}, \eqref{eq:MMlower2} and the dashed lines correspond to the approximated Bayes exponent using Theorem \ref{thm:lowerworst}. } 
		\label{fig:worstRsen}
	\end{figure} 

	\begin{figure}[htp]
		 \centering
	\input{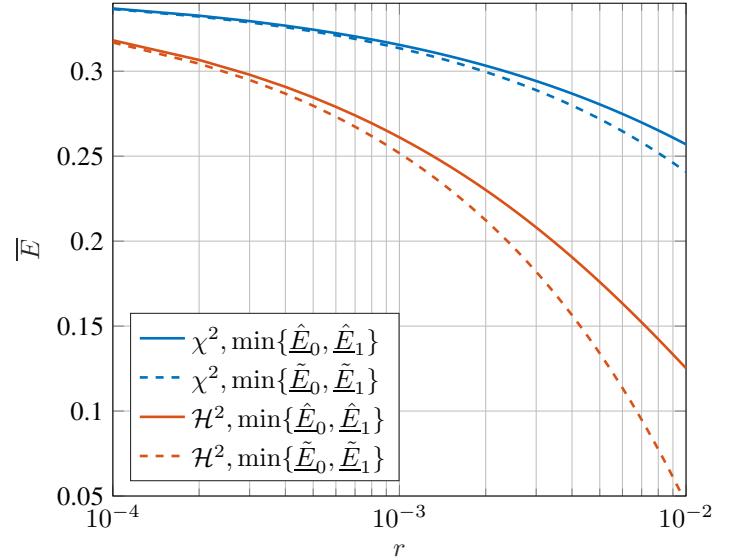} 
	\caption{Worst-case achievable Bayes error exponent for $\chi^2$ and Hellinger divergence balls. The solid lines correspond to the optimization problems in \eqref{eq:MMlower1}, \eqref{eq:MMlower2} and the dashed lines correspond to the approximated Bayes exponent using Theorem \ref{thm:lowerworst}. } 
	\label{fig:worstRsenfdiver}
\end{figure}

\end{example}


\section{Generalized Likelihood Ratio Testing Sensitivity }
\label{sec:glrt}

Next, we study the performance of Hoeffding's test under mismatch. Similarly to previous section, $P_0$ denotes the actual distribution that generated the observation and $\hat{P}_0$ indicates the mismatched distribution used in the test. Hoeffding's test using the mismatched distribution $\hat{P}_0$ with threshold $\hat{\gamma}$ is given by
\begin{equation}
\hat{\phi}(\Th)= \mathbbm{1}\{ D(\Th\|\hat{P}_0) \geq \hat \gamma     \}. 
\end{equation}

Note that since the original  test by Hoeffding does not depend on the second distribution (cf. \eqref{eq:GLRTtype}), the test is also independent of the second probability distribution in the mismatched case. Therefore, by Sanov's theorem, for every $P_0, P_1$ the error exponent $\hat{E}_1$ is equal to
\begin{equation}
\hat{E}_1=\min_{Q\in \Pc(\Xc): D(Q\|\hat{P}_0)\leq \hat{\gamma}} D(Q\|P_1). 
\end{equation}
The above optimization is a convex problem and by KKT conditions the minimizer is the tilted distribution between $\hat{P}_0$ and $P_1$ given by,
\begin{equation}\label{eq:tiltedmu}
\hat{Q}_{\mu}(x)= \frac{ \hat{P}_{0}^{\frac{\mu}{1-\mu}}(x) P_{1}^{\frac{1}{1-\mu}}(x) } {\sum_{a \in \Xc }  \hat{P}_{0}^{\frac{\mu}{1-\mu}}(a) P_{1}^{\frac{1}{1-\mu}}(a) },
\end{equation}
and where $\mu$ is the solution to
\begin{equation}\label{eq:KKTgamma} 
D(\hat{Q}_{\mu}\| \hat{P}_0)= \hat{\gamma}.
\end{equation}	
Similarly, by Sanov's theorem, the error exponent $\hat{E}_0$ is 
\begin{equation}
\hat{E}_0=\min_{Q\in \Pc(\Xc): D(Q\|\hat{P}_0)\geq \hat{\gamma}} D(Q\|P_0). 
\label{eq:e0hoeffmism}
\end{equation}
It is clear that if $D(P_0\|\hat{P}_0)\geq \hat{\gamma}$ then $\hat{E}_0=0$. Hence, we assume $D(P_0\|\hat{P}_0) < \hat{\gamma}$. Unfortunately, the solution to the above error exponent cannot be derived by convex optimization since the constraint is the complement of a convex set. In the following, we introduce an upper bound to the achievable exponents.

\begin{theorem}\label{thm:hoeffupper} 
	For fixed $\hat{P}_0, P_0 \in \Pc(\Xc)$ the error exponent $\hat{E}_0$ of Hoeffding's test with mismatch is upper bounded by 
	\begin{equation}
	\hat{E}_0 \leq (\hat{\gamma} - \sqrt{2\hat{\gamma}} \|\hat{P}_0-P_0\|_{{\rm TV}})^+, 
	\end{equation} 
	where $\|\cdot\|_{{\rm TV}}$ is the total variational distance, and $(x)^+=\max(0,x)$. 
\end{theorem}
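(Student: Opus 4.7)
\noindent The plan is to upper bound
\[
\hat{E}_0 = \min_{Q\,:\,D(Q\|\hat{P}_0)\geq\hat{\gamma}} D(Q\|P_0)
\]
by exhibiting a single explicit feasible distribution $Q^\dagger$ and bounding $D(Q^\dagger\|P_0)$ directly. Since $D(P_0\|\hat{P}_0)<\hat{\gamma}$ by the standing assumption, $P_0$ is infeasible and the minimiser lies on the boundary $D(Q^\dagger\|\hat{P}_0)=\hat{\gamma}$; I will choose $Q^\dagger$ on this boundary. By Pinsker's inequality any such $Q^\dagger$ then automatically satisfies $\|Q^\dagger-\hat{P}_0\|_1\leq\sqrt{2\hat{\gamma}}$, so the perturbation $\Delta:=Q^\dagger-\hat{P}_0$ comes with a hard $L^1$-budget of $\sqrt{2\hat{\gamma}}$, which is exactly the prefactor appearing in the theorem.

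\medskip

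\noindent The key is the identity
\[
D(Q^\dagger\|P_0) = D(Q^\dagger\|\hat{P}_0) + D(\hat{P}_0\|P_0) + \sum_x \Delta(x)\log\frac{\hat{P}_0(x)}{P_0(x)},
\]
in which the first term equals $\hat{\gamma}$ by construction. I would pick $\Delta$ so that its sign pattern matches $P_0-\hat{P}_0$ (positive where $P_0>\hat{P}_0$, negative where $P_0<\hat{P}_0$); this makes every summand $\Delta(x)\log(\hat{P}_0/P_0)$ nonpositive, and distributing the $L^1$-budget $\sqrt{2\hat{\gamma}}$ proportionally to $|\hat{P}_0-P_0|$ turns the cross sum into a Cauchy--Schwarz-type pairing of $\sqrt{2\hat{\gamma}}$ with $\|\hat{P}_0-P_0\|_{\rm TV}$. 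The resulting cross-term bound should read $\sum_x\Delta(x)\log(\hat{P}_0/P_0)\leq -D(\hat{P}_0\|P_0)-\sqrt{2\hat{\gamma}}\,\|\hat{P}_0-P_0\|_{\rm TV}$, and substituting into the identity cancels the $D(\hat{P}_0\|P_0)$ summand and yields $D(Q^\dagger\|P_0)\leq \hat{\gamma}-\sqrt{2\hat{\gamma}}\,\|\hat{P}_0-P_0\|_{\rm TV}$. The positive-part truncation in the statement absorbs the regime where the right-hand side is negative, which by Pinsker corresponds exactly to the case $D(P_0\|\hat{P}_0)\geq\hat{\gamma}$ already disposed of before the theorem.

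\medskip

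\noindent The main obstacle is twofold. First, $\log(\hat{P}_0/P_0)$ has no uniform control in terms of $\|\hat{P}_0-P_0\|_{\rm TV}$, so a crude H\"older bound $\|\Delta\|_1\cdot\|\log(\hat{P}_0/P_0)\|_\infty$ is much too weak; the clean factor $\sqrt{2\hat{\gamma}}\,\|\hat{P}_0-P_0\|_{\rm TV}$ only emerges when $\Delta$ is aligned with the signed mismatch direction so that the cross sum reduces to a pairing against $|\hat{P}_0-P_0|$. Second, the natural aligned candidate $\Delta=c(P_0-\hat{P}_0)$ with $c$ large enough to saturate $\|\Delta\|_1=\sqrt{2\hat{\gamma}}$ (equivalently the extrapolation $Q^\dagger=(1-c)\hat{P}_0+cP_0$ with $c>1$) may push $Q^\dagger$ outside the simplex. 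The construction must therefore truncate or redistribute the $L^1$-mass of $\Delta$ onto the sets $\{x:P_0(x)>\hat{P}_0(x)\}$ and $\{x:P_0(x)<\hat{P}_0(x)\}$ while preserving both the sign alignment and the $L^1$-norm $\sqrt{2\hat{\gamma}}$, and one must verify that this truncation does not degrade the cross-term estimate. Handling these two points simultaneously is, in my view, the technically delicate heart of the proof.
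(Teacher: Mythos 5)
Your overall strategy---exhibit one feasible $Q^\dagger$ on the boundary $D(Q^\dagger\|\hat{P}_0)=\hat{\gamma}$, lying on the ray from $\hat{P}_0$ through $P_0$, and invoke Pinsker---is the right one and matches the paper's. But the proposal stops exactly where the work begins, and the route you sketch for finishing does not close. The decomposition
\[
D(Q^\dagger\|P_0)=D(Q^\dagger\|\hat{P}_0)+D(\hat{P}_0\|P_0)+\sum_x\Delta(x)\log\tfrac{\hat{P}_0(x)}{P_0(x)}
\]
is correct, but once you substitute $D(Q^\dagger\|\hat{P}_0)=\hat{\gamma}$, the ``cross-term bound'' you need, $\sum_x\Delta(x)\log(\hat{P}_0(x)/P_0(x))\leq-D(\hat{P}_0\|P_0)-\sqrt{2\hat{\gamma}}\,\|\hat{P}_0-P_0\|_{\rm TV}$, is \emph{literally equivalent} to the inequality $D(Q^\dagger\|P_0)\leq\hat{\gamma}-\sqrt{2\hat{\gamma}}\,\|\hat{P}_0-P_0\|_{\rm TV}$ you are trying to prove; the identity reorganizes the claim without reducing it. You assert that this bound ``should'' follow from sign alignment plus a Cauchy--Schwarz-type pairing but never derive it, and you yourself flag the second unresolved problem: the aligned extrapolation $Q^\dagger=(1-c)\hat{P}_0+cP_0$ with $c>1$ large enough to saturate the $L^1$ budget can leave the simplex, and any truncation destroys the proportionality on which the pairing argument rests. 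By your own account these two points are the ``technically delicate heart of the proof,'' and they are left open---so there is a genuine gap at the central step.

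The paper closes this gap with a shorter device that avoids the cross term entirely. Let $Q^*$ be the point where the ray from $\hat{P}_0$ through $P_0$ meets the boundary $\{Q:D(Q\|\hat{P}_0)=\hat{\gamma}\}$ of the divergence ball (it exists inside $\Pc(\Xc)$ because $P_0$ lies strictly inside that ball), so that $P_0=\beta Q^*+(1-\beta)\hat{P}_0$ for some $\beta\in[0,1]$. Convexity of $D(Q^*\|\cdot)$ in its second argument gives in one line
\[
D(Q^*\|P_0)\leq\beta D(Q^*\|Q^*)+(1-\beta)D(Q^*\|\hat{P}_0)=(1-\beta)\hat{\gamma},
\]
and Pinsker applied to $D(Q^*\|\hat{P}_0)=\hat{\gamma}$, i.e.\ $\tfrac{2}{\beta^2}\|P_0-\hat{P}_0\|_{\rm TV}^2\leq\hat{\gamma}$, yields $\beta\geq\sqrt{2/\hat{\gamma}}\,\|P_0-\hat{P}_0\|_{\rm TV}$, whence $\hat{E}_0\leq(1-\beta)\hat{\gamma}\leq\hat{\gamma}-\sqrt{2\hat{\gamma}}\,\|P_0-\hat{P}_0\|_{\rm TV}$. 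This sidesteps both of your obstacles at once: the amount of extrapolation $c=1/\beta$ is dictated by where the ray hits the ball's boundary, so $Q^*$ stays in the simplex by construction, and no pointwise control of $\log(\hat{P}_0/P_0)$ is ever needed. If you insist on the decomposition route, the cleanest independent proof of your cross-term bound that I can see is precisely this convexity step, which makes the detour unnecessary.
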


Observe from Theorem \ref{thm:hoeffupper} that the highest achievable exponent in Hoeffding's test is equal to the achievable exponent when $\hat{P}_0=P_0$, i.e., the mismatch will always result in suboptimal error exponent tradeoff. However, the  likelihood ratio testing can still achieve an optimal error exponent tradeoff under mismatch if the mismatched distributions are tilted version of actual distributions. The universality of Hoeffding's test can explain the higher sensitivity of Hoeffding's test toward mismatch.

For the small thresholds $\hat{\gamma}$ we can also approximate \eqref{eq:e0hoeffmism} by Taylor expanding the $D(Q\|P_0), D(Q\| \hat{P}_0)$ to get  
\begin{equation}
\hat{E}_0=\min_{\substack{ \frac{1}{2} \hat{\thetav}^T \hat{\Jm} \hat{\thetav}  \geq \hat{\gamma} \\ \onev^T \hat{\thetav}=0} } \frac{1}{2} \thetav^T \Jm \thetav + o(\| \thetav\|^2)+o(\| \hat{\thetav} \|^2)
\end{equation}
where 
\begin{align}
\thetav&= \big(Q(x_1)-{P}_0(x_1),\dotsc,Q(x_{|\Xc|})-{P}_0(x_{|\Xc|})\big)^T,\\
\hat{\thetav}&= \big(Q(x_1)-\hat{P}_0(x_1),\dotsc,Q(x_{|\Xc|})-\hat{P}_0(x_{|\Xc|})\big)^T,\\
\Jm&=\diag\bigg( \frac{\alpha}{{P}_0(x_1)},\dotsc,\frac{\alpha}{{P}_0(x_{|\Xc|})}\bigg),\\
\hat{\Jm}&=\diag\bigg( \frac{\alpha}{\hat{P}_0(x_1)},\dotsc,\frac{\alpha}{\hat{P}_0(x_{|\Xc|})}\bigg).
\end{align}	
Observe that, since \eqref{eq:e0hoeffmism} is minimizing a convex function over the complement of the convex set, the optimizing distribution is always on the boundary of the set and hence $D(Q^*\|\hat{P}_0)=\hat{\gamma}$ where $Q^*$ is the optimizing distribution. Therefore, we can conclude that $\|Q^*-\hat{P}_0\| = O(\sqrt{\hat{\gamma}})$. Similarly by the assumption  $D(P_0 \|\hat{P}_0) \leq \hat{\gamma}$, we get that $\|P_0-\hat{P}_0\| = O(\sqrt{\hat{\gamma}})$, and hence by the triangle inequality $\|Q^*-{P}_0\| = O(\sqrt{\hat{\gamma}})$. Substituting these into the error term we get that the approximation error is $o(\hat{\gamma})$. Next by replacing $Q(x_{|\Xc|})=1- \sum_{x=1}^{|\Xc|-1} Q(x)$ and  dropping the equality condition we have
\begin{equation}
\hat{E}_0=\min_{ -\frac{1}{2} {\Qv}^T \hat{\Hm} {\Qv} - \hat{\hv}^T \Qv +\hat{\gamma}+1 \leq 0 } \frac{1}{2} {\Qv}^T {\Hm} {\Qv} + \hv^T \Qv -1  + o(\hat{\gamma}) 
\end{equation}
where
\begin{align}
\Hm&^{(|\Xc|-1 \times |\Xc|-1)} \nonumber \\ 
&=\begin{pmatrix}
\frac{1}{P_0(1)} + \frac{1}{P_0(|\Xc|)} &  \ldots& \frac{1}{P_0(|\Xc|)}  \\
\frac{1}{P_0(|\Xc|)} &  \ddots& \frac{1}{P_0(|\Xc|)}   \\
\vdots & \ldots  & \vdots  \\
\frac{1}{P_0(|\Xc|)}   & \ldots   & \frac{1}{P_0(|\Xc|-1)} + \frac{1}{P_0(|\Xc|)}   \\
\end{pmatrix},\nonumber \\ 
&\hv^{(|\Xc|-1 \times 1)}= \Big( -\frac{1}{P_0(|\Xc|)}, \ldots , - \frac{1}{P_0(|\Xc|)} \Big)^T,   
\end{align}
and $\hat{\Hm}, \hat{\hv}$ are  ${\Hm}, {\hv}$ with $P_0$ replaced by $\hat{P}_0$. This optimization problem is quadratic optimization with a single quadratic constraint. Using the Schur complement to express the dual problem, we obtain \cite[Appendix B]{Boyd} 
\begin{equation}
\hat{E}_0=\max_{ \substack{\lambda \geq0 \\ \Rm \succeq 0} } \nu 
\end{equation}
where 
\begin{equation}
\Rm =\frac{1}{2}  
\begin{pmatrix}
\Hm + \lambda \hat{\Hm} & \hv+ \lambda \hat{\hv}\\
  \hv^T+ \lambda \hat{\hv}^T & -2 +2\lambda( 1+\hat{\gamma}) -\nu 
\end{pmatrix}.
\end{equation}
This optimization problem is convex as the dual problem is always concave. In addition the strong duality holds for this problem when the Slater's condition is met that is there exist a $\Qv$ such that the $-\frac{1}{2} {\Qv}^T \hat{\Hm} {\Qv} - \hat{\hv}^T \Qv +\hat{\gamma}+1 < 0$ \cite{Boyd}. Therefore, one can find the mismatched type-\RNum{1} error exponent for $\hat{\gamma}$ small enough by conventional convex optimization methods.

We now focus on the worst-case error-exponent performance of the mismatched Hoeffding test when the distributions generating the observation fulfil \eqref{eq:ball2}, i.e., they are inside a divergence ball of radii $r_0, r_1$.  Figure \ref{fig:mismatchGLRT} illustrates the mismatched probability distributions and the mismatched Hoeffding test as the relative entropy ball centered at $\hat{P}_0$ divides the probability space into the two decision regions.

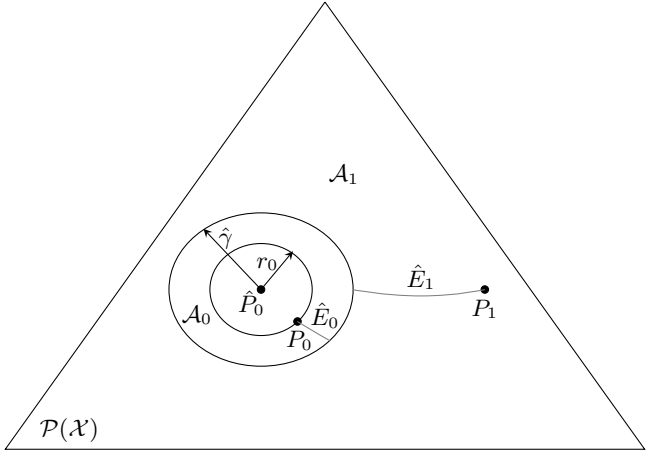
\begin{figure}[!h]
	\centering
	\begin{tikzpicture}[scale=0.85]
	\draw (5,2) -- (0,-5) -- (10,-5) --(5,2)  ;
	\node[draw,circle,inner sep=1pt,fill] at (4,-2.5) {};
	\node at (3.85,-2.7) {\small $ \hat{P}_0$};
	\node[draw,circle,inner sep=1pt,fill] at (7.5,-2.5) {};
	\node at (7.5,-2.8) {\small $P_1$};
	\draw (4,-2.5) ellipse (1*0.8 cm and 0.9*0.8cm);
	\draw (4,-2.5) ellipse (1.8*0.8 cm and 1.5*0.8cm);
	\node at (1.,-4.7) {\small $\Pc({\Xc})$};
	\draw [->,>=stealth] (4,-2.5) -- (4.5,-1.9);
	\node at (4.1,-2.1) {\small $r_0$};
	\draw [->,>=stealth] (4,-2.5) -- (3.1,-1.55);
	\node at (3.45,-1.7) {\small $\hat{\gamma}$};
	\node[draw,circle,inner sep=1pt,fill] at (4.57,-3) {};
	\draw [gray]plot [smooth, tension=1] coordinates{(4.57,-3)  (4.9,-3.2) (5.08,-3.3)} ;
	\draw [gray]plot [smooth, tension=1] coordinates{ (5.43,-2.5)  (6.5,-2.6)  (7.5,-2.5)} ;
	\node at (5,-2.9) {\small $\hat{E}_0$};
	\node at (6.5,-2.3) {\small $\hat{E}_1$};
	\node at (4.6,-3.3) {\small $P_0$};
	\node at (3.,-2.9) {\small $\Ac_0$};
	\node at (5.3,-0.7) {\small $\Ac_1$};
	\end{tikzpicture}
	\caption{Mismatched Hoeffding's test with real distribution $P_0$ from a divergence ball $\Bc(\hat{P}_0,r_0)$.} 
	\label{fig:mismatchGLRT}
\end{figure}

	For every $P_0$ the achievable error exponent $\hat{E}_0$  does not depend on $P_1$ therefore, for every  $r_0$, the least favorable exponents  $\underline{\hat{E}}_0(r_0),$ defined in \eqref{eq:e0hoeffmism} can be written as
\begin{align}\label{eq:worstHoef}
\underline{\hat{E}}_0(r_0)&=\min_{\substack{P_0:P_0 \in \Bc(\hat P_0,r_0) \\ Q\in \Pc(\Xc):  D(Q\|\hat{P}_0) \geq \hat{\gamma}}} D(Q\|P_0),
\end{align}
where $\Bc(\hat P_0,r_0)$ is the divergence ball centered at $\hat P_0$ with radius $r_0$ \eqref{eq:ball2} with divergence measure parametrized by $\alpha$. 
As opposed to the mismatched likelihood ratio test where the worst achievable exponent could be found by solving a convex problem, here, the optimization problem in \eqref{eq:worstHoef} is non-convex and  in principle difficult to solve. However, as the next theorem states, we are still able to perform a Taylor series expansion to find the behavior of the worst exponent $\underline{\hat{E}}_0(r_0)$  when the radius of  the relative entropy ball  $r_0$ is small. 

\begin{theorem}\label{thm:lowerworstHoef}
	Consider a mismatched generalized likelihood ratio test with real and test distributions $P_0,P_1$ and $\hat P_0$, respectively. For every $ r_0 \geq 0$, we have that the error exponent $\underline{\hat{E}}_0(r_0)$ can be approximated as
	\begin{equation}\label{eq:worstapproxHoef}
	\underline{\hat{E}}_0(r_0) = E_0 -  \sqrt{r_0\cdot \theta_0(\hat{P}_0,\hat{\gamma})}+o(\sqrt{r_0}),  
	\end{equation}
	where	
	\begin{equation}\label{eq:hoefsen}
	\theta_0(\hat{P}_0,\hat{\gamma}) = \max_{\substack{ Q: D(Q\|\hat{P}_0) = \hat{\gamma}}}\frac{2}{\alpha}  {\rm Var}_{\hat{P}_0} \bigg(\frac{Q(X)}{\hat{P}_0(X)}  \bigg),
	\end{equation}	
	is the sensitivity of Hoeffding's test with mismatch.
\end{theorem}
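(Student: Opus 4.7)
The proof follows the template of Theorem \ref{thm:lowerworst}, but now the outer feasible set is non-convex. The plan is to swap the order of the two minimizations, solve the inner (convex in $P_0$) subproblem by a second-order perturbation analysis around $\hat{P}_0$, and then handle the outer minimization by localizing to the boundary $D(Q\|\hat{P}_0)=\hat{\gamma}$.

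First I would swap the two minima and write
\[
\underline{\hat{E}}_0(r_0)=\min_{Q:\,D(Q\|\hat{P}_0)\geq\hat{\gamma}}\ \min_{P_0\in\Bc(\hat{P}_0,r_0)} D(Q\|P_0).
\]
For fixed $Q$, I parametrize $P_0(x)=\hat{P}_0(x)(1+\delta(x))$ with $\sum_x\hat{P}_0(x)\delta(x)=0$, and use the identity $D(Q\|P_0)=D(Q\|\hat{P}_0)-\sum_x Q(x)\log(1+\delta(x))$. Taylor-expanding the logarithm and invoking the normalization $\sum_x\hat{P}_0(x)\delta(x)=0$ yields
\[
D(Q\|P_0)=D(Q\|\hat{P}_0)-\sum_x\bigl(Q(x)-\hat{P}_0(x)\bigr)\delta(x)+O(\|\delta\|^2).
\]
A parallel Taylor expansion, carried out separately for the R\'enyi divergence $D_\alpha$ and for any twice-differentiable $f$-divergence with $f''(1)=\alpha$, shows that the constraint $d(\hat{P}_0,P_0)\leq r_0$ reduces to the ellipsoid $\tfrac{\alpha}{2}\sum_x\hat{P}_0(x)\delta(x)^2\leq r_0+o(r_0)$. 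Since $\sum_x(Q(x)-\hat{P}_0(x))\delta(x)$ is linear in $\delta$, Cauchy--Schwarz gives the closed-form inner minimum
\[
\min_{P_0\in\Bc(\hat{P}_0,r_0)} D(Q\|P_0)=D(Q\|\hat{P}_0)-\sqrt{r_0\cdot\tfrac{2}{\alpha}{\rm Var}_{\hat{P}_0}\!\Bigl(\tfrac{Q(X)}{\hat{P}_0(X)}\Bigr)}+o(\sqrt{r_0}),
\]
where the identity $\sum_x(Q(x)-\hat{P}_0(x))^2/\hat{P}_0(x)={\rm Var}_{\hat{P}_0}(Q(X)/\hat{P}_0(X))$ is used.

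The final step is to minimize the preceding expression over $Q$ with $D(Q\|\hat{P}_0)\geq\hat{\gamma}$. I will argue that for sufficiently small $r_0$ the outer optimizer lies on the boundary $D(Q\|\hat{P}_0)=\hat{\gamma}$: any $Q$ with $D(Q\|\hat{P}_0)=\hat{\gamma}+\eta$ and $\eta\gg\sqrt{r_0}$ increases the leading term by $\eta$ while the $O(\sqrt{r_0})$ variance correction is bounded on a neighborhood of the boundary. Consequently the minimum is attained at the boundary, and swapping min over $Q$ with the unary minus yields
\[
\underline{\hat{E}}_0(r_0)=\hat{\gamma}-\max_{Q:\,D(Q\|\hat{P}_0)=\hat{\gamma}}\sqrt{r_0\cdot\tfrac{2}{\alpha}{\rm Var}_{\hat{P}_0}\!\Bigl(\tfrac{Q(X)}{\hat{P}_0(X)}\Bigr)}+o(\sqrt{r_0}),
\]
which matches \eqref{eq:worstapproxHoef} upon noting that $E_0=\hat{\gamma}$ in the matched Hoeffding test. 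The main obstacle is controlling the approximation rigorously in the presence of the non-convex outer constraint: one must verify that on the active ellipsoid $\|\delta\|=O(\sqrt{r_0})$, so that the quadratic remainders are genuinely $o(\sqrt{r_0})$, and that the outer optimizer $Q^\star$ stays within $O(\sqrt{r_0})$ of the boundary, so that the local expansions remain valid uniformly as $r_0\to 0$.
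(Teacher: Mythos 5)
Your proposal is correct and follows essentially the same route as the paper: exchange the order of the two minimizations, Taylor-expand $D(Q\|P_0)$ in $P_0$ around $\hat P_0$ and the divergence ball into the Fisher-information ellipsoid, solve the inner linear-over-ellipsoid problem by Cauchy--Schwarz to get the $\sqrt{r_0\cdot\tfrac{2}{\alpha}{\rm Var}_{\hat P_0}(Q/\hat P_0)}$ correction, and then restrict the outer minimization to the boundary $D(Q\|\hat P_0)=\hat\gamma$. The one place where the paper's argument is cleaner is the boundary localization: since for each fixed $P_0$ the original problem minimizes a convex function of $Q$ over the complement of a convex set, the optimizer lies \emph{exactly} on the boundary before any expansion, whereas your perturbative argument only forces it to within $O(\sqrt{r_0})$ of the boundary, which by itself leaves a residual of the same order as the correction term you are computing.
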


Observe that whole the expressions \eqref{eq:sensitivity} and \eqref{eq:hoefsen} are structurally similar,  \eqref{eq:hoefsen}  has an additional optimization step.  The following result compares the sensitivities of the worst-case mismatched likelihood ratio and Hoeffding's test sensitivities.

\begin{proposition}\label{cor:comparing}
	Let $\hat{P}_0$ be fixed and $\hat{P}_1$ be some distribution used in the likelihood ratio test. Also, let Hoeffding's test sensitivity denoted by $\theta_0^{\rm h}(\hat{P}_0,\hat\gamma^{\rm h})$, and $\theta_0^{\rm lrt}(\hat{P}_0,\hat{P}_1,\hat\gamma^{\rm lrt})$ be the sensitivity of likelihood ratio test when the threshold $\hat\gamma^{\rm lrt}$ is chosen such that the type-\RNum{1} error exponent is equal to $\hat \gamma^{\rm h}$. Then, we have
	\begin{equation}\label{eq:lrttohoef}
	1 \leq \frac{\theta_0^{\rm h}(\hat P_0,\hat\gamma^{\rm h})}{\theta_0^{\rm lrt}(\hat P_0,\hat P_1,\hat\gamma^{\rm lrt})}\leq \sqrt{\frac{(h-1)^2} {h\log h +1-h} }.
	\end{equation}
	where 
	\begin{equation}
	h=\frac{\min\big\{1, \hat{\gamma}^{\rm h} +\min_{x\in \Xc} \hat{P}_0(x)\big\}  }{\min_{x\in \Xc} \hat{P}_0(x) }.
	\end{equation}
	Also, we have the weaker inequality
	\begin{equation}\label{eq:cor}
	\frac{\theta_0^{\rm h}(\hat P_0,\hat\gamma^{\rm h})}{\theta_0^{\rm lrt}(\hat P_0,\hat P_1,\hat\gamma^{\rm lrt})}\leq \sqrt{\frac{4}{\min_{x\in \Xc} \hat{P}_0(x)}}.
	\end{equation}
\end{proposition}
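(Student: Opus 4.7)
The lower bound of the ratio is immediate from feasibility. By the choice of $\hat{\gamma}^{\rm lrt}$ and Theorem~\ref{thm:mismatchLRT}, the matched LRT type-\RNum{1} exponent equals $D(\hat{Q}_\lambda\|\hat{P}_0)=\hat{\gamma}^{\rm h}$, so the tilted distribution $\hat{Q}_\lambda$ lies in the feasible set $\{Q:D(Q\|\hat{P}_0)=\hat{\gamma}^{\rm h}\}$ of the Hoeffding maximisation \eqref{eq:hoefsen}. Since $\theta_0^{\rm lrt}$ is the value of the variance functional at this particular $\hat{Q}_\lambda$ while $\theta_0^{\rm h}$ is its maximum over the set (both carrying the common prefactor $2/\alpha$), the ratio is at least one.

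For the upper bound I would assemble three ingredients. First, the scalar inequality $(r-1)^2\le k(h)\,(r\log r-r+1)$ for $0\le r\le h$, with $k(h)=(h-1)^2/(h\log h-h+1)$. This reduces to showing that $k(r)$, continuously extended by $k(1)=2$, is non-decreasing on $(0,\infty)$; differentiating and factoring $(r-1)$ leaves the sign of $k'(r)$ determined by that of $(r+1)\log r-2(r-1)$, which is elementary to verify has the same sign as $r-1$. Second, a uniform bound $Q(x)/\hat{P}_0(x)\le h$ on the likelihood ratio for all feasible $Q$. The estimate $Q(x)\le1$ gives $r\le 1/\min_x\hat{P}_0(x)$, while isolating one summand of the KL divergence and using $u\log u-u+1\ge0$ on the remaining terms gives $Q(x)\le\hat{\gamma}^{\rm h}+\min_x\hat{P}_0(x)$; taking the minimum of the two and dividing by $\min_x\hat{P}_0(x)$ produces $h$. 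Third, the standard inequality $\chi^2(Q,P)\ge D(Q\|P)$ (from $D\le\log(1+\chi^2)\le\chi^2$), applied to the tilted distribution to obtain $\theta_0^{\rm lrt}\ge(2/\alpha)\hat{\gamma}^{\rm h}$.

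Combining the first two ingredients pointwise at $r(x)=Q(x)/\hat{P}_0(x)$ and taking expectation under $\hat{P}_0$ (using that $\mathbb{E}_{\hat{P}_0}[r-1]=0$) gives $\chi^2(Q,\hat{P}_0)\le k(h)\,\hat{\gamma}^{\rm h}$ for every feasible $Q$, hence $\theta_0^{\rm h}\le(2/\alpha)\,k(h)\,\hat{\gamma}^{\rm h}$; dividing by the lower bound on $\theta_0^{\rm lrt}$ yields a ratio bound. To recover the square-root form in the statement I would reapply the scalar inequality in its square-rooted version $|r-1|\le\sqrt{k(h)}\sqrt{r\log r-r+1}$ and invoke Cauchy-Schwarz on $\mathbb{E}_{\hat{P}_0}[|r-1|\cdot|r-1|]$, producing $\chi^2\le\sqrt{k(h)}\sqrt{\chi^2\cdot D}$, then combine with $\chi^2(\hat{Q}_\lambda)\ge D(\hat{Q}_\lambda)$ to close the argument. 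Finally the weaker inequality \eqref{eq:cor} follows by specialising to the trivial range bound $h=1/\min_x\hat{P}_0(x)$ and using the elementary estimate $h\log h-h+1\ge(h-1)^2/(4h)$, which reduces $k(h)$ to $4h=4/\min_x\hat{P}_0(x)$ inside the square root.

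The main obstacle I anticipate is obtaining the square-root factor exactly: the Cauchy-Schwarz route above gives $\sqrt{\chi^2}\le\sqrt{k(h)\,D}$ which is equivalent to $\chi^2\le k(h)\,D$, and hence only recovers the looser bound $\theta_0^{\rm h}/\theta_0^{\rm lrt}\le k(h)$. Closing the gap to exactly $\sqrt{k(h)}$ will likely require either a sharper lower bound on $\chi^2(\hat{Q}_\lambda,\hat{P}_0)$ exploiting the exponential-family form $\hat{Q}_\lambda\propto\hat{P}_0^{1-\lambda}\hat{P}_1^{\lambda}$, or a joint variational argument that compares the Hoeffding and LRT variance maximisers directly rather than bounding each separately. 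The range bound on $Q(x)/\hat{P}_0(x)$ is also slightly delicate and merits a careful coordinate-wise decomposition of $D(Q\|\hat{P}_0)$ to make the claim $Q(x)\le\hat{\gamma}^{\rm h}+\min_x\hat{P}_0(x)$ rigorous.
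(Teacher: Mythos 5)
Your proposal is correct and follows essentially the same route as the paper: the lower bound comes from the feasibility of the LRT tilted distribution $\hat{Q}_\lambda$ in the Hoeffding maximisation; the denominator is lower-bounded via $\chi^2(Q\|\hat{P}_0)\geq D(Q\|\hat{P}_0)$; the numerator is upper-bounded via the $\chi^2$-versus-KL comparison $\chi^2(Q\|\hat{P}_0)\leq \kappa(\beta)^{-1}D(Q\|\hat{P}_0)$ with $\kappa(x)=(x\log x+1-x)/(x-1)^2$ shown non-increasing (your monotonicity of $k=1/\kappa$ is the same computation), together with the same range bound $\max_x Q(x)/\hat{P}_0(x)\leq h$ obtained by isolating one summand of the divergence; and the weaker bound \eqref{eq:cor} is obtained in the paper by Pinsker's inequality, which is interchangeable with your elementary estimate $h\log h-h+1\geq (h-1)^2/(4h)$. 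The obstacle you flag at the end is genuine and is present in the paper's own proof as well: with $\theta$ defined as $\tfrac{2}{\alpha}$ times a variance (equivalently a $\chi^2$ distance), the argument delivers $\theta_0^{\rm h}/\theta_0^{\rm lrt}\leq k(h)$, whereas the stated bound $\sqrt{k(h)}$ would correspond to a ratio of square-rooted variances; the paper's proof silently passes to square roots inside the max/min, so your caveat identifies a real inconsistency rather than a deficiency of your argument.
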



\section{Sequential Probability Ratio Testing Sensitivity }
\label{sec:sequentialHT}

The sequential probability ratio test $\hat{\Phi}=(\hat{\phi},\hta)$ with thresholds $\hgo, \hgt$ and mismatched distributions $\hpo, \hpt$ is given by
\begin{align}
\hta=\inf \{n\geq1: \hat{S}_n& \geq  \hgo \  \mathrm{or} \  \hat{S}_n \leq -\hgt\},  
\end{align} 
where  
\begin{align}
\hat{S}_n=\sum_{i=1}^n \log \frac{\hat{P}_0(x_i)}{\hat{P}_1(x_i)},
\end{align} 
and
\begin{align}
\hat{\phi}= 
\begin{cases}
0 &  \text{if } \hat{S}_{\hta} \geq \hgo  \\
1 &  \text{if } \hat{S}_{\hta} \leq  - \hgt.\\
\end{cases}
\end{align}   

Similarly to the previous sections, in order to study the sensitivity of the mismatched sequential ratio test, we first study the highest achievable error exponents, i.e,
\begin{align}\label{eq:tradeseqMM1}
 \hEt(&\hEo) \triangleq \sup \Big \{\hEt \in \mathbb{R}_{+}: \exists \hgo,\hgt , \exists \ n  \in \ZZ_{+}  \text{ s.t.}   \mathbb{E}_{P_0} [\hta]    \leq n, \nonumber \\ 
  &\mathbb{E}_{P_1} [\hta]  \leq n, \   \epsilon_0(\hat{\Phi}) \leq 2^{- n \hEo}    \quad \text{and} \quad  \epsilon_1(\hat{\Phi}) \leq 2^{- n\hEt } \Big \} ,
\end{align}
which is analogous to the definition in \eqref{eq:tradeseq1}. Similarly to \eqref{eq:tradeseq2}, we can also define the following tradeoff
\begin{align}\label{eq:tradeseqMM2}
 \hEt&(\hEo)\notag\\ & \triangleq \sup \Big \{\hEt \in \mathbb{R}_{+}: \exists \hgo, \hgt , \exists n_0, n_1  \in \ZZ_{+}, \text{s.t.}  \mathbb{E}_{P_0} [\hta]   \leq n_0, \nonumber \\  
 & \mathbb{E}_{P_1}[\hta]  \leq n_1, \   \epsilon_0(\hat{\Phi}) \leq 2^{- n_0 \hEo}  ~ \text{and} ~  \epsilon_1(\hat{\Phi}) \leq 2^{- n_1 \hEt} \Big \}.
\end{align}

The next theorem provides the error exponents $\hEo,\hEt$ and the average stopping time $\mathbb{E}_{P_0}[\hta], \mathbb{E}_{P_1}[\hta]$ of the mismatched sequential probability ratio test as a function of thresholds $\hgo, \hgt$.

\begin{theorem}\label{thm:seqMM}
	For fixed probability measures $\hat{P}_0, \hat{P}_1$, let $P_0$ and $P_1$ be such that
\begin{equation}\label{eq:posdrift}
0<D(P_0\|\hat{P}_1)-D(P_0\|\hat{P}_0),  0<D(P_1\|\hat{P}_0)-D(P_1\|\hat{P}_1).
\end{equation}
Then, as  $\hgo,\hgt \rightarrow \infty$, the pairwise probabilities  of error $\heo,\het$ are given by
\begin{align}
\heo &= \hat c_0 \cdot e^{-\frac{D(P_0\|P_1)}{D(P_0\|\hat{P}_1)-D(P_0\|\hat{P}_0)}\hgt  } \label{eq:MMexp1}, \\  
\het &= \hat c_1 \cdot e^{-\frac{D(P_1\|P_0)}{D(P_1\|\hat{P}_0)-D(P_1\|\hat{P}_1)}\hgo  } \label{eq:MMexp2},
\end{align}
where $\hat c_0, \hat c_1$ are positive constants. Furthermore, the expected stopping times are given by
\begin{align} \label{eq:SPRTthresh}
\mathbb{E}_{P_0}[\hta]&=\frac{ \hgo}{D(P_0\|\hat{P}_1)-D(P_0\|\hat{P}_0)}(1+o(1)),\\ 
\mathbb{E}_{P_1}[\hta]&=\frac{ \hgt}{D(P_1\|\hat{P}_0)-D(P_1\|\hat{P}_1)} (1+o(1)).
\end{align}
\end{theorem}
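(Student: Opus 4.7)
The plan is to split the argument into two stages: first the expected stopping times, then the error probabilities. Both stages rely on applying the strong law of large numbers (SLLN) and Wald's identity to the mismatched random walk $\hat{S}_n = \sum_{i=1}^n \log(\hat{P}_0(X_i)/\hat{P}_1(X_i))$ together with the true log-likelihood ratio $T_n = \sum_{i=1}^n \log(P_0(X_i)/P_1(X_i))$. The drift condition \eqref{eq:posdrift} guarantees that $\hat{S}_n$ has positive mean $\mu_0 = D(P_0\|\hat{P}_1) - D(P_0\|\hat{P}_0)$ under $P_0$ and negative mean $-\mu_1 = -(D(P_1\|\hat{P}_0) - D(P_1\|\hat{P}_1))$ under $P_1$, which is what makes the SPRT asymptotics tractable under both hypotheses.

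For the stopping times, I would apply SLLN to $\hat{S}_n$ under each $P_i$: under $P_0$, $\hat{S}_n/n \to \mu_0 > 0$ a.s., so $\hat{\tau} \to \infty$ a.s. and, as $\hat{\gamma}_0,\hat{\gamma}_1 \to \infty$, the positive drift forces $P_0(\hat{S}_{\hat{\tau}} \geq \hat{\gamma}_0) \to 1$. Since the increments are bounded (finite alphabet with $\hat{P}_0,\hat{P}_1>0$), the overshoot at the upper boundary is $O(1)$ and $\mathbb{E}_{P_0}[\hat{S}_{\hat{\tau}}] = \hat{\gamma}_0(1+o(1))$. Wald's identity then delivers $\mathbb{E}_{P_0}[\hat{\tau}] = \hat{\gamma}_0/\mu_0 \cdot (1+o(1))$, and the analogue under $P_1$ is obtained by the symmetric argument.

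For the error probabilities, the plan is a change-of-measure argument. Writing
\begin{equation*}
\hat{\epsilon}_0 = \mathbb{E}_{P_1}[e^{T_{\hat{\tau}}} \mathbbm{1}\{\phi = 1\}]
\end{equation*}
via the Radon--Nikodym derivative $dP_0/dP_1 = e^{T_n}$ on $\mathcal{F}_{\hat{\tau}}$, I would exploit the fact that $\{\phi = 1\}$ has $P_1$-probability $1-o(1)$ and invoke a joint SLLN on the pair $(\hat{S}_n, T_n)$. On typical paths the two walks are tied together through the ratio of their drifts $D(P_0\|P_1)/\mu_0$, so that substituting $\hat{S}_{\hat{\tau}} = -\hat{\gamma}_1(1+o(1))$ on $\{\phi=1\}$ yields $T_{\hat{\tau}} \sim -\hat{\gamma}_1 D(P_0\|P_1)/\mu_0$. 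Taking the expectation and absorbing the renewal-theoretic contribution of the boundary overshoot into the prefactor gives \eqref{eq:MMexp1}; formula \eqref{eq:MMexp2} then follows by the symmetric argument with the roles of $P_0,\hat{P}_0$ and $P_1,\hat{P}_1$ swapped.

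The hardest step will be transferring the almost-sure concentration of $T_{\hat{\tau}}$ at the stopping time into the convergence of $\mathbb{E}_{P_1}[e^{T_{\hat{\tau}}}\mathbbm{1}\{\phi=1\}]$. Since $e^{T_n}$ is itself a $P_1$-martingale with $\mathbb{E}_{P_1}[e^{T_{\hat{\tau}}}] = 1$, the restriction to the indicator $\mathbbm{1}\{\phi=1\}$ is delicate: one must verify that the complementary contribution on $\{\phi=0\}$ does not spoil the leading-order rate, which requires either uniform integrability of the tilted process or a CLT-level refinement of the renewal-stopped walk. Controlling the overshoot at the boundaries using boundedness of the log-likelihood increments is what ultimately pins down the positive constants $\hat{c}_0,\hat{c}_1$.
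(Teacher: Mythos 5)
Your overall architecture is the same as the paper's: a change of measure through the true log-likelihood ratio to turn each error probability into an expectation under the measure for which the stopping event is typical, a law-of-large-numbers evaluation of the exponent along typical paths, renewal-theoretic control of the boundary overshoot (using bounded increments) to pin down the constants $\hat c_0,\hat c_1$, and a first-moment argument for the expected stopping times. The only structural difference is in the stopping-time part: you invoke Wald's identity, whereas the paper proves convergence in probability of $\hta_0/\hgo$ and upgrades it to convergence in expectation via uniform integrability, obtained from a Chernoff bound $\PP_0[\hta_0\geq n]\leq e^{d\hgo}e^{-(n-1)E(0)}$ (its Lemma~\ref{lem:finite}). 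Note that this same tail bound is what you would need anyway to license Wald's identity ($\E_{P_0}[\hta]<\infty$) and to show that the exit-at-the-lower-boundary contribution to $\E_{P_0}[\hat S_{\hta}]$ is negligible, so the two routes are essentially equivalent in cost. You also correctly isolate the delicate step --- passing from distributional statements about $(\hat S_{\hta},T_{\hta})$ to convergence of the exponentially weighted expectation --- which the paper handles via Woodroofe's overshoot limit theorem together with Slutsky's theorem.

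The one concrete misstep is in the exponent computation for $\heo$. Having written $\heo=\E_{P_1}\big[e^{T_{\hta}}\mathds{1}\{\phi=1\}\big]$, every limit used inside that expectation must be taken under $P_1$. Under $P_1$ the drifts are $\E_{P_1}[\log(P_0/P_1)]=-D(P_1\|P_0)$ and $\E_{P_1}[\log(\hat P_0/\hat P_1)]=-\big(D(P_1\|\hat P_0)-D(P_1\|\hat P_1)\big)$, so on $P_1$-typical paths the ratio $T_{\hta}/\hat S_{\hta}$ converges to $D(P_1\|P_0)/\big(D(P_1\|\hat P_0)-D(P_1\|\hat P_1)\big)$, not to $D(P_0\|P_1)/\big(D(P_0\|\hat P_1)-D(P_0\|\hat P_0)\big)$ as you assert; the latter is the $P_0$-drift ratio and cannot be imported into a $P_1$-expectation by the SLLN. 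The paper avoids mixing measures by instead writing the other error probability as a $P_0$-expectation, $\het=\E_{P_0}\big[e^{-S_{\hta}}\mathds{1}\{\hat S_{\hta}\geq\hgo\}\big]$, and using $P_0$-drifts throughout. You should redo this step with the measure and the drift ratio matched (or switch to the $P_0$-expectation form), and then check carefully which of the two candidate exponents the computation actually produces --- this is precisely the point where an index slip flips the answer, and it propagates into the exponent tradeoff of Corollary~\ref{cor:exp}.
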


The next result states that if the average drift of the likelihood ratio changes sign under mismatch, the probability of error under that hypothesis tends to one.

	\begin{theorem} \label{thm:negdrift}
	For fixed $\hat{P}_0, \hat{P}_1$, let $P_0$ be such that 
	\begin{equation}\label{eq:negdrift}
	D(P_0\|\hat{P}_1)-D(P_0\|\hat{P}_0)<0.
	\end{equation}
	Then, as  thresholds $\hgo, \hgt$  approach infinity, $\hat{\epsilon}_0 \rightarrow 1$. Similarly, letting $P_1$ such that 
	\begin{equation}
	D(P_1\|\hat{P}_0)-D(P_1\|\hat{P}_1)<0,
	\end{equation}
	 $\hat{\epsilon}_1  \rightarrow 1$ as  thresholds $\hgo, \hgt$  approach infinity. 
\end{theorem}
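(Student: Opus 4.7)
The plan is to apply the strong law of large numbers to the increments of the mismatched accumulated log-likelihood ratio $\hat{S}_n = \sum_{i=1}^n \log(\hat{P}_0(x_i)/\hat{P}_1(x_i))$. First I would compute the per-sample drift under $P_0$: writing $\mathbb{E}_{P_0}[\log(\hat{P}_0(X)/\hat{P}_1(X))] = \mathbb{E}_{P_0}[\log \hat{P}_0(X)] - \mathbb{E}_{P_0}[\log \hat{P}_1(X)]$ and adding and subtracting $\mathbb{E}_{P_0}[\log P_0(X)]$ gives $D(P_0\|\hat{P}_1) - D(P_0\|\hat{P}_0)$, which is strictly negative by hypothesis \eqref{eq:negdrift}. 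The i.i.d.\ SLLN then yields $\hat{S}_n/n \to D(P_0\|\hat{P}_1) - D(P_0\|\hat{P}_0) < 0$ almost surely, so $\hat{S}_n \to -\infty$ almost surely under $P_0$.

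From this divergence I would extract two facts. First, the pathwise supremum $M := \sup_{n \geq 0} \hat{S}_n$ is almost surely finite, so continuity of probability from above gives $P_0(M \geq \hat{\gamma}_0) \to 0$ as $\hat{\gamma}_0 \to \infty$. Second, since $\hat{S}_n \to -\infty$, the walk crosses any lower level $-\hat{\gamma}_1$ in finite time almost surely, so $\hat{\tau} < \infty$ a.s.\ under $P_0$. The two terminating events $\{\hat{S}_{\hat{\tau}} \geq \hat{\gamma}_0\}$ and $\{\hat{S}_{\hat{\tau}} \leq -\hat{\gamma}_1\}$ therefore partition the sample space modulo a $P_0$-null set, and the correct-decision probability satisfies $P_0(\hat{\phi} = 0) = P_0(\hat{S}_{\hat{\tau}} \geq \hat{\gamma}_0) \leq P_0(M \geq \hat{\gamma}_0)$. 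Combining gives $\hat{\epsilon}_0 = 1 - P_0(\hat{\phi}=0) \geq 1 - P_0(M \geq \hat{\gamma}_0) \to 1$ as $\hat{\gamma}_0, \hat{\gamma}_1 \to \infty$. The argument for $\hat{\epsilon}_1 \to 1$ under $P_1$ is identical up to a swap of hypotheses, since the symmetric hypothesis translates to $\mathbb{E}_{P_1}[\log(\hat{P}_0(X)/\hat{P}_1(X))] = D(P_1\|\hat{P}_1) - D(P_1\|\hat{P}_0) > 0$, making $\hat{S}_n \to +\infty$ under $P_1$ and forcing an almost-sure incorrect crossing of $+\hat{\gamma}_0$.

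The main obstacle is verifying that the bound $P_0(M \geq \hat{\gamma}_0) \to 0$ remains valid as both thresholds grow simultaneously. Since $M$ is a pathwise functional of the underlying random walk alone, it does not depend on $\hat{\gamma}_1$, so the bound is insensitive to how $\hat{\gamma}_1$ grows and the joint limit is unproblematic. A secondary bookkeeping step is the identification of $\{\hat{\phi} = 0\}$ with $\{\hat{S}_{\hat{\tau}} \geq \hat{\gamma}_0\}$ on the almost-sure event $\{\hat{\tau} < \infty\}$, which is immediate from the definitions of $\hat{\tau}$ and $\hat{\phi}$ adapted to the mismatched case.
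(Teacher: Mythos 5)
Your proof is correct, but it follows a genuinely different route from the paper's. The paper bounds the correct-decision probability by a union bound over time, $\PP_0[\hat{\phi}=0]\leq\sum_{n\geq1}\min\{\PP_0[\hta_0\leq n],\PP_0[\hta_1>n]\}$, and then controls each term with a Chernoff bound (as in its Lemma on the finiteness of $\hta_0$), which yields the explicit quantitative estimate $\heo\geq 1-\max\{c_0e^{-a\hgo},c_1e^{-a\hgt}\}$ with $a>0$. You instead argue softly: the SLLN gives $\hat S_n\to-\infty$ almost surely under the negative-drift condition, so $M=\sup_n\hat S_n$ is almost surely finite, $\hta<\infty$ almost surely, and $\{\hat\phi=0\}\subseteq\{M\geq\hgo\}$ with $\PP_0[M\geq\hgo]\to0$ by continuity from above. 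Your identification of the drift $\mathbb{E}_{P_0}[\log(\hat P_0(X)/\hat P_1(X))]=D(P_0\|\hat P_1)-D(P_0\|\hat P_0)$ is right, the inclusion of the correct-decision event in $\{M\geq\hgo\}$ is valid since $\hat S_{\hta}\geq\hgo$ forces $M\geq\hgo$, and your observation that the bound is independent of $\hgt$ disposes of the joint limit. What each approach buys: yours is shorter and needs only finiteness of the drift (automatic here since the alphabet is finite and all masses are positive), but gives no rate; the paper's Chernoff argument is heavier but produces an exponential rate of convergence of $\heo$ to $1$ in the thresholds, which is the kind of quantitative control that its other sequential results rely on. Either proof establishes the theorem as stated.
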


\begin{corollary}\label{cor:exp}
Under the conditions of Theorem \ref{thm:seqMM}, the achievable error exponent tradeoff according to \eqref{eq:tradeseqMM1} is given by
\begin{align}
\hEo &= D(P_0\|P_1) \frac{  D(P_1\|\hat{P}_0)-D(P_1\|\hat{P}_1) }{D(P_0\|\hat{P}_1)-D(P_0\|\hat{P}_0) }  , \\
\hEt&= D(P_1\|P_0)\frac{D(P_0\|\hat{P}_1)-D(P_0\|\hat{P}_0) }{D(P_1\|\hat{P}_0)-D(P_1\|\hat{P}_1)} ,
\end{align}	
where to achieve these exponents thresholds $\hgo, \hgt$ should be chosen as
\begin{align}\label{eq:thresh1}
\hgo&=n\big ( D(P_0\|\hat{P}_1)-D(P_0\|\hat{P}_0) +o(1)\big), \\
\hgt&= n\big(D(P_1\|\hat{P}_0)-D(P_1\|\hat{P}_1) +o(1)\big). \label{eq:thresh2}
\end{align}
Moreover, the achievable error exponents according to \eqref{eq:tradeseqMM2} satisfy
\begin{equation}
\hEo = \ell D(P_0\|P_1)  , \quad \hEt= \frac{1}{\ell} D(P_1\|P_0),
\end{equation}
where $\ell= \frac{D(P_1\|\hat{P}_0)-D(P_1\|\hat{P}_1)}{D(P_0\|\hat{P}_1)-D(P_0\|\hat{P}_0)}\frac{n_1}{n_0} $. Equivalently, we have that
\begin{equation} \label{eq:seqexpMM}
\hEo 	\hEt = D(P_0\|P_1)D(P_1\|P_0).
\end{equation}
To achieve \eqref{eq:seqexpMM}, thresholds $\hgo, \hgt$ should be chosen as
\begin{align}
\hgo&=n_0\big (D(P_0\|\hat{P}_1)-D(P_0\|\hat{P}_0)+o(1)\big) \label{eq:threshMM1},\\
\hgt&=  n_1\big(D(P_1\|\hat{P}_0)-D(P_1\|\hat{P}_1))+o(1)\big) \label{eq:threshMM2}.
\end{align}
\end{corollary}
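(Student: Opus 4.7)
The plan is to derive both parts of the corollary as direct consequences of Theorem \ref{thm:seqMM} by tuning the thresholds $\hgo, \hgt$ so that the stopping-time constraints in the two definitions \eqref{eq:tradeseqMM1} and \eqref{eq:tradeseqMM2} are saturated. A key observation is that the test parameters decouple: by Theorem \ref{thm:seqMM}, $\hgo$ alone governs both $\mathbb{E}_{P_0}[\hta]$ and $\het$, while $\hgt$ alone governs both $\mathbb{E}_{P_1}[\hta]$ and $\heo$. Moreover, $\mathbb{E}_{P_i}[\hta]$ is monotonically increasing in the corresponding threshold and $\epsilon_j$ is monotonically decreasing in it, so each threshold is optimally set so that its stopping-time constraint is met with asymptotic equality.

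First, for the definition \eqref{eq:tradeseqMM1}, I would substitute the prescribed thresholds \eqref{eq:thresh1}, \eqref{eq:thresh2} into the expected stopping-time expressions \eqref{eq:SPRTthresh} of Theorem \ref{thm:seqMM}. After cancellation, both $\mathbb{E}_{P_0}[\hta]$ and $\mathbb{E}_{P_1}[\hta]$ simplify to $n(1+o(1))$, so the constraints are asymptotically satisfied. Plugging the same thresholds into \eqref{eq:MMexp1}, \eqref{eq:MMexp2}, the exponent of $\heo$ becomes
\begin{equation*}
\frac{D(P_0\|P_1)}{D(P_0\|\hat{P}_1)-D(P_0\|\hat{P}_0)} \cdot n\bigl(D(P_1\|\hat{P}_0)-D(P_1\|\hat{P}_1)+o(1)\bigr),
\end{equation*}
and dividing by $n$ yields $\hEo=D(P_0\|P_1)\frac{D(P_1\|\hat{P}_0)-D(P_1\|\hat{P}_1)}{D(P_0\|\hat{P}_1)-D(P_0\|\hat{P}_0)}$. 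The symmetric calculation gives $\hEt$, and the constants $\hat c_0,\hat c_1$ in Theorem \ref{thm:seqMM} contribute only an $o(n)$ term to $\log\heo,\log\het$ and so do not affect the exponent.

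Next, for the definition \eqref{eq:tradeseqMM2}, I would use the scaled thresholds \eqref{eq:threshMM1}, \eqref{eq:threshMM2}, which yield $\mathbb{E}_{P_0}[\hta]=n_0(1+o(1))$ and $\mathbb{E}_{P_1}[\hta]=n_1(1+o(1))$, saturating both budgets. Substituting these into \eqref{eq:MMexp1}, \eqref{eq:MMexp2} and normalising by $n_0$ and $n_1$ respectively gives
\begin{equation*}
\hEo = \frac{n_1}{n_0}\cdot D(P_0\|P_1)\frac{D(P_1\|\hat{P}_0)-D(P_1\|\hat{P}_1)}{D(P_0\|\hat{P}_1)-D(P_0\|\hat{P}_0)} = \ell\, D(P_0\|P_1),
\end{equation*}
with the $\ell$ defined in the statement, and symmetrically $\hEt=\frac{1}{\ell}D(P_1\|P_0)$. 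The product identity \eqref{eq:seqexpMM} is then immediate because the ratio involving the mismatched KL divergences cancels.

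The only step that is not a purely mechanical substitution is the optimality claim implicit in the suprema of \eqref{eq:tradeseqMM1} and \eqref{eq:tradeseqMM2}. I expect this to be the main (but still mild) subtlety: one must verify that no other admissible choice of $(\hgo,\hgt)$ yields strictly better exponents. Here the decoupling and monotonicity noted above close the argument, since enlarging $\hgo$ beyond \eqref{eq:thresh1} or \eqref{eq:threshMM1} violates the $P_0$-stopping-time constraint, while any smaller choice strictly increases $\het$ and hence strictly reduces $\hEt$ (and analogously for $\hgt$). Thus the proposed thresholds attain the suprema and the stated expressions are indeed the achievable error exponents of the mismatched SPRT under both definitions.
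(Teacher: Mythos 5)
Your proposal is correct and follows essentially the same route as the paper: both invoke the asymptotic expressions for $\hat{\epsilon}_0,\hat{\epsilon}_1$ and $\mathbb{E}_{P_i}[\hat{\tau}]$ from Theorem \ref{thm:seqMM}, observe that the error exponents are increasing in the thresholds while the stopping-time budgets cap them, and conclude that the stated thresholds are the largest admissible ones, after which the exponent formulas follow by substitution. Your write-up merely makes explicit the threshold/constraint decoupling and the algebra that the paper leaves as "easy to check."
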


By comparing \eqref{eq:seqexp}, \eqref{eq:seqexpMM} we can conclude that mismatched sequential probability ratio test has the same performance as the case with no mismatch, i.e., there exist thresholds $\hgo, \hgt$ such that the expected stopping time condition is met, and the error exponents satisfy \eqref{eq:seqexpMM}. The intuition behind the existence of thresholds such that the optimal tradeoff is achievable relies on the fact that the mismatched distributions only cause a change in the drifts of the random walk generated by $\hat{S}_n$, and hence one can choose the thresholds appropriately to rescale the random walk behavior to achieve the optimal exponents. However, choosing $\hgo, \hgt$ to achieve \eqref{eq:seqexpMM} requires the knowledge of true probability measures $P_0, P_1$ by \eqref{eq:threshMM1}, \eqref{eq:threshMM2}, which might not be  possible. 
Note that, in the case of the likelihood ratio test, the number of samples is fixed. Hence, we cannot change the speed of decision-making when using the mismatched distributions, and mismatched distributions change the drift of the log-likelihood ratio, but rescaling the number of samples described for the mismatched likelihood ratio is not possible; therefore, mismatch can reduce the exponents. Having this in mind, we consider the performance of the mismatched probability ratio test when the thresholds are selected from \eqref{eq:thresh1}, \eqref{eq:thresh2}, \eqref{eq:threshMM1} and \eqref{eq:threshMM2} but replacing $P_0,P_1$ by the mismatched measures $\hat P_0,\hat P_1$. In fact, this is precisely the relevant practical scenario where only the testing probability measures $\hat P_0,\hat P_1$ are available. In this scenario, mismatch in probability measures will induce a mismatch in expected stopping time and the error exponents. Consider the case where \eqref{eq:thresh} is used with mismatched measures $\hat P_0,\hat P_1$,
\begin{equation}\label{eq:thresh3}
\hgo=n\big (D(\hat{P}_0\|\hat{P}_1)+o(1)\big) ~,~ \hgt= n\big(D(\hat{P}_1\|\hat{P}_0)+o(1)\big). 
\end{equation}
Using \eqref{eq:thresh3} and by Theorem \ref{thm:seqMM} we obtain
\begin{align} 
\mathbb{E}_{P_0}[\hta]&=n\frac{ D(\hat{P}_0\|\hat{P}_1)}{D(P_0\|\hat{P}_1)-D(P_0\|\hat{P}_0)}(1+o(1)),\\ 
\mathbb{E}_{P_1}[\hta]&=n\frac{ D(\hat{P}_1\|\hat{P}_0)}{D(P_1\|\hat{P}_0)-D(P_1\|\hat{P}_1)} (1+o(1)).
\end{align}
Therefore, the mismatch in the thresholds, induces expected stopping times that may be larger than $n$. Letting $\eta^{-1}=\max \bigg \{\frac{ D(\hat{P}_0\|\hat{P}_1)}{D(P_0\|\hat{P}_1)-D(P_0\|\hat{P}_0)}, \frac{ D(\hat{P}_1\|\hat{P}_0)}{D(P_1\|\hat{P}_0)-D(P_1\|\hat{P}_1)} \bigg  \}$, and according to definition \eqref{eq:tradeseqMM1} we have the following exponents,
\begin{align}
\hEo &= \frac{D(P_0\|P_1)D(\hat{P}_1\|\hat{P}_0)}{D(P_0\|\hat{P}_1)-D(P_0\|\hat{P}_0)} \eta, \label{eq:threshMMM1} \\
\hEt  &= \frac{D(P_1\|P_0)D(\hat{P}_0\|\hat{P}_1)}{D(P_1\|\hat{P}_0)-D(P_1\|\hat{P}_1)} \eta   \label{eq:threshMMM2}.
\end{align}	
Similarly to \eqref{eq:threshell}, for the second definition of exponent, we need to multiply one of the thresholds by $\ell$, and the corresponding exponents will be equal to $\ell$ and $\frac{1}{\ell}$ times the above exponents.

We now analyze the worst-case error exponents, defined as
\begin{align}
\underline{\hat{E}}_i(r_i) \triangleq \min_{P_i\in\Bc(\hat P_i,r_i)} \hat E_i,~~ i\in \{0,1 \},
\label{eq:worst_case}
\end{align}
where $\Bc(Q,r)$ is the divergence ball of radius $r$ centered at distribution $Q$ defined in \eqref{eq:ball}.
From \eqref{eq:threshMMM1}, we observe  that error exponents of mismatched sequential probability ratio test are a function of both data distributions $ P_0,  P_1$, as opposed to the fixed sample-size setting where  $\hat E_0$ is independent of ${P}_1$. The next theorem shows the behavior of the worst-case exponents when the true distributions are within a small divergence ball of radii $r_0,r_1$ and center $\hat{P}_0,\hat{P}_1$, respectively.

\begin{theorem}\label{thm:lowerworstseq}
Let $P_i, \hat{P}_i$ are defined on the probability simplex $\Pc(\Xc)$ and  $r_i \geq 0$, for $i\in\{0,1\}$. Define $\bar\imath=1-i$ to be the complement of index $i$. Then, the worst-case error exponents can be approximated as 
	\begin{align}\label{eq:worstapproxseq}
	\underline{\hat{E}}_i (r_i) = E_i - \min\bigg \{ \sum_{j=0}^{1} & \sqrt{r_j\cdot \theta_{i,j}(\hat{P}_0,\hat{P}_1)},  \sqrt{r_{\bar\imath}\cdot \theta_{\bar\imath}(\hat{P}_0,\hat{P}_1)} \bigg  \} \nonumber \\ 
	&+ o \big(\sqrt{r_0} +\sqrt{r_1} \ \big),  
	\end{align}
	where	
	\begin{align}\label{eq:senseq}
	\theta_{i,j}(\hat{P}_0,\hat{P}_1) &= 
	\begin{cases}
	\frac{2}{\alpha}  {\rm Var}_{\hat{P}_i} \Big( \rho_i \log \frac{\hat{P}_i(X)}{\hat{P}_{\bar\imath}(X)}  \Big)  &i=j\\
	\frac{2}{\alpha}  {\rm Var}_{\hat{P}_j} \Big(  \rho_i \frac{\hat{P}_i(X)}{\hat{P}_j(X)}  \Big)  & i\neq j
	\end{cases}\\
	\theta_{\bar\imath}(\hat{P}_0,\hat{P}_1)&=  \frac{2}{\alpha} {\rm Var}_{\hat{P}_{\bar\imath}} \bigg( \log \frac{\hat{P}_{i}(X)}{\hat{P}_{\bar\imath}(X)} +\rho_i \frac{\hat{P}_i(X)}{\hat{P}_{\bar\imath}(X)}  \bigg),\\
	\rho_i &= \frac{D(\hat{P}_{\bar\imath}\|\hat{P}_i)}{D(\hat{P}_i\|\hat{P}_{\bar\imath})}.  
	\end{align}
\end{theorem}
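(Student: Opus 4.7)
My plan is to reduce the statement to two small-ball Cauchy--Schwarz optimisations by linearising the closed-form expression for $\hat{E}_i$ given in \eqref{eq:threshMMM1}--\eqref{eq:threshMMM2} around the matched point $(P_0,P_1)=(\hat{P}_0,\hat{P}_1)$. The first step is to split the $\max$ inside $\eta^{-1}$ into its two branches, writing $\hat{E}_i=\min\{g_i^{(1)},g_i^{(2)}\}$, where $g_i^{(1)}=\rho_i\,D(P_i\|P_{\bar\imath})$ is the branch in which hypothesis $i$'s expected stopping time is binding (so that $D(\hat{P}_i\|\hat{P}_{\bar\imath})$ cancels between the numerator of the prefactor and the $\eta$-denominator) and $g_i^{(2)}$ is the complementary branch; the two candidate decreases appearing inside $\min\{\cdot,\cdot\}$ in \eqref{eq:worstapproxseq} correspond to the minimising values of these two branches over the product ball $\Bc(\hat{P}_0,r_0)\times\Bc(\hat{P}_1,r_1)$.

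Writing $P_j=\hat{P}_j+\delta_j$ with $\sum_x\delta_j(x)=0$, I would first replace the R\'enyi or $f$-divergence ball of radius $r_j$ by the chi-squared ball $\{\delta_j:\sum_x\delta_j(x)^2/\hat{P}_j(x)\le 2r_j/\alpha\}$, exactly as in the proof of Theorem~\ref{thm:lowerworst}. The required first-order expansions are
\begin{align*}
D(P_0\|P_1)&\approx D(\hat{P}_0\|\hat{P}_1)+\sum_x\delta_0\log\tfrac{\hat{P}_0}{\hat{P}_1}-\sum_x\delta_1\tfrac{\hat{P}_0}{\hat{P}_1},\\
D(P_j\|\hat{P}_{\bar\jmath})-D(P_j\|\hat{P}_j)&\approx D(\hat{P}_j\|\hat{P}_{\bar\jmath})+\sum_x\delta_j\log\tfrac{\hat{P}_j}{\hat{P}_{\bar\jmath}},
\end{align*}
together with $D(P_j\|\hat{P}_j)=O(\|\delta_j\|^2)$. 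Substituting these into $g_i^{(1)}$ yields a linear functional of both $\delta_0$ and $\delta_1$; substituting them into $g_i^{(2)}$ produces a first-order cancellation of the $\delta_i$-contributions from numerator, denominator and the extra factor, leaving a linear functional in $\delta_{\bar\imath}$ alone, which is precisely why $g_i^{(2)}$ only involves $r_{\bar\imath}$.

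Each linearised objective is then optimised by Cauchy--Schwarz: for any $g\from\Xc\to\R$, the minimum of $\sum_x g(x)\delta(x)$ subject to $\sum_x\delta(x)=0$ and $\sum_x\delta(x)^2/\hat{P}(x)\le 2r/\alpha$ equals $-\sqrt{(2r/\alpha)\,\mathrm{Var}_{\hat{P}}(g)}$, attained by $\delta$ proportional to the $\hat{P}$-centred version of $-g$. Reading off the relevant $g$'s from the two branches reproduces exactly the variances $\theta_{i,0},\theta_{i,1}$ and $\theta_{\bar\imath}$ in \eqref{eq:senseq}; because the constraints on $\delta_0$ and $\delta_1$ act independently, the two Cauchy--Schwarz bounds present in $g_i^{(1)}$ are simultaneously attainable and sum to $\sum_{j}\sqrt{r_j\,\theta_{i,j}}$, while $g_i^{(2)}$ contributes $\sqrt{r_{\bar\imath}\,\theta_{\bar\imath}}$. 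Reassembling the two branches and collecting the higher-order terms into $o(\sqrt{r_0}+\sqrt{r_1})$ yields the claimed expansion.

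The hardest step is establishing the first-order $\delta_i$-cancellation in $g_i^{(2)}$: one has to track how the $\delta_i$-terms from $D(P_i\|P_{\bar\imath})$, the denominator $D(P_i\|\hat{P}_{\bar\imath})-D(P_i\|\hat{P}_i)$ and the factor $D(P_{\bar\imath}\|\hat{P}_i)-D(P_{\bar\imath}\|\hat{P}_{\bar\imath})$ combine to zero; this rests on the identity $\rho_i=D(\hat{P}_{\bar\imath}\|\hat{P}_i)/D(\hat{P}_i\|\hat{P}_{\bar\imath})$ being exactly the ratio the algebra demands, together with $D(P_i\|\hat{P}_i)$ being purely second order. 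A subsidiary technicality is controlling the $o(\sqrt{r_0}+\sqrt{r_1})$ remainder uniformly over the optimisation, which follows because the divergence Hessians are bounded on a neighbourhood of the matched point and the Cauchy--Schwarz optimisers scale as $\|\delta_j\|=\Theta(\sqrt{r_j})$.
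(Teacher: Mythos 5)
Your proposal is correct and follows essentially the same route as the paper's proof in Appendix~K: rewrite $\hat E_i$ as $D(P_i\|P_{\bar\imath})$ times a minimum of two ratios, linearise each branch around $(\hat P_0,\hat P_1)$ (with exactly the first-order $\delta_i$-cancellation you identify in the second branch, which is what produces $\rho_i$ and leaves only $\delta_{\bar\imath}$), replace the divergence balls by Fisher-information ellipsoids, and solve the resulting linear-over-ellipsoid problems by Cauchy--Schwarz. The only caveat is one you share with the paper itself: minimising the pointwise minimum of the two linearised branches over the product ball gives $E_i-\max\{\cdot,\cdot\}$ of the two candidate decreases rather than the $E_i-\min\{\cdot,\cdot\}$ literally written in \eqref{eq:worstapproxseq}, so the final recombination step should be stated as a maximum.
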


Next assuming $r_0=r_1=r$, we obtain the following result.

\begin{corollary}\label{cor:seq}
	For every 	$r=r_0= r_1 \geq 0$,  and $i\in\{0,1\}, \bar\imath=1-i$,  
	\begin{align}\label{eq:seqcor}
	&	\underline{\hat{E}}_i (r) = E_i - \sqrt{r\cdot \theta_{\bar\imath}(\hat{P}_0,\hat{P}_1)} + o \big(\sqrt{r}  \big),  
	\end{align}
	where	
	\begin{equation}
	\theta_{\bar\imath}(\hat{P}_0,\hat{P}_1)=  \frac{2}{\alpha} {\rm Var}_{\hat{P}_{\bar\imath}} \bigg( \log \frac{\hat{P}_{i}(X)}{\hat{P}_{\bar\imath}(X)} +\rho_i \frac{\hat{P}_i(X)}{\hat{P}_{\bar\imath}(X)}  \bigg). 
	\end{equation}
\end{corollary}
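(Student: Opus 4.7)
The plan is to derive Corollary \ref{cor:seq} as a direct specialization of Theorem \ref{thm:lowerworstseq} to the symmetric case $r_0=r_1=r$. After this substitution the two quantities inside the minimum in \eqref{eq:worstapproxseq} read
\begin{equation*}
T_1(r) = \sqrt{r}\bigl(\sqrt{\theta_{i,0}(\hat P_0,\hat P_1)}+\sqrt{\theta_{i,1}(\hat P_0,\hat P_1)}\bigr),\quad T_2(r)=\sqrt{r\cdot\theta_{\bar\imath}(\hat P_0,\hat P_1)},
\end{equation*}
and the error term $o(\sqrt{r_0}+\sqrt{r_1})$ collapses to $o(\sqrt{r})$. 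Hence the corollary reduces to showing that $T_2(r)\leq T_1(r)$ for every pair $\hat P_0,\hat P_1$, i.e., the distribution-level inequality
\begin{equation*}
\sqrt{\theta_{\bar\imath}(\hat P_0,\hat P_1)}\leq \sqrt{\theta_{i,0}(\hat P_0,\hat P_1)}+\sqrt{\theta_{i,1}(\hat P_0,\hat P_1)}.
\end{equation*}

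The main tool is the triangle inequality for standard deviations under a common measure, $\sqrt{\mathrm{Var}(A+B)}\leq \sqrt{\mathrm{Var}(A)}+\sqrt{\mathrm{Var}(B)}$, which follows from the Cauchy--Schwarz bound $\mathrm{Cov}(A,B)\leq \sqrt{\mathrm{Var}(A)\mathrm{Var}(B)}$. Applying this with $A(X)=\log(\hat P_i(X)/\hat P_{\bar\imath}(X))$ and $B(X)=\rho_i\hat P_i(X)/\hat P_{\bar\imath}(X)$ evaluated under $X\sim\hat P_{\bar\imath}$ gives, by the definition of $\theta_{\bar\imath}$,
\begin{equation*}
\sqrt{\theta_{\bar\imath}}\leq \sqrt{\tfrac{2}{\alpha}\mathrm{Var}_{\hat P_{\bar\imath}}(A)} + \sqrt{\tfrac{2}{\alpha}\mathrm{Var}_{\hat P_{\bar\imath}}(B)}.
\end{equation*}
The second summand coincides with $\sqrt{\theta_{i,\bar\imath}}$ directly by the $i\neq j$ branch of \eqref{eq:senseq}.

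The delicate step is to identify the first summand with $\sqrt{\theta_{i,i}}$, that is, to pass from a variance of the log-likelihood ratio under $\hat P_{\bar\imath}$ to the rescaled variance $\rho_i^2\,\mathrm{Var}_{\hat P_i}(\log \hat P_i/\hat P_{\bar\imath})$ under $\hat P_i$ that appears in the $i=j$ branch. I would handle this via a change-of-measure argument that exploits the definition $\rho_i=D(\hat P_{\bar\imath}\|\hat P_i)/D(\hat P_i\|\hat P_{\bar\imath})$, using that the means of $\log\hat P_i/\hat P_{\bar\imath}$ under $\hat P_i$ and under $\hat P_{\bar\imath}$ are exactly $D(\hat P_i\|\hat P_{\bar\imath})$ and $-D(\hat P_{\bar\imath}\|\hat P_i)$, so that the factor $\rho_i^2$ precisely absorbs the asymmetry between these two centerings. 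This change-of-measure comparison of the two variances is the main obstacle, and it is also the only step that uses the SPRT-specific structure encoded in $\rho_i$.

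Once both summand-by-summand bounds are in hand, they combine to give $\sqrt{\theta_{\bar\imath}}\leq \sqrt{\theta_{i,0}}+\sqrt{\theta_{i,1}}$, hence $\min\{T_1(r),T_2(r)\}=T_2(r)$. Substituting back into \eqref{eq:worstapproxseq} yields $\underline{\hat E}_i(r)=E_i-\sqrt{r\cdot\theta_{\bar\imath}(\hat P_0,\hat P_1)}+o(\sqrt{r})$ as claimed.
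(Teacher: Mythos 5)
Your reduction of the corollary to the distribution-level inequality $\sqrt{\theta_{\bar\imath}}\leq\sqrt{\theta_{i,0}}+\sqrt{\theta_{i,1}}$ is a reasonable reading of \eqref{eq:worstapproxseq}, and two of your three ingredients are sound: the triangle inequality for standard deviations is valid, and the second summand does coincide with $\sqrt{\theta_{i,\bar\imath}}$ by definition. The argument fails, however, exactly at the step you yourself flag as delicate. The claimed bound ${\rm Var}_{\hat{P}_{\bar\imath}}\big(\log\hat{P}_i/\hat{P}_{\bar\imath}\big)\leq\rho_i^2\,{\rm Var}_{\hat{P}_i}\big(\log\hat{P}_i/\hat{P}_{\bar\imath}\big)$ is not a change-of-measure identity: $\rho_i$ is the ratio of the two \emph{means} of the log-likelihood ratio under the two measures, and the ratio of means does not control the ratio of variances. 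Take the paper's own running example $\hat{P}_0={\rm Bern}(0.1)$, $\hat{P}_1={\rm Bern}(0.8)$ with $i=0$. On a binary alphabet ${\rm Var}_P(A)=P(1)P(0)\,(A(1)-A(0))^2$, so ${\rm Var}_{\hat{P}_1}(A)=0.16\,c^2$ while $\rho_0^2\,{\rm Var}_{\hat{P}_0}(A)\approx 1.415\times 0.09\,c^2\approx 0.127\,c^2$, where $c=\log 36$; the inequality you need goes the wrong way. Worse, on a binary alphabet any two non-constant functions are perfectly (positively) correlated, so your Cauchy--Schwarz step holds with equality and the whole chain collapses to precisely this variance comparison: for this example $\sqrt{\theta_1}>\sqrt{\theta_{0,0}}+\sqrt{\theta_{0,1}}$, so the target inequality itself fails, not merely your route to it.

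There is also a sign subtlety you inherit from the theorem statement. In the derivation in Appendix \ref{apx:Tlowerworstseq}, the two entries of the minimum are the \emph{nonpositive} linearized decrements of the exponent, so the branch that survives is the one with the \emph{larger} square-root magnitude; on that reading the corollary would instead require $\sqrt{\theta_{\bar\imath}}\geq\sqrt{\theta_{i,0}}+\sqrt{\theta_{i,1}}$, which holds for ${\rm Bern}(0.1)/{\rm Bern}(0.8)$ but fails for other pairs (e.g.\ $\hat{P}_0={\rm Bern}(0.5)$, $\hat{P}_1={\rm Bern}(0.1)$, where $\rho_0^2\,{\rm Var}_{\hat{P}_0}(A)\approx 0.13\,c^2>0.09\,c^2={\rm Var}_{\hat{P}_1}(A)$). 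The paper states the corollary without proof as an immediate specialization, so there is no argument to compare yours against; but as written your proposal does not close the gap: the change-of-measure step is false, and before attempting either inequality you would first need to determine which branch of the minimum in Theorem \ref{thm:lowerworstseq} is actually active.
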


As an example, consider $\hat{P}_0 =\text{Bern}(0.1)$ ,    $\hat{P}_1 =\text{Bern}(0.8)$, and $r=r_0=r_1$ and the relative entropy is used as the $f$-divergence ball measure of distance. Figure \ref{fig:worstRsenseq} shows the worst-case error exponent given by solving non-convex optimization problem in  \eqref{eq:worst_case} with precision of $10^{-3}$ as well as the approximation $ \underline{\tilde{E}}_0$ obtained from  \eqref{eq:seqcor} by ignoring the $o(\sqrt{r})$ term. Observe that there exists some gap between the approximation $\underline{\tilde{E}}_0$ and the actual exponent $\underline{\hat{E}}_0$ in \eqref{eq:worst_case}. The approximation consists of a linear approximation of the objective and second order approximation of constraints and computing it is straightforward for arbitrary distributions and radii. Instead, computing the  exact optimization problem $\underline{\hat{E}}_0$ \eqref{eq:worst_case} is difficult, as it is a non-convex optimization problem involving a highly nonlinear objective, cf. Eqs. \eqref{eq:threshMMM1}--\eqref{eq:worst_case}.


\begin{figure}[htp]
	\centering  
%
%
\definecolor{mycolor1}{rgb}{0.00000,0.44700,0.74100}%
\definecolor{mycolor2}{rgb}{0.85000,0.32500,0.09800}%
\begin{tikzpicture}[scale=1]

\begin{axis}[%
width=3.0in,
height=2.5812in,
at={(1.011in,0.742in)},
scale only axis,
xmode=log,
xmin=0.0001,
xmax=0.01,
xminorticks=true,
xlabel style={font=\color{white!15!black}},
xlabel={$r$},
ymin=0.8,
ymax=1.4,
ylabel style={font=\color{white!15!black}},
ylabel={$\underline{\hat{E}}_0$},
axis background/.style={fill=white},
xmajorgrids,
xminorgrids,
ymajorgrids,
legend style={legend cell align=left, align=left, draw=white!15!black}
]
\addplot [color=mycolor1, line width=1.0pt]
  table[row sep=crcr]{%
0.0001	1.31227716062863\\
0.0002	1.29195651417509\\
0.0003	1.27605519917042\\
0.0004	1.26361583567816\\
0.0005	1.252095683621\\
0.0006	1.24228963172206\\
0.0007	1.23254611697923\\
0.0008	1.22366904174256\\
0.0009	1.21564379645542\\
0.001	1.20845733322501\\
0.0011	1.20130497043724\\
0.0012	1.19418650300406\\
0.0013	1.1871017139778\\
0.0014	1.18083221787454\\
0.0015	1.17536798175278\\
0.0016	1.16914764216693\\
0.0017	1.16372619367038\\
0.0018	1.15755452583283\\
0.0019	1.1529426599594\\
0.002	1.14758038281358\\
0.0021	1.14223763789805\\
0.0022	1.13767360521721\\
0.0023	1.1323668859929\\
0.0024	1.12783359654455\\
0.0025	1.12331440914501\\
0.0026	1.11880926699545\\
0.0027	1.11431811373639\\
0.0028	1.10984088462205\\
0.0029	1.10612044870831\\
0.003	1.10166861269888\\
0.0031	1.09796926403956\\
0.0032	1.09354262765506\\
0.0033	1.08986422115381\\
0.0034	1.08619525845829\\
0.0035	1.08180496776186\\
0.0036	1.07815672018793\\
0.0037	1.07451784933864\\
0.0038	1.0708882979988\\
0.0039	1.06726807021863\\
0.004	1.06365709422935\\
0.0041	1.06005538528959\\
0.0042	1.05718064282627\\
0.0043	1.0535954816935\\
0.0044	1.05001950247883\\
0.0045	1.04716528026977\\
0.0046	1.04360570369175\\
0.0047	1.04076458947976\\
0.0048	1.037221343906\\
0.0049	1.03439324723111\\
0.005	1.03086623958699\\
0.0051	1.02805110269519\\
0.0052	1.02524169026515\\
0.0053	1.02173795919167\\
0.0054	1.01894139694631\\
0.0055	1.01615052328262\\
0.0056	1.01336532410344\\
0.0057	1.01058577623244\\
0.0058	1.00781187016741\\
0.0059	1.00435240759533\\
0.006	1.00159115931149\\
0.0061	0.998835512791084\\
0.0062	0.996085450680636\\
0.0063	0.993340953867761\\
0.0064	0.991286240708731\\
0.0065	0.988551463920247\\
0.0066	0.985822224806582\\
0.0067	0.983098510008174\\
0.0068	0.98038030623027\\
0.0069	0.9776676002425\\
0.007	0.975636668716021\\
0.0071	0.972933549340932\\
0.0072	0.97023589175112\\
0.0073	0.967543682953785\\
0.0074	0.965528087946333\\
0.0075	0.962845380756533\\
0.0076	0.960168086998889\\
0.0077	0.958163653988114\\
0.0078	0.955495800486314\\
0.0079	0.95349843476325\\
0.008	0.950839982919492\\
0.0081	0.948849655760831\\
0.0082	0.946200567297703\\
0.0083	0.944217250217884\\
0.0084	0.941577487174951\\
0.0085	0.939601151923624\\
0.0086	0.936970676654055\\
0.0087	0.935001295213591\\
0.0088	0.933034887939782\\
0.0089	0.930417614963857\\
0.009	0.928458121891911\\
0.0091	0.926501572121833\\
0.0092	0.923897434642371\\
0.0093	0.921947756232244\\
0.0094	0.920001018447377\\
0.0095	0.917409933388607\\
0.0096	0.915470031850755\\
0.0097	0.91353305447427\\
0.0098	0.911598996360042\\
0.0099	0.909024769014653\\
0.01	0.907097494964775\\
};
\addlegendentry{$ \underline{\hat{E}}_0  $}

\addplot [color=mycolor2, dashed, line width=1.0pt]
  table[row sep=crcr]{%
0.0001	1.30828563211542\\
0.0002	1.28573082473555\\
0.0003	1.26842391232431\\
0.0004	1.25383351024222\\
0.0005	1.24097910796105\\
0.0006	1.22935783998744\\
0.0007	1.21867098115236\\
0.0008	1.20872389548248\\
0.0009	1.19938138836903\\
0.001	1.19054502544024\\
0.0011	1.18214049669652\\
0.0012	1.17411007066\\
0.0013	1.166407836517\\
0.0014	1.15899656995626\\
0.0015	1.15184559280807\\
0.0016	1.14492926649583\\
0.0017	1.13822590396643\\
0.0018	1.13171696622942\\
0.0019	1.125386457482\\
0.002	1.11922046193348\\
0.0021	1.11320678375374\\
0.0022	1.10733466339875\\
0.0023	1.10159455139788\\
0.0024	1.09597792598627\\
0.0025	1.09047714462264\\
0.0026	1.08508532200066\\
0.0027	1.0797962289957\\
0.0028	1.07460420831611\\
0.0029	1.06950410360329\\
0.003	1.06449119944892\\
0.0031	1.05956117034256\\
0.0032	1.05471003697635\\
0.0033	1.04993412865112\\
0.0034	1.04523005077376\\
0.0035	1.04059465662773\\
0.0036	1.03602502274944\\
0.0037	1.03151842736311\\
0.0038	1.02707233142212\\
0.0039	1.02268436188197\\
0.004	1.01835229689187\\
0.0041	1.01407405264271\\
0.0042	1.00984767165033\\
0.0043	1.00567131228734\\
0.0044	1.00154323940443\\
0.0045	0.997461815905913\\
0.0046	0.993425495163253\\
0.0047	0.989432814167047\\
0.0048	0.985482387331389\\
0.0049	0.98157290087625\\
0.005	0.977703107723289\\
0.0051	0.9738718228489\\
0.0052	0.970077919045377\\
0.0053	0.96632032304726\\
0.0054	0.962598011985103\\
0.0055	0.958910010133477\\
0.0056	0.955255385923905\\
0.0057	0.951633249196808\\
0.0058	0.948042748669512\\
0.0059	0.9444830695999\\
0.006	0.940953431627533\\
0.0061	0.937453086776064\\
0.0062	0.933981317602427\\
0.0063	0.930537435479851\\
0.0064	0.927120779003055\\
0.0065	0.923730712505151\\
0.0066	0.920366624676861\\
0.0067	0.917027927279522\\
0.0068	0.913714053944237\\
0.0069	0.91042445905021\\
0.007	0.907158616675974\\
0.0071	0.903916019617821\\
0.0072	0.900696178470225\\
0.0073	0.89749862076355\\
0.0074	0.894322890154745\\
0.0075	0.891168545667083\\
0.0076	0.888035160975379\\
0.0077	0.884922323733393\\
0.0078	0.881829634940417\\
0.0079	0.878756708344295\\
0.008	0.875703169878346\\
0.0081	0.87266865712986\\
0.0082	0.869652818838037\\
0.0083	0.866655314419401\\
0.0084	0.863675813518862\\
0.0085	0.860713995584763\\
0.0086	0.857769549466368\\
0.0087	0.854842173032337\\
0.0088	0.851931572808895\\
0.0089	0.849037463636447\\
0.009	0.846159568343503\\
0.0091	0.843297617436877\\
0.0092	0.840451348807146\\
0.0093	0.837620507448497\\
0.0094	0.834804845192082\\
0.0095	0.832004120452109\\
0.0096	0.829218097983932\\
0.0097	0.826446548653446\\
0.0098	0.82368924921716\\
0.0099	0.820945982112339\\
0.01	0.818216535256665\\
};
\addlegendentry{$ \underline{\tilde{E}}_0 $}

\end{axis}

\end{tikzpicture}%
	\caption{Worst-case achievable type-\RNum{1} error exponent with a relative entropy ball of radius $r$. The solid line corresponds to the optimization problems in \eqref{eq:worst_case}, and the dashed line corresponds to the approximated exponent using Theorem \ref{thm:lowerworstseq}. } 
	\label{fig:worstRsenseq}
\end{figure}
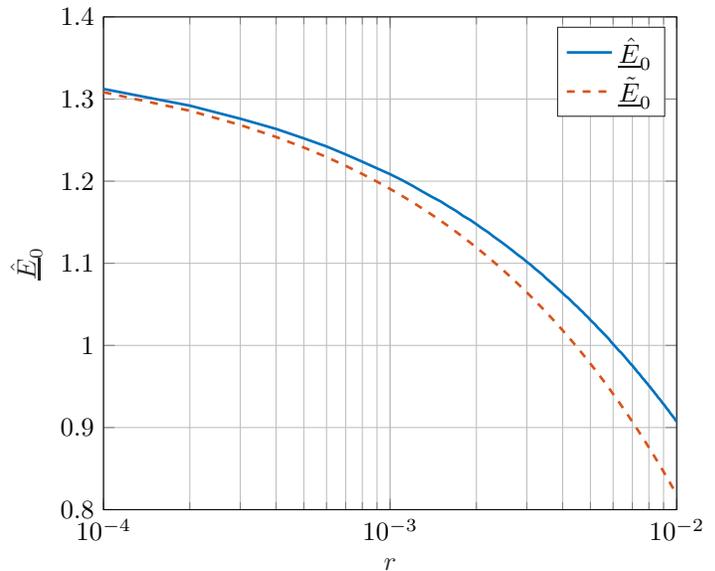

\section{Adversarial Setting}
\label{sec:adv}

In this section, we study the sensitivity of hypothesis testing under a perturbation of the observed samples by an adversary.

\subsection{Likelihood Ratio test}

We consider the worst-case scenario where an adversary can change the sample type to $\Th'$, where the change is assumed to be limited to a divergence ball around the type of actual sequence $\Th$ generated by either of the hypothesis, i.e., $d(\Th,\Th')\leq r$.

Similar to the case with the distribution mismatch, by direct application of Sanov's theorem, we can find the worst-case exponents by solving the optimization
\begin{align}
\underline{\hat{E}}_0(r)=  \min_{\substack{ \hat{Q} \in \mathcal{Q}^{\rm adv}_0  \\ Q:d(Q,\hat{Q})\leq r }} D(Q\|P_0),\label{eq:testworst1}\\
\underline{\hat{E}}_1(r)=  \min_{\substack{ \hat{Q} \in \mathcal{Q}^{\rm adv}_1 \\ Q:d(Q,\hat{Q})\leq r }} D(Q\|P_1),\label{eq:testworst2}
\end{align}
where	
\begin{align}
\mathcal{Q}^{\rm adv}_0&=  \big\{\hat{Q}\in \mathcal{P}(\mathcal{X}): D(\hat{Q}\| {P}_0)-D(\hat{Q}\|{P}_{1}) \geq {\gamma} \big \}, \label{eq:Q1set}\\
\mathcal{Q}^{\rm adv}_1&=  \big\{\hat{Q}\in \mathcal{P}(\mathcal{X}): D(\hat{Q}\| {P}_0)-D(\hat{Q}\|{P}_{1}) \leq {\gamma} \big \}. \label{eq:Q2set}
\end{align}
Furthermore, in the case where the distance $d$ is the $f$-divergence ball or the R\'enyi divergence ball of order $\alpha \in[0,1]$, $d(Q,\hat{Q})$ is  jointly convex \cite{Harremos,Csiszar}. Hence, \eqref{eq:testworst1} is a convex optimization problem and, the KKT conditions are also sufficient. Writing the Lagrangian for $\alpha=1$ we have
\begin{align}\label{eq:lagrange}
L(Q,\hat{Q},&\lambda_1,\lambda_2, \nu_1, \nu_2)= D(Q\|P_0) + \lambda_1 \big( D(\hat{Q}\|{P}_1) \nonumber \\
&~~~~~-D(\hat{Q}\|P_0) +\gamma \big) +  \lambda_2 \big ( D(Q\|\hat{Q}) -r\big )  \nonumber \\
&+ \nu_1 \Big ( \sum_{x\in \mathcal{X}} Q(x)-1\Big )+ \nu_2 \Big ( \sum_{x\in \mathcal{X}} \hat{Q}(x)-1\Big ).
\end{align}
Differentiating with respect to $Q(x)$ and  $\hat{Q}(x)$ and setting the derivatives to zero, we have
\begin{align}
1+\log \frac{Q(x)}{P_0(x)} + \lambda_2 \Bigg (1+\log \frac{Q(x)}{\hat{Q}(x)}\Bigg) + \nu_1&=0,\label{eq:lagrangetest1}\\
\lambda_1 \log \frac{{P}_0(x)}{{P}_1(x)} - \lambda_2 \frac{Q(x)}{\hat{Q}(x)}  + \nu_2&=0,\label{eq:lagrangetest2}
\end{align} 
respectively. Solving equations \eqref{eq:lagrangetest1}, \eqref{eq:lagrangetest2} for every $x\in\Xc$, we obtain
\begin{align}\label{eq:lowerworstKKT1}
Q_{\lambda_1,\lambda_2}(x)=& \frac{ P_0(x)} {\Big( 1-\frac{\lambda_1}{\lambda_2}\gamma+\frac{\lambda_1}{\lambda_2}\log \frac{P_0(x)}{P_1(x)}  \Big ) ^ {\lambda_2}} \times   \nonumber \\
&\Bigg( \sum_{a\in \Xc} \frac{ P_0(a)} {\Big( 1-\frac{\lambda_1}{\lambda_2}\gamma+\frac{\lambda_1}{\lambda_2}\log \frac{P_0(a)}{P_1(a)}  \Big ) ^ {\lambda_2}}   \Bigg)^{-1},\\
\hat{Q}_{\lambda_1,\lambda_2}(x)=&\frac{ P_0(x)} {\Big( 1-\frac{\lambda_1}{\lambda_2}\gamma+\frac{\lambda_1}{\lambda_2}\log \frac{P_0(x)}{P_1(x)}  \Big ) ^ {1+\lambda_2}}\times    \nonumber \\
&\Bigg( \sum_{a\in \Xc} \frac{ P_0(a)} {\Big( 1-\frac{\lambda_1}{\lambda_2}\gamma+\frac{\lambda_1}{\lambda_2}\log \frac{P_0(a)}{P_1(a)}  \Big ) ^ {1+\lambda_2}}   \Bigg)^{-1} ,
\end{align}
where $\lambda_1, \lambda_2$ can be find by solving  $D(Q_{\lambda_1,\lambda_2}\|\hat{Q}_{\lambda_1,\lambda_2})=r$ and $D(\hat{Q}_{\lambda_1,\lambda_2}\|{P}_1) -D(\hat{Q}_{\lambda_1,\lambda_2}\|P_0) = \gamma$. Next, we will study the error exponents' worst-case sensitivity when the radius of the divergence ball  is small.
\begin{theorem}\label{thm:adverLRT}
	For every 	$r\geq0$
	we have	
	\begin{equation}
	\underline{\hat{E}}_i(r)= E_i - \sqrt{r \cdot \theta_i(P_0,P_1,\gamma) }+ o \big(\sqrt{r} \big),  
	\end{equation}
	where	
	\begin{equation}\label{eq:sampsen}
	\theta_i(P_0,P_1,\gamma)=\frac{2}{\alpha}  {\rm Var}_{Q_{\lambda}} \bigg(\log \frac{Q_{\lambda}(X)}{{P}_i(X)}  \bigg), 
	\end{equation}
	and ${Q}_{\lambda}(X)$ is the minimizing distribution in \eqref{eq:tilted}.
\end{theorem}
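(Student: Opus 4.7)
The plan is to Taylor-expand the joint optimisation in~\eqref{eq:testworst1} around the matched tilted distribution $Q_\lambda$ of~\eqref{eq:tilted}, exploit the KKT structure derived just before the theorem, and extract the leading $\sqrt{r}$ behaviour. At $r=0$ feasibility forces $Q=\hat Q=Q_\lambda$, giving the baseline $E_0=D(Q_\lambda\|P_0)$.

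First I would write $Q=Q_\lambda+\delta_Q$ and $\hat Q=Q_\lambda+\delta_{\hat Q}$ with $\sum_x\delta_Q(x)=\sum_x\delta_{\hat Q}(x)=0$. Using the identity $\log\frac{Q_\lambda(x)}{P_0(x)}=\lambda\log\frac{P_1(x)}{P_0(x)}-\log Z$ that follows from~\eqref{eq:tilted}, a Taylor expansion yields
\begin{align*}
D(Q\|P_0) &= E_0 + \lambda\sum_x \delta_Q(x)\log\tfrac{P_1(x)}{P_0(x)} + \tfrac12\sum_x\tfrac{\delta_Q(x)^2}{Q_\lambda(x)}+O(\|\delta_Q\|^3);
\end{align*}
the constraint $\hat Q\in\mathcal{Q}^{\rm adv}_0$ becomes $\sum_x\delta_{\hat Q}(x)\log\frac{P_1(x)}{P_0(x)}\geq O(\|\delta_{\hat Q}\|^2)$ (using $\sum_x Q_\lambda(x)\log(P_1/P_0)(x)=\gamma$ from~\eqref{eq:KKTgamma}), and the adversarial budget $d(Q,\hat Q)\leq r$ becomes $\frac{\alpha}{2}\sum_x(\delta_Q(x)-\delta_{\hat Q}(x))^2/Q_\lambda(x)\leq r+o(r)$, since both the R\'enyi divergence of order $\alpha$ and every smooth $f$-divergence with $f''(1)=\alpha$ agree to second order with $\tfrac{\alpha}{2}\chi^2$ at the diagonal.

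Next I would show the problem decouples at first order. Letting $\eta=\delta_Q-\delta_{\hat Q}$, the linear part of the objective splits as $\lambda\sum_x\delta_{\hat Q}(x)\log\frac{P_1(x)}{P_0(x)}+\lambda\sum_x\eta(x)\log\frac{P_1(x)}{P_0(x)}$. Since $\lambda\geq 0$ and the boundary constraint lower-bounds the first sum by zero, any optimum can choose $\delta_{\hat Q}=0$ and reduce to minimising $\lambda\sum_x\eta(x)\log\frac{P_1(x)}{P_0(x)}$ over zero-mean $\eta$ with quadratic budget $\frac{\alpha}{2}\sum_x\eta(x)^2/Q_\lambda(x)\leq r$. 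Weighted Cauchy--Schwarz, centred using $\mathbb{E}_{Q_\lambda}[\log(P_1/P_0)]=\gamma$ from~\eqref{eq:KKTgamma}, gives optimum $-\lambda\sqrt{(2r/\alpha)\,{\rm Var}_{Q_\lambda}(\log(P_1/P_0))}$, attained by $\eta(x)\propto Q_\lambda(x)(\log\tfrac{P_1(x)}{P_0(x)}-\gamma)$. The identity $\log(Q_\lambda/P_0)=\lambda\log(P_1/P_0)-\log Z$ gives $\lambda^2\,{\rm Var}_{Q_\lambda}(\log(P_1/P_0))={\rm Var}_{Q_\lambda}(\log(Q_\lambda/P_0))$, so the leading deviation is exactly $-\sqrt{r\,\theta_0(P_0,P_1,\gamma)}$. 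The case $i=1$ is analogous: the identity $\log(Q_\lambda/P_1)=-(1-\lambda)\log(P_1/P_0)-\log Z$ produces a factor $(1-\lambda)^2$ with the same variance, reproducing $\theta_1(P_0,P_1,\gamma)$.

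The main delicate step will be verifying that all higher-order remainders are genuinely $o(\sqrt r)$. Under the natural ansatz $\|\delta_Q\|,\|\delta_{\hat Q}\|=O(\sqrt r)$ (forced by the quadratic budget), the $\chi^2$ correction in $D(Q\|P_0)$ is $O(r)$, the cubic remainder in $d(Q,\hat Q)$ is $O(r^{3/2})$, and the second-order slack in the boundary constraint is $O(r)$; each is negligible at the $\sqrt r$ scale. Achievability uses the explicit feasible pair $\hat Q=Q_\lambda$, $Q=Q_\lambda+\eta^*$, while the matching converse comes from Lagrangian duality applied to the KKT system displayed just before the theorem, bounding the objective below by its first-order part plus a non-negative residue whose infimum matches the Cauchy--Schwarz value. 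This parallels the $\sqrt r$-expansion arguments used for Theorem~\ref{thm:lowerworst} and Theorem~\ref{thm:lowerworstHoef}, from which the remainder estimates can be imported.
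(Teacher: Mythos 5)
Your proposal is correct and arrives at the right sensitivity, but it organizes the argument differently from the paper. The paper keeps the two optimisations nested: for a \emph{fixed} $\hat Q$ it Taylor-expands the inner problem around $Q=\hat Q$ and solves the resulting linear-objective/quadratic-budget program to get $\underline{\hat E}_0(\hat Q,r)=D(\hat Q\|P_0)-\sqrt{\tfrac{2r}{\alpha}{\rm Var}_{\hat Q}\bigl(\log\tfrac{\hat Q}{P_0}\bigr)}+o(\sqrt r)$, and then minimises over $\hat Q\in\mathcal Q_0^{\rm adv}$ by expanding the value function in $\sqrt r$ at $r=0$ and invoking the envelope theorem, which pins the derivative at the matched optimiser $\hat Q=Q_\lambda$. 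You instead perturb both variables jointly around the fixed centre $Q_\lambda$, observe that the linearised problem decouples (the half-space constraint on $\hat Q$ is in fact \emph{exactly} affine, $\sum_x\delta_{\hat Q}(x)\log\tfrac{P_1(x)}{P_0(x)}\ge 0$, so your $O(\|\delta_{\hat Q}\|^2)$ slack is not even needed), and finish with weighted Cauchy--Schwarz plus the identity $\lambda^2{\rm Var}_{Q_\lambda}(\log\tfrac{P_1}{P_0})={\rm Var}_{Q_\lambda}(\log\tfrac{Q_\lambda}{P_0})$; the $(1-\lambda)^2$ factor you get for $i=1$ is consistent with the paper's remark that $\theta_0/\theta_1=(\lambda/(1-\lambda))^2$. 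What each buys: your route makes the optimal adversarial perturbation explicit and reveals why the perturbation of the decision-region variable contributes nothing at first order, while the paper's envelope-theorem route sidesteps the one genuinely delicate point in your converse --- namely that the decoupling and the Cauchy--Schwarz constant ${\rm Var}_{Q_\lambda}$ are only valid once you know the optimising pair $(\hat Q,Q)$ converges to $(Q_\lambda,Q_\lambda)$ as $r\to 0$ (otherwise the budget constraint and the variance are weighted by a far-away $\hat Q$). You flag this and propose importing the continuity/remainder estimates from Theorems \ref{thm:lowerworst} and \ref{thm:lowerworstHoef}, which is exactly the missing ingredient and is consistent with the paper's own level of rigour; with that step made explicit your argument is complete.
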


\begin{remark}
	
	Unlike the distribution mismatch where $\frac{\partial \theta_0^{\rm dist}(\hat{P}_0,\hat{P}_1,\hat{\gamma})}{\partial \gamma} \geq 0 , \frac{\partial \theta_1^{\rm dist}(\hat{P}_0,\hat{P}_1,\hat{\gamma})}{\partial \gamma} \leq 0 $, the sensitivities of the likelihood ratio test towards sample mismatch are not strictly-increasing nor strictly-decreasing. Instead, it can be shown that the derivative of the sample sensitivities respect to the threshold $\hat{\gamma}$  are proportional to the skewness of the random variable $\log \frac{P_1(X)}{P_0(X)}$ under the distribution $Q_\lambda(X)$ which can have any sign, depending on $P_0$ and $P_1$. However, similarly to the case of distribution mismatch, the sensitivities under both distributions are equal when $\lambda=\frac{1}{2}$, i.e., $\theta_0(\hat{P}_0,\hat{P}_1,\hat{\gamma})=\theta_1(\hat{P}_0,\hat{P}_1,\hat{\gamma})$. More generally, for any $\lambda$, by substituting $Q_\lambda$, we have
	
\begin{align}
\frac{\theta_0(\hat{P}_0,\hat{P}_1,\hat{\gamma})}{\theta_1(\hat{P}_0,\hat{P}_1,\hat{\gamma})} &=\frac{{\rm Var}_{Q_{\lambda}} \bigg(\log \frac{Q_{\lambda}(X)}{{P}_0(X)}  \bigg)}{{\rm Var}_{Q_{\lambda}} \bigg(\log \frac{Q_{\lambda}(X)}{{P}_1(X)}  \bigg)}\\
&=\frac{{\rm Var}_{Q_{\lambda}} \bigg( \kappa+\log \frac{ { P_{0}^{1-\lambda}(x) P_{1}^{\lambda}(x) }}{{P}_0(X)}  \bigg)}{{ \rm Var}_{Q_{\lambda}} \bigg(\kappa+\log \frac{{ P_{0}^{1-\lambda}(x) P_{1}^{\lambda}(x) }}{{P}_1(X)}  \bigg)}\\
&=\frac{{\rm Var}_{Q_{\lambda}} \bigg( -\lambda \log \frac{ {  P_{0}(x) }}{{P}_1(X)}  \bigg)}{{\rm Var}_{Q_{\lambda}} \bigg( (1-\lambda) \log \frac{{ P_{0}(x)  }}{{P}_1(X)}  \bigg)}\\
&=\Big(\frac{\lambda}{1-\lambda} \Big)  ^2.
\end{align}
where $\kappa =-\log{\sum_{a \in \Xc }  P_{0}^{1-\lambda}(a) P_{1}^{\lambda}(a) }.$
\end{remark}

 Next, we will compare the sensitivity of the likelihood ratio test toward distribution mismatch and sample mismatch. 
\begin{corollary}\label{cor:samp_dist}
	For every $P_0, P_1 \in \Pc(\Xc)$ and $i\in\{0,1\}$, we have
	\begin{equation}\label{eq:lowersenadv}
 	\Bigg(\min_{x \in \Xc} \frac{P_i(x)}{Q_\lambda(x)} \Bigg )  \theta_i^{\rm dist} - E_i^2\leq \theta_i^{\rm adv}\leq\theta_i^{\rm dist},
	\end{equation}
	where $\theta_i^{\rm dist}, \theta_i^{\rm adv}$ are the likelihood ratio test distribution and adversarial setting sensitivities in \eqref{eq:sensitivity} and \eqref{eq:sampsen}.

\end{corollary}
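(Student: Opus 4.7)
The plan is to set $r(x):=Q_\lambda(x)/P_i(x)$ and exploit the identities $\mathbb{E}_{P_i}[r]=1$, $E_i=D(Q_\lambda\|P_i)=\mathbb{E}_{P_i}[r\log r]$, and $c:=\min_x P_i(x)/Q_\lambda(x)=1/\max_x r(x)$ (so that $cr(x)\leq 1$ pointwise). Rewriting both sensitivities with respect to $P_i$, one obtains $\tfrac{\alpha}{2}\theta_i^{\rm dist}=\mathbb{E}_{P_i}[(r-1)^2]$ and $\tfrac{\alpha}{2}\theta_i^{\rm adv}=\mathbb{E}_{P_i}[r(\log r)^2]-E_i^2$, which reduces the corollary to the pair of moment inequalities $c\,\mathbb{E}_{P_i}[(r-1)^2]\leq\mathbb{E}_{P_i}[r(\log r)^2]\leq\mathbb{E}_{P_i}[(r-1)^2]$ (after absorbing the correct factors of $\tfrac{2}{\alpha}$).

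For the upper bound the key ingredient is the elementary scalar estimate $t(\log t)^2\leq(t-1)^2$ valid for every $t>0$. I would prove it by the substitution $t=e^{2s}$: the two sides become $4e^{2s}s^2$ and $(e^{2s}-1)^2=4e^{2s}\sinh^2 s$, so the claim is equivalent to the well-known $\sinh^2 s\geq s^2$. Applying it pointwise inside $\mathbb{E}_{P_i}[r(\log r)^2]$ and discarding the non-negative $E_i^2$ term immediately yields $\theta_i^{\rm adv}\leq\theta_i^{\rm dist}$ (in fact with the slightly tighter slack $-\tfrac{2}{\alpha}E_i^2$, which the statement harmlessly throws away).

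The lower bound is the main obstacle, because the natural pointwise counterpart $r(\log r)^2\geq c(r-1)^2$ fails in the small-$r$ region: the ratio $r(\log r)^2/(r-1)^2$ tends to $0$ as $r\to 0$, while $c$ is in general bounded away from $0$. My plan is therefore to pass to expectation and exploit the global constraint $\mathbb{E}_{P_i}[r]=1$ together with the pointwise bound $r\geq cr^2$ coming from $cr\leq 1$. Combining these with the two-sided scalar estimates $r\log r\geq r-1$ (valid for all $r>0$) and the logarithmic-mean inequality $|\log r|\geq 2|r-1|/(r+1)$, I would split the domain at $r=1$: on $\{r\geq 1\}$ the pointwise bound $r(\log r)^2\geq c(r-1)^2$ holds directly, while on $\{r<1\}$ the local deficit is compensated globally by using $\mathbb{E}_{P_i}[r-1]=0$ to cancel the first-order mismatch between the two regimes, with the remaining second-order error absorbed into the $-E_i^2$ slack on the left-hand side of the statement.

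I expect the compensation step in the $r<1$ regime to be the delicate part of the argument, and indeed it explains why the corollary carries the $E_i^2$ correction rather than asserting a clean pointwise inequality: the correction term is exactly what is produced by the Jensen-type cancellation that couples the constraint $\mathbb{E}_{P_i}[r\log r]=E_i$ with the two-sided bounds on $\log r$ used to close the inequality across both regimes.
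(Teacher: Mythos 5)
Your handling of the upper bound is correct and essentially the paper's argument: both reduce the claim to the scalar inequality $t(\log t)^2\le (t-1)^2$, which the paper verifies by showing $(t-1)^2-t\log^2 t$ is convex with a double zero at $t=1$, and which you verify via $t=e^{2s}$ and $\sinh^2 s\ge s^2$; either route works, and your remark that one even retains an extra $-\tfrac{2}{\alpha}E_i^2$ of slack is accurate.

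The lower bound is where the proposal does not close, and the obstruction is not the "delicate compensation step" you defer — it is that the target inequality your own reduction produces is false. Since $E_i^2=(\mathbb{E}_{P_i}[r\log r])^2$ is exactly the squared-mean term inside $\theta_i^{\rm adv}$, the $-E_i^2$ on the left of \eqref{eq:lowersenadv} cancels against it, and the lower bound is equivalent to $c\,\mathbb{E}_{P_i}[(r-1)^2]\le\mathbb{E}_{P_i}[r(\log r)^2]$ with $c=1/\max_x r(x)$ and \emph{no} residual slack; so nothing is available to absorb the deficit on $\{r<1\}$, contrary to the hope expressed in your last paragraph. That deficit is moreover zeroth order rather than first order: as $r\to 0$ the integrand $r\log^2 r$ vanishes while $c(r-1)^2\to c$, so no cancellation driven by $\mathbb{E}_{P_i}[r-1]=0$ can repair it, and indeed the reduced inequality fails. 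Take $|\Xc|=2$, $P_0=(0.1,\,0.9)$, and $Q_\lambda=(\delta,\,1-\delta)$ with $\delta\downarrow 0$ (realizable for $\lambda\in(0,1)$ by taking $P_1(1)$ small): then $c\to 0.9$ and $\mathbb{E}_{P_0}[(r-1)^2]\to 0.1+\tfrac{0.9}{81}$, so the right side tends to $0.1$, while $\mathbb{E}_{P_0}[r\log^2 r]=\mathbb{E}_{Q_\lambda}[\log^2 r]\to(\log\tfrac{10}{9})^2\approx 0.011$. For comparison, the paper's own proof of this direction obtains the pointwise bound by squaring $x\log x\ge x-1$, a step that reverses orientation for $x<1$ — precisely the failure mode you identified — so the gap you ran into reflects a defect in the stated bound itself rather than a missing trick in your argument; a pointwise constant that does work is $\inf_x\, r(x)\log^2 r(x)/(r(x)-1)^2$, which degenerates as $\min_x r(x)\to 0$ and cannot be replaced by $1/\max_x r(x)$.
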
	

As can be seen from the above result, having distribution mismatch renders the test performance more sensitive than an adversary tampering with the observation when the divergence ball radii are equal.

\begin{example}

	Consider $\hat{P}_0 =\text{Bern}(0.1)$ ,    $\hat{P}_1 =\text{Bern}(0.8)$. By finding the optimizing distribution $Q_\lambda$ we have $\theta_0^{\rm dist}=1.136$, $\theta_0^{\rm adv} = 0.854$ and the lower bound in \eqref{eq:lowersenadv} equals to $0.151$.
	
\end{example}

Observe that for small radii, the approximations for worst case mismatch and worst case adversarial settings are accurate. In this case, by equating the approximations of mismatched and adversarial error exponents, we have that if $\frac{r^{\rm adv} }{r_i^{\rm dist} } =\frac{\theta_i^{\rm dist} }{\theta_i^{\rm adv} }$ then, the worst case error exponents under both scenarios are equal.

\subsection{Generalized Likelihood Ratio Test}

Next, we consider the worst-case Hoeffding's generalized likelihood test in the adversarial setting. Similarly to the likelihood ratio test, we assume the observer receives samples with the type $\Th'$ which satisfies $d(\Th,\Th') \leq r$ and $\Th$ is the type of the original sequence generated by the unknown hypothesis. By direct application of Sanov's theorem, we have
\begin{align}
\underline{\hat{E}}_0(r)=  \min_{\substack{ \hat{Q}: D(\hat{Q}\|P_0)\geq  \gamma \\ Q:d(Q,\hat{Q})\leq r}} D(Q\|P_0),\label{eq:testworstH1}\\
\underline{\hat{E}}_1(r)=  \min_{\substack{ \hat{Q}: D(\hat{Q}\|P_0)\leq  \gamma \\ Q:d(Q,\hat{Q})\leq r }} D(Q\|P_1),\label{eq:testworstH2}
\end{align}
In this scenario, unlike the case with distribution mismatch, the adversary can change both error exponents. It is clear that $\underline{E}_0$ is non-convex optimization, and for the $f$-divergence or the R\'enyi divergence of order $\alpha \in [0,1]$, $\underline{E}_1$ is a convex optimization and hence easy to solve. Like previous sections, we will look into the sensitivity of error exponents when the divergence ball radius is small.

\begin{theorem}\label{thm:advGLRT}
		For every 	$r\geq0$, the worst-case error exponents can be approximated as 
	\begin{equation}\label{eq:worstapproxHoef}
	\underline{\hat{E}}_i(r)= E_i -  \sqrt{ r \cdot \theta_i(P_0,P_1, \gamma) }+o(\sqrt{r_0}),  
	\end{equation}
	where  
	\begin{align}\label{eq:hoefsensamp}
	\theta_0(P_0, P_1, \gamma) &= \frac{2}{\alpha}  \max_{\substack{ \hat{Q}: D(\hat{Q}\|P_0) = \gamma}} {\rm Var}_{\hat{Q}} \bigg(\log \frac{\hat{Q}(X)}{P_0(X)}  \bigg),\\
	\theta_1(P_0,P_1,\gamma)&=\frac{2}{\alpha}  {\rm Var}_{{Q}_{\mu}} \bigg(\log \frac{{Q}_{\mu}(X)}{P_1(X)}  \bigg), \label{eq:Hoefadver1}
	\end{align}	
	are the type-\RNum{1} and type-\RNum{2}  error exponents' sensitivities of the Hoeffding's test, and ${Q}_{\mu}(X)$ is the minimizing distribution in \eqref{eq:tiltedH}. 
\end{theorem}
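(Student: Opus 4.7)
The plan is to mimic the proof strategy of Theorem~\ref{thm:lowerworstHoef}, reducing the infinite-dimensional problems \eqref{eq:testworstH1}--\eqref{eq:testworstH2} to finite-dimensional linear-quadratic programs by a local Taylor expansion around the matched minimizer(s), and then using the tilting relation satisfied by $Q_\mu$ in \eqref{eq:tiltedmu} to solve these programs in closed form. The key local fact that I will invoke is that every smooth $f$-divergence and every R\'enyi divergence of order $\alpha$ admits the expansion $d(Q+\varepsilon\,\|\,Q)=\tfrac{\alpha}{2}\sum_x \varepsilon(x)^2/Q(x)+o(\|\varepsilon\|^2)$, which turns the divergence-ball constraint $d(Q,\hat Q)\le r$ into an ellipsoidal constraint to leading order, just as in the proofs of Theorems~\ref{thm:lowerworst} and \ref{thm:lowerworstHoef}.

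For the type-\RNum{2} exponent I write $\hat Q=Q_\mu+\delta$ and $Q=Q_\mu+\delta+\varepsilon$. To leading order the objective becomes the linear form $\sum_x(\delta(x)+\varepsilon(x))\log\frac{Q_\mu(x)}{P_1(x)}$, the constraint $D(\hat Q\|P_0)\le\gamma$ becomes $\sum_x \delta(x)\log\frac{Q_\mu(x)}{P_0(x)}\le 0$, and $d(Q,\hat Q)\le r$ becomes $\tfrac{\alpha}{2}\sum_x\varepsilon(x)^2/Q_\mu(x)\le r$, together with $\sum_x\delta(x)=\sum_x\varepsilon(x)=0$. The definition of $Q_\mu$ yields the affine relation $\log\frac{Q_\mu(x)}{P_1(x)}=-\mu\log\frac{Q_\mu(x)}{P_0(x)}+c$, so after using the zero-sum constraints the objective collapses to $-\mu\sum_x(\delta+\varepsilon)(x)\log\frac{Q_\mu(x)}{P_0(x)}$. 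Since $\mu\ge 0$ and the $\delta$-constraint forces $\sum_x \delta(x)\log\frac{Q_\mu(x)}{P_0(x)}\le 0$, the contribution from $\delta$ is non-negative and is minimized at $0$. A Cauchy--Schwarz argument on $\varepsilon$ subject to the ellipsoidal constraint then produces the value $-\mu\sqrt{2r/\alpha}\sqrt{\mathrm{Var}_{Q_\mu}(\log(Q_\mu/P_0))}$, which the same affine relation rewrites as $-\sqrt{2r/\alpha}\sqrt{\mathrm{Var}_{Q_\mu}(\log(Q_\mu/P_1))}$ (the variance rescales by $\mu^2$), recovering \eqref{eq:Hoefadver1}.

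For the type-\RNum{1} exponent I fix $\hat Q$ with $D(\hat Q\|P_0)=\gamma$ and set $Q=\hat Q+\varepsilon$. The expansion gives $D(Q\|P_0)-\gamma=\sum_x \varepsilon(x)\log\frac{\hat Q(x)}{P_0(x)}+O(\|\varepsilon\|^2)$ under $\sum_x\varepsilon(x)=0$, the constraint is again ellipsoidal, and Cauchy--Schwarz yields the inner value $\gamma-\sqrt{2r/\alpha}\sqrt{\mathrm{Var}_{\hat Q}(\log(\hat Q(X)/P_0(X)))}+o(\sqrt r)$. A simple monotonicity argument rules out $\hat Q$ with $D(\hat Q\|P_0)>\gamma$ as being optimal for small $r$, since they force a strictly larger leading term, so the outer minimization reduces to maximizing the variance over the boundary $\{\hat Q:D(\hat Q\|P_0)=\gamma\}$, producing the expression \eqref{eq:hoefsensamp}.

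The main obstacle will be uniform control of the $o(\sqrt r)$ remainders across the outer optimization, particularly for the non-convex type-\RNum{1} problem where the matched minimizers form the whole surface $D(\hat Q\|P_0)=\gamma$. I will use a compactness argument to confine the optimal $(\hat Q,Q)$ to a bounded neighborhood of this surface on which the Hessians of $D(\cdot\|P_0)$, $D(\cdot\|P_1)$ and $d(\cdot,\cdot)$ are uniformly bounded; once the optimal $\varepsilon$ from Cauchy--Schwarz, which is $O(\sqrt r)$, is plugged back in, the cubic remainders contribute $O(r^{3/2})=o(\sqrt r)$. The same compactness justifies interchanging the limit $r\to 0$ with the outer optimization, completing the expansion claimed in \eqref{eq:worstapproxHoef}.
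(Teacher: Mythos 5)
Your proposal is correct and follows the same skeleton as the paper's proof: a local Taylor expansion of the objective and of the divergence-ball constraint (turning the latter into the ellipsoid $\tfrac{\alpha}{2}\sum_x\varepsilon(x)^2/\hat Q(x)\le r$), a Cauchy--Schwarz/KKT solution of the inner linear--quadratic program giving the inner value $D(\hat Q\|P_i)-\sqrt{(2r/\alpha)\,\mathrm{Var}_{\hat Q}(\log(\hat Q/P_i))}+o(\sqrt r)$, and then the outer optimization — which for the type-\RNum{1} exponent reduces, exactly as in the paper, to maximizing the variance over the boundary $\{\hat Q: D(\hat Q\|P_0)=\gamma\}$. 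The one place you genuinely diverge is the type-\RNum{2} exponent: the paper solves the inner problem for fixed $\hat Q$ and then invokes the envelope theorem, differentiating the resulting value function with respect to $\sqrt r$ at $r=0$ and evaluating at the matched minimizer $Q_\mu$, whereas you perform an explicit joint perturbation $\hat Q=Q_\mu+\delta$, $Q=Q_\mu+\delta+\varepsilon$ and use the affine tilting identity $\log(Q_\mu/P_1)=-\mu\log(Q_\mu/P_0)+c$ to show the $\delta$-direction cannot decrease the objective to first order (since $\mu\ge 0$ and the linearized constraint forces $\sum_x\delta(x)\log(Q_\mu(x)/P_0(x))\le 0$), after which the $\mu^2$ rescaling of the variance recovers \eqref{eq:Hoefadver1}. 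The two arguments are mathematically equivalent; yours is more elementary and makes visible \emph{why} the correction is evaluated at $Q_\mu$, at the cost of needing to argue (as you do implicitly) that the optimizing $\hat Q$ stays near $Q_\mu$ for small $r$, which the envelope theorem packages away. Your closing paragraph on uniform control of the remainders across the non-convex type-\RNum{1} outer maximization is a point the paper's proof glosses over entirely, and is a worthwhile addition.
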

In this scenario, both exponents will be affected by a change in the observation type as opposed to distribution mismatch, where only the first exponent would change as a result of the mismatch. Similarly to the likelihood ratio test, we have that distribution mismatch is more sensitive than adversarial observation perturbation.
\begin{corollary}\label{cor:sen_hoef_comapre}
	For every $P_0, P_1 \in \Pc(\Xc)$ we have
	\begin{equation}\label{eq:lowersen}
 \theta_0^{\rm adv}\leq\theta_0^{\rm dist},
\end{equation}
	where $\theta_0^{\rm dist}, \theta_0^{\rm adv}$ are the generalized likelihood ratio test distribution and sample sensitivities in \eqref{eq:hoefsen} and \eqref{eq:hoefsensamp}.	
\end{corollary}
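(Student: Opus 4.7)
The plan is to show that, under the natural identification $\hat P_0 = P_0$ and $\hat\gamma = \gamma$, both sensitivities in \eqref{eq:hoefsen} and \eqref{eq:hoefsensamp} are maxima of different functionals of a probability measure $Q$ over the \emph{same} feasible set $\{Q\in\Pc(\Xc) : D(Q\|P_0) = \gamma\}$. It therefore suffices to establish the pointwise inequality
\begin{equation*}
{\rm Var}_{Q}\!\left(\log \frac{Q(X)}{P_0(X)}\right) \leq {\rm Var}_{P_0}\!\left(\frac{Q(X)}{P_0(X)}\right)
\end{equation*}
for every feasible $Q$, and then take the supremum on both sides.

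First, I would rewrite each side as a simple functional of the density ratio. The right-hand side collapses to the chi-squared divergence, namely ${\rm Var}_{P_0}(Q/P_0) = \sum_x Q(x)^2/P_0(x) - 1 = \chi^2(Q,P_0) = \sum_x P_0(x)(Q(x)/P_0(x) - 1)^2$. On the left, since $E_Q[\log Q/P_0] = D(Q\|P_0) = \gamma$, we have ${\rm Var}_Q(\log Q/P_0) = \sum_x Q(x)(\log Q(x)/P_0(x))^2 - \gamma^2$. Hence the desired inequality will follow at once, and in the stronger form ${\rm Var}_Q(\log Q/P_0) \leq \chi^2(Q,P_0) - \gamma^2$, from the scalar inequality $x(\log x)^2 \leq (x-1)^2$ for every $x>0$, applied with $x = Q(x)/P_0(x)$ and averaged against $P_0$.

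The only real work is this scalar inequality, which is the step I expect to be the main obstacle to write cleanly. I would substitute $x = e^t$ and show that $g(t) \defeq (e^t - 1)^2 - t^2 e^t \geq 0$ for all $t\in\R$. Differentiation gives $g'(t) = e^t\, h(t)$ with $h(t) = 2e^t - 2 - 2t - t^2$, $h'(t) = 2(e^t - 1 - t)$, and $h''(t) = 2(e^t - 1)$. Since $h''$ has the same sign as $t$, the function $h'$ attains its global minimum at $t=0$ with value $h'(0) = 0$, so $h' \geq 0$ on $\R$; thus $h$ is nondecreasing with $h(0) = 0$, and consequently $h$ (and therefore $g'$) shares the sign of $t$. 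This forces $g$ to attain its global minimum $g(0) = 0$ at $t=0$, giving $g(t) \geq 0$ for all $t$.

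With the scalar inequality in hand, averaging against $P_0$ yields $\sum_x Q(x) (\log Q(x)/P_0(x))^2 \leq \chi^2(Q, P_0)$, and subtracting the nonnegative quantity $\gamma^2$ only strengthens the bound, proving the pointwise inequality. Taking suprema over the common feasible set $\{Q : D(Q\|P_0) = \gamma\}$ then yields $\theta_0^{\rm adv} \leq \theta_0^{\rm dist}$, completing the proof. The argument also reveals the slack $\gamma^2 + (\chi^2(Q,P_0) - \sum_x Q(\log Q/P_0)^2)$, which quantifies how much less sensitive the adversarial setting is than the distribution-mismatch setting for Hoeffding's test.
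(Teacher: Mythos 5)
Your proposal is correct and follows essentially the same route as the paper: the paper derives this corollary by applying the pointwise bound from Corollary \ref{cor:samp_dist} — which rests on exactly the scalar inequality $x\log^2 x\leq (x-1)^2$ applied to the density ratio, with the $-D^2(Q\|P_0)$ term only strengthening the bound — to every $Q$ in the common constraint set $\{Q: D(Q\|P_0)=\gamma\}$ and then taking maxima. The only cosmetic difference is how the scalar inequality is verified (your substitution $x=e^t$ with iterated sign analysis versus the paper's direct convexity argument for $f_1(x)=(x-1)^2-x\log^2 x$).
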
	
It is easy to see that Corollary \ref{cor:sen_hoef_comapre} is an immediate result of Corollary \ref{cor:samp_dist}  by taking the maximum of the both distribution and sample sensitivities over all distributions $\hat{Q}$ such that $D(\hat{Q}\|P_0) = \gamma$.  

\subsection{Sequential Probability Ratio Test}

Finally, we study the impact of adversarial observation perturbation  on the error exponents of the sequential probability ratio test due to sample mismatch. In this setting, we assume that the adversary can perturb the received sequence type in a divergence ball with a specific radius when the tests stop. 
Formally, the adversary can observe all the future samples $(x_i) _{i=1}^{\infty}$ and alter the samples to $(x'_i) _{i=1}^{\infty}$, with the constraint that for any positive $n$ the sample types and perturbed sample types of  $(x_i) _{i=1}^{n}$ ,  $(x'_i) _{i=1}^{\infty}$ satisfy $d(\Th,\Th')\leq r$, and the sequential probability ratio test calculates the log-likelihood ratio using the perturbed samples, i.e.,
\begin{align}
\underline{\hat{S}}_{n}=\sum_{i=1}^n \log \frac{{P}_0(x_i')}{{P}_1(x_i')}.
\end{align}
The worst-case error exponent trade-off in the adversarial setting can be defined as
\begin{align}
 \underline{\hat{E}}_1(\underline{\hat{E}}_0,r) \triangleq &\inf_{d(\Th,\Th')\leq r} \sup \Big \{ \underline{\hat{E}}_1 \in \mathbb{R}_{+}: \exists \gamma_0, \gamma_1,  \nonumber \\
& \exists n_0, n_1  \in \ZZ_{+},  \text{s.t.}    \mathbb{E}_{P_0} [\underline{\hta}]   \leq n_0, \mathbb{E}_{P_1}[\underline{\hta}]  \leq n_1, \nonumber \\
 &~~~~~~~\epsilon_0({\Phi}) \leq 2^{- n_0  \underline{\hat{E}}_0}  ~ \text{and} ~  \epsilon_1({\Phi}) \leq 2^{- n_1  \underline{\hat{E}}_1} \Big \}.
\end{align}
where the worst case stopping time is defined as 
\begin{align}
\underline{\hta}=\inf \{n\geq1: {\underline{\hat{S}}_n} \geq{ \gamma}_0  ,  {\underline{\hat{S}}_n} \leq{ \gamma}_1  \}.
\end{align}  
 This model is the worst possible adversarial setting as the adversarial can see future samples and change the whole sequence until the stopping time to maximize the stopping time and also minimize the error exponents. We find a lower bound on the error exponents and an upper bound on the average stopping time in this adversarial setting. 

\begin{theorem}\label{thm:SPRTadver}
	For every $r\geq0$, $i\in\{0,1\}$, define $\bar\imath=1-i$ to be the complement of index $i$. The worst-case error exponents can be approximated as 
	\begin{align}
	\underline{\hat{E}}_0 (r)	\underline{\hat{E}}_1 (r)  \geq& \Big(D(P_1\|P_0)-2\sqrt{r\cdot \theta_0(P_0,P_1)  }\Big) \times  \nonumber \\
	&\Big(D(P_0\|P_1)-2\sqrt{r\cdot \theta_1(P_0,P_1)  }\Big)   +o(\sqrt{r}).
	\end{align}
	where 
	\begin{equation}
	\theta_i(P_0,P_1)=\frac{2}{\alpha} \text{Var}_{P_{\bar\imath}} \Bigg(\log \frac{P_{\bar\imath}}{{P}_i}  \Bigg) 
	\end{equation}

\end{theorem}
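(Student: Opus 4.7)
The plan is to reduce the adversarial sequential analysis to the mismatched-distribution SPRT of Section \ref{sec:sequentialHT} and then to bound the resulting product of relative entropies by a Taylor expansion. For any fixed adversarial strategy, the perturbed sequence $\xv'$ has an empirical type $\hat{T}_{\xv'}$ that, under hypothesis $H_i$, lies in $\Bc(P_i,r)$ by the perturbation constraint together with the concentration of $\hat{T}_{\xv}$ at $P_i$. Denoting the limiting effective type by $P_i'$, running the SPRT based on $(P_0,P_1)$ on the perturbed samples is precisely the mismatched SPRT of Theorem \ref{thm:seqMM} with the substitutions $\hat P_j\leftrightarrow P_j$ and $P_j\leftrightarrow P_j'$. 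For $r$ small enough, the positive-drift condition \eqref{eq:posdrift} holds by continuity, and Corollary \ref{cor:exp} applied under the tradeoff \eqref{eq:tradeseqMM2} gives
\begin{equation*}
\underline{\hat E}_0(r)\,\underline{\hat E}_1(r) \;\geq\; \min_{\substack{P_0'\in\Bc(P_0,r)\\ P_1'\in\Bc(P_1,r)}} D(P_0'\|P_1')\,D(P_1'\|P_0').
\end{equation*}

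I then lower-bound each factor by first-order Taylor expansion in $\delta_i=P_i'-P_i$. Direct differentiation yields
\begin{equation*}
D(P_0'\|P_1') = D(P_0\|P_1) + \sum_x \delta_0(x)\log\tfrac{P_0(x)}{P_1(x)} - \sum_x \delta_1(x)\tfrac{P_0(x)}{P_1(x)} + O(r),
\end{equation*}
and an analogous expansion for $D(P_1'\|P_0')$. The constraint $d(P_i,P_i')\leq r$, for either the R\'enyi divergence of order $\alpha$ or an $f$-divergence with $f''(1)=\alpha$, reduces to leading order to $\tfrac{\alpha}{2}\sum_x\delta_i(x)^2/P_i(x)\leq r$ together with $\sum_x\delta_i(x)=0$. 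Subtracting the mean under $P_i$ from each linear functional (which is free by mean-zero) and applying Cauchy--Schwarz produces bounds of the order $\sqrt{r\,\theta_1(P_0,P_1)}$ on each of the two contributions. Combining them gives $D(P_0'\|P_1')\geq D(P_0\|P_1)-2\sqrt{r\,\theta_1(P_0,P_1)}+o(\sqrt r)$, and a symmetric argument on $D(P_1'\|P_0')$ provides $D(P_1'\|P_0')\geq D(P_1\|P_0)-2\sqrt{r\,\theta_0(P_0,P_1)}+o(\sqrt r)$.

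Multiplying these two factor-wise bounds and noting that the cross term is $O(r)=o(\sqrt r)$ delivers the claimed inequality. The main obstacle is the second step: the two terms in the Taylor expansion of $D(P_0'\|P_1')$ are structurally distinct, one weighted by $\log P_0/P_1$ and the other by $P_0/P_1$, so collecting them into the single homogeneous sensitivity $\theta_1$ requires a carefully chosen Cauchy--Schwarz step that sacrifices some tightness in exchange for the clean factor of $2$ in the statement. A secondary technical subtlety is verifying that a causal adversary, who can observe all future samples before deciding the perturbation, cannot exceed asymptotically what a static type-level perturbation already achieves; this follows from standard method-of-types arguments since the SPRT depends on the observed sequence only through its running empirical distribution.
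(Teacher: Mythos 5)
Your proposal has two genuine gaps, and the second one is fatal to the stated bound. First, the reduction to Theorem \ref{thm:seqMM}/Corollary \ref{cor:exp} is not legitimate. The adversary produces a perturbed sequence that is \emph{not} i.i.d.\ from any ``limiting effective type'' $P_i'$: it may vary the perturbation with time and with the realized sample path, subject only to the running-type constraint. Theorem \ref{thm:seqMM} is a renewal-theoretic statement about i.i.d.\ random walks (overshoot convergence, Wald-type identities), so it cannot be invoked by substituting $\hat P_j\leftrightarrow P_j$, $P_j\leftrightarrow P_j'$; the paper instead controls $\epsilon_0$ by a method-of-types union bound over all stopping times $t$ and all type pairs $(\hat Q,Q)$ with $d(Q,\hat Q)\le r$, using a threshold inflated by $\tfrac{|\Xc|+2}{\lambda^*}\log(t+1)$ to make the sum over $t$ converge. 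Moreover, Corollary \ref{cor:exp} attains the product $D(P_0'\|P_1')D(P_1'\|P_0')$ only when the thresholds are tuned to the (here unknown, adversary-chosen) generating measures via \eqref{eq:threshMM1}--\eqref{eq:threshMM2}; in the adversarial model the thresholds are set from $P_0,P_1$, the adversary therefore inflates $\mathbb{E}[\underline{\hta}]$, and this stopping-time inflation is precisely one of the two sources of the $\sqrt{r\theta_i}$ loss. Your route discards that mechanism: the factor $2$ in the theorem comes from (error-exponent loss) $+$ (stopping-time inflation), each contributing one $\sqrt{r\theta_i}$, not from two Taylor terms of a single relative entropy.

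Second, even granting the reduction, the Cauchy--Schwarz step you defer to does not exist. Expanding $D(P_0'\|P_1')$ to first order, the $\delta_0$-term gives $\bigl|\sum_x\delta_0(x)\log\tfrac{P_0(x)}{P_1(x)}\bigr|\le\sqrt{r\,\theta_1}$ as you say, but the $\delta_1$-term is weighted by $P_0/P_1$ and Cauchy--Schwarz under the constraint $\tfrac{\alpha}{2}\sum_x\delta_1(x)^2/P_1(x)\le r$ yields $\sqrt{\tfrac{2r}{\alpha}\mathrm{Var}_{P_1}\bigl(\tfrac{P_0}{P_1}\bigr)}$, a chi-square--type quantity. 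This is \emph{not} $\sqrt{r\,\theta_1}$ and is not dominated by it: by the same inequality $x\log^2 x\le(x-1)^2$ used in your own Corollary \ref{cor:samp_dist}, one has $\mathrm{Var}_{P_0}\bigl(\log\tfrac{P_0}{P_1}\bigr)\le\mathrm{Var}_{P_1}\bigl(\tfrac{P_0}{P_1}\bigr)$ up to the $D^2$ correction, i.e.\ the chi-square term is generically the \emph{larger} one. Consequently $\min_{P_i'\in\Bc(P_i,r)}D(P_0'\|P_1')D(P_1'\|P_0')$ has first-order deficit involving $\mathrm{Var}_{P_{\bar\imath}}(P_i/P_{\bar\imath})$ rather than $\theta_i$, and is in general strictly smaller than the right-hand side of the theorem; your intermediate bound therefore does not imply the claimed inequality. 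To recover the theorem you must follow the structure of the actual argument: bound $\epsilon_0$ via the types union bound, identify the minimizing stopping time $t^*(r=0)=n$ and minimizer $Q_\lambda=P_1$ by the envelope theorem (which is what produces $\mathrm{Var}_{P_1}(\log P_1/P_0)$ rather than a chi-square), separately bound the adversarially inflated $\mathbb{E}_{P_i}[\underline{\hta}]$, and only then form the product.
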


Theorem \ref{thm:SPRTadver} provides only a lower bound to the worst-case error exponents in the adversarial setting. Thus, we cannot compare the sensitivities derived in the mismatched case with the adversarial worst-case sensitivity.

\appendices

\section{Proof of Theorem \ref{thm:mismatchLRT}}\label{apx:TmismatchLRT}

We show the result  for $	\hat{E}_0$ and similar steps are valid for $\hat{E}_1$. The type-\RNum{1} probability of error can be written as 
\begin{align}	
\hat{\epsilon}_0  = \sum_{\substack{\bx\in\Xc^n\\D(\Th\|\hat{P}_0)-D(\Th\|\hat{P}_1)  \geq \hat{\gamma}}} P_{0}^n(\xv).
\label{eq:tailmismatch}
\end{align}
Applying Sanov's Theorem  to \eqref{eq:tailmismatch} to get  \eqref{eq:LRTmis1} is immediate. The optimization problem in  \eqref{eq:LRTmis1}  consists of the minimization of a convex function over linear constraints. Therefore, the KKT conditions are also sufficient \cite{Boyd}. Writing the Lagrangian, we have
\begin{align}\label{eq:lagrangeMM}
L(Q,\lambda,\nu)= &D(Q\|P_0) + \lambda \big ( D(Q\|\hat{P}_1)-D(Q\|\hat{P}_0) +\hat{\gamma} \big )  \nonumber \\
&+\nu \Big ( \sum_{x\in \Xc} Q(x)-1 \Big).
\end{align}
Differentiating with respect to $Q(x)$ and setting to zero we have
\begin{equation}\label{eq:lagrangeder}
1+\log \frac{Q(x)}{P_0(x)} +\lambda \log \frac{\hat{P}_0(x)}{\hat{P}_1(x)} + \nu=0.
\end{equation} 
Solving equations \eqref{eq:lagrangeder} for every $x\in\Xc$ we obtain \eqref{eq:tiltedMM1}. Moreover, from the complementary slackness condition if \cite{Boyd}
\begin{equation}\label{eq:threshcondition}
D(P_0\|\hat{P}_0)- D(P_0\|\hat{P}_1) \leq \hat{\gamma},
\end{equation}
then \eqref{eq:KKTgamma1} should hold. Otherwise, if \eqref{eq:threshcondition} does not hold then $\lambda$ in \eqref{eq:lagrangeder} should be zero and hence $\hat{Q}_{\lambda_0}=P_0$, $\hat{E}_0=0$.  Finally, substituting the minimizing distribution $\hat{Q}_{\lambda_0}$ \eqref{eq:tiltedMM1} into \eqref{eq:lagrangeMM} we get the dual expression
\begin{equation}\label{eq:lagrange}
g(\lambda)= \lambda \hat{\gamma} - \log \Big (  \sum_{x\in \Xc} P_0 \hat{P}_0^{-\lambda}(x) \hat{P}_1^{\lambda}(x) \Big ).  
\end{equation}
Since the optimization problem in \eqref{eq:LRTmis1} is convex, then the duality gap is zero \cite{Boyd}, and this proves the \eqref{eq:dual}. 
%

\section{Proof of Theorem \ref{thm:stein}}\label{apx:Tstein}

For convenience, in this section, we make explicit the dependence of  $\hat{\Qc}_1$ and $\hat{E}_1$ on the threshold of the test $\hat \gamma$, and denote them by $\hat{\Qc}_1({\hat{\gamma}})$ and $\hat{E}_1({\hat{\gamma}})$.

First, notice that  $\hat{E}_1$ is a non-increasing function of $\hat{\gamma}$  since for every $\hat{\gamma}_1 \leq \hat{\gamma}_2 $  we have
\begin{equation}
\hat{\Qc}_1({\hat{\gamma}_1}) \subset   \hat{\Qc}_1({\hat{\gamma}_2}),
\end{equation}
hence 
\begin{equation}
\hat{E}_1({\hat{\gamma}_2}) \leq \hat{E}_1({\hat{\gamma}_1}).
\end{equation}
Therefore, in the Stein's regime we are looking for the smallest threshold such that $\limsup_{n\rightarrow \infty}  \hat{\epsilon}_0  \leq \epsilon$. Let
\begin{equation}\label{eq:steinthresh}
\hat{\gamma}= D(P_0\|\hat{P}_0) - D(P_0\|\hat{P}_1) -   \sqrt {\frac{ V(P_0, \hat{P}_0,\hat{P}_1) } {n} }    \Qsf^{-1}(\epsilon),
\end{equation}	
where 
\begin{align}
&V(P_0, \hat{P}_0,\hat{P}_1)= {\rm Var}_{P_0}\bigg[   \log \frac{\hat{P}_0}{\hat{P}_1} \bigg ] \nonumber\\
 &= \sum_{x\in \Xc} P_0(x)  \bigg ( \log \frac{\hat{P}_0}{\hat{P}_1} \bigg )^2 - 
\big ( D(P_0\|\hat{P}_1) - D(P_0\|\hat{P}_0)   \big)^2,
\end{align}
and $\Qsf^{-1}(\epsilon)$ is the inverse cumulative distribution function of a zero-mean unit-variance Gaussian random variable. For such $\hat{\gamma}$, the type-\RNum{1}  error probability of the mismatched likelihood ratio test is
\begin{align}	
\hat{\epsilon}_0=&\PP_0 \Bigg [\frac{1}{n} \sum_{i=1}^{n} \log \frac{\hat{P}_0(X_i)}{\hat{P}_1(X_i)}  \leq  
D(P_0\|\hat{P}_1) - D(P_0\|\hat{P}_0)    \nonumber \\
&+\sqrt {\frac{ V(P_0, \hat{P}_0,\hat{P}_1) } {n} }    \Qsf^{-1}(\epsilon) \Bigg ].
\end{align}
Observe that $D(P_0\|\hat{P}_1) - D(P_0\|\hat{P}_0) =  \mathbb{E}_{P_0} \Big [ \log \frac{\hat{P}_0(X)}{\hat{P}_1(X)} \Big ]$. Let $  \hat{S}_n=\frac{1}{n}\sum_{i=1}^n \hat\imath(x_i)$, where $\hat\imath(x_i)=\log \frac{\hat{P}_0(x_i)}{\hat{P}_1(x_i)}$.  Letting $Z$ be a zero-mean unit-variance Gaussian random variable, then, by the central limit theorem we have
\begin{align}
&\limsup_{n\rightarrow \infty}  \hat{\epsilon}_0   &\notag\\
&= \limsup_{n\rightarrow \infty} \PP_0\Bigg [ \frac{ \sqrt{n} \big ( \hat{S}_n- \mathbb{E}_{P_0} [\hat\imath(X)] \big )}{\sqrt {V(P_0, \hat{P}_0,\hat{P}_1) }  }   \leq    \Qsf^{-1}(\epsilon)     \Bigg]\\
&=\Pp\big [Z \leq \Qsf^{-1}(\epsilon) \big]\\
&= \epsilon. 
\end{align} 
Therefore, asymptotically, the type-\RNum{1} error probability of mismatched likelihood ratio test with $\hat{\gamma}$ in \eqref{eq:steinthresh} is equal to $\epsilon$. 

Next, we need to show that for any threshold $\hat{\gamma}$ and $\varepsilon>0$ such that
\begin{equation}\label{eq:limsupthresh}
\limsup_{n \rightarrow \infty} \hat{\gamma} +\varepsilon\leq D(P_0\|\hat{P}_0) - D(P_0\|\hat{P}_1),
\end{equation}
the type-\RNum{1} probability of error tends to $1$ as the number of observation approaches infinity, which implies that $D(P_0\|\hat{P}_0) - D(P_0\|\hat{P}_1)$ is the lowest possible threshold that meets the constraint $\limsup_{n\rightarrow \infty}  \hat{\epsilon}_0  \leq \epsilon$. Hence, the corresponding $\hat E_1({\hat\gamma})$ is this highest type-\RNum{2} exponent that meets the constraint.
In order to show this, define the following sets 
\begin{align}
\Ec_{\delta} = \Big \{ \bx\in\Xc^n:\, & \| \Th(x) -P_0(x)\|_\infty   < \delta \Big     \},\\   
\Dc = \big\{  \bx\in\Xc^n: \,  & \big|D(\Th\|\hat{P}_0)-D(\Th\|\hat{P}_1)  \nonumber \\
&-D(P_0\|\hat{P}_0)+D(P_0\|\hat{P}_1) \big | <  \varepsilon \big\}, \\
\bar \Dc = \big\{  \bx\in\Xc: \, &D(\Th\|\hat{P}_0)-D(\Th\|\hat{P}_1)  \nonumber \\
 &- D(P_0\|\hat{P}_0)+D(P_0\|\hat{P}_1) \geq - \varepsilon \big\}.
\end{align}
where $\|.\|_{\infty}$ is the norm infinity. From the continuity of $D(.\|\hat{P})$ we have that for any $\varepsilon >0$ such that
\begin{equation}\label{eq:epsilondelta}
\big |D(\Th\|\hat{P}_0)-D(\Th\|\hat{P}_1) - D(P_0\|\hat{P}_0)+D(P_0\|\hat{P}_1) \big  | < \varepsilon. 
\end{equation} 
there exists $\delta>0$ such that for all $\Th$ satisfying
\begin{equation}
\| \Th(x) -P_0(x)\|_\infty   < \delta, 
\end{equation}
\eqref{eq:epsilondelta} holds. Therefore, when  \eqref{eq:limsupthresh} holds 
\begin{align}
\liminf_{n\rightarrow \infty} \epsilon_0 (\hat{\phi} ) \geq&  \liminf_{n\rightarrow \infty}  \sum_{x\in\bar \Dc} P_0^n(\xv)\\
\geq  &\liminf_{n\rightarrow \infty}   \sum_{x\in \Dc} P_0^n(\xv).   
\end{align}
Now from the continuity argument, there exists a $\delta$ such that 
\begin{equation}
\sum_{x\in\Dc} P_0^n(\xv) \geq   \sum_{x\in\Ec_{\delta}} P_0^n(\xv).
\end{equation}
Set $\delta_n=\sqrt{\frac{\log n}{n}}$. Thus, for sufficiently large $n$, $\delta_n \leq \delta$, Therefore, we have 
\begin{align}
\liminf_{n\rightarrow \infty} \epsilon_0 (\hat{\phi} )  &\geq \liminf_{n \rightarrow \infty} \sum_{\xv\in\Ec_{\delta_n}} P_0^n(\xv)\\
%
&\geq \lim_{n \rightarrow \infty} 1- \frac{2|\Xc|}{n}\\
&=1.
\end{align}
where the last step is by Hoeffding's inequality \cite{Hoeffdingineq} and union bound. Therefore, for any $\hat{\gamma} < D(P_0\|\hat{P}_0) - D(P_0\|\hat{P}_1)$ type-\RNum{1} error goes to unity which concludes the theorem.


\section{Proof of Theorem \ref{thm:lowerworst}}\label{apx:Tlowerworst}

We will use the following two lemmas which are the local approximation of the R\'enyi Divergences and $f$-divergences.
\begin{lemma}\cite{Harremos} 
Let $P$ and $Q$ be two probability distributions over the same alphabet $\Xc$, the R\'enyi divergence of order $\alpha$ can be locally approximated by
\begin{equation}
D_\alpha(P\|Q)=\frac{\alpha}{2} \sum_{x\in\Xc} \frac{\big(P(x)-Q(x)\big )^2}{P(x)}+ o(\|P(x)-Q(x))\|^2).
\end{equation}
\end{lemma}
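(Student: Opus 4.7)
The plan is to establish the quadratic expansion directly from the definition of R\'enyi divergence, $D_\alpha(P\|Q)=\frac{1}{\alpha-1}\log\sum_{x\in\Xc}P(x)^\alpha Q(x)^{1-\alpha}$, by writing $P(x)=Q(x)+\epsilon(x)$ with $\sum_x\epsilon(x)=0$ and $\|\epsilon\|\to 0$. I would first factor $P(x)^\alpha Q(x)^{1-\alpha}=Q(x)(1+\epsilon(x)/Q(x))^\alpha$ and invoke the scalar Taylor expansion $(1+u)^\alpha=1+\alpha u+\tfrac{\alpha(\alpha-1)}{2}u^2+o(u^2)$, which is valid since $\epsilon(x)/Q(x)\to 0$ under the assumption that $Q(x)>0$ for all $x\in\Xc$.

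Summing over $x$ and using $\sum_x\epsilon(x)=0$, the first-order term vanishes and one obtains
\begin{equation}
\sum_{x\in\Xc}P(x)^\alpha Q(x)^{1-\alpha}=1+\frac{\alpha(\alpha-1)}{2}\sum_{x\in\Xc}\frac{\epsilon(x)^2}{Q(x)}+o(\|\epsilon\|^2).
\end{equation}
Applying $\log(1+u)=u+o(u)$ to the right-hand side and dividing by $\alpha-1$ yields
\begin{equation}
D_\alpha(P\|Q)=\frac{\alpha}{2}\sum_{x\in\Xc}\frac{(P(x)-Q(x))^2}{Q(x)}+o(\|P-Q\|^2).
\end{equation}
The denominator may be replaced by $P(x)$ without changing the order of the remainder, since $\frac{(P-Q)^2}{P}-\frac{(P-Q)^2}{Q}=-\frac{(P-Q)^3}{PQ}=o(\|P-Q\|^2)$ whenever $P,Q$ are bounded away from zero.

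The case $\alpha=1$, for which $D_\alpha$ is defined by continuity as the KL divergence, can be handled either by taking $\alpha\to 1$ in the above (which passes to the limit cleanly since the coefficient $\alpha/2$ is continuous and the remainder is uniform on compact $\alpha$-intervals), or by an independent expansion $D(P\|Q)=\sum_x P(x)\log(1+\epsilon(x)/Q(x))=\tfrac12\sum_x\epsilon(x)^2/Q(x)+o(\|\epsilon\|^2)$ using $\log(1+u)=u-u^2/2+o(u^2)$ and again $\sum_x\epsilon(x)=0$. The main technical nuance is uniformity of the Taylor remainder across $x\in\Xc$; this is not an obstacle here because $\Xc$ is finite and $Q(x)$ is bounded below by a positive constant, so the $o(\cdot)$ terms can be summed without losing their order.
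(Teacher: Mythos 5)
Your proposal is correct. Note, however, that the paper does not prove this lemma at all: it is imported verbatim from the cited reference (van Erven and Harremo\"es), so there is no in-paper argument to compare against. Your derivation --- factoring $P^\alpha Q^{1-\alpha}=Q\,(1+\epsilon/Q)^\alpha$, expanding to second order so that $\sum_x\epsilon(x)=0$ kills the linear term, applying $\log(1+u)=u+o(u)$, dividing by $\alpha-1$, and then swapping $Q$ for $P$ in the denominator at cost $o(\|\epsilon\|^2)$ --- is the standard route and is sound for all $\alpha>0$, $\alpha\neq 1$, given the paper's standing assumption that the distributions are strictly positive on a finite alphabet. Your separate direct expansion for the $\alpha=1$ (relative entropy) case is also correct and is the safer of the two options you offer, since the claimed uniformity of the remainder over compact $\alpha$-intervals is asserted rather than shown; with the independent KL computation in hand that gap is immaterial.
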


\begin{lemma}\cite{Csiszar}
	Let $P$ and $Q$ be two probability distributions over the same alphabet $\Xc$, and let the convex function $f(t)$ to be twice differentiable at $t = 1$. Then the $f$-divergence using such function can be locally approximated by
	\begin{align}
	D_f(P\|Q)&=\frac{f''(1)}{2} \sum_{x\in\Xc} \frac{\big(P(x)-Q(x)\big )^2}{P(x)}\notag\\&~~~~~+ o(\|P(x)-Q(x))\|^2)
	\end{align}
\end{lemma}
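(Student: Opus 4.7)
My plan is to prove this by a direct Taylor expansion of $f$ around $t=1$, exploiting the fact that $\Xc$ is finite (so uniform error bounds come for free) and that $\sum_x Q(x)$-weighted averages of the first-order term vanish because $P$ and $Q$ are both probability distributions. The role of the hypothesis that $f$ is convex is minor here; the essential assumption is that $f$ is twice differentiable at $t=1$ with $f(1)=0$ (the standard normalization for $f$-divergences, which can always be enforced by replacing $f(t)$ with $f(t)-f(1)$ without changing $D_f$, since $\sum_x Q(x)=1$).

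First I would set $\delta_x \triangleq P(x) - Q(x)$ and $t_x \triangleq P(x)/Q(x)$, so that $t_x - 1 = \delta_x/Q(x)$ and $\|P-Q\|\to 0$ forces $t_x\to 1$ uniformly in $x$. By Taylor's theorem,
\begin{equation}
f(t_x) = f'(1)(t_x-1) + \tfrac{1}{2} f''(1)(t_x-1)^2 + r(t_x),
\end{equation}
where $r(t_x) = o\big((t_x-1)^2\big)$ as $t_x\to 1$. Substituting into $D_f(P\|Q)=\sum_x Q(x) f(t_x)$, the linear term gives $f'(1)\sum_x Q(x)(t_x-1) = f'(1)\sum_x \delta_x = 0$, killing the first-order contribution. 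The quadratic term yields
\begin{equation}
\tfrac{1}{2} f''(1) \sum_{x\in\Xc} Q(x)(t_x-1)^2 = \tfrac{1}{2} f''(1) \sum_{x\in\Xc} \frac{\delta_x^2}{Q(x)},
\end{equation}
which is the claimed main term but with $Q(x)$ in the denominator rather than $P(x)$.

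Next I would show the two denominators can be interchanged at this order. Writing
\begin{equation}
\frac{\delta_x^2}{Q(x)} - \frac{\delta_x^2}{P(x)} = \frac{\delta_x^2(P(x)-Q(x))}{P(x)Q(x)} = \frac{\delta_x^3}{P(x)Q(x)},
\end{equation}
and summing over the finite alphabet, this is $O(\|\delta\|_\infty^3) = o(\|\delta\|^2)$ under our smallness assumption (using that $P(x),Q(x)$ are bounded away from $0$ in a neighborhood of the reference distribution, which holds because the fixed $P$ is assumed strictly positive in the paper's setup; for vanishing components one would restrict the sum to the support of $Q$ as usual). Finally I would handle the remainder: because $\Xc$ is finite, $\sum_x Q(x)|r(t_x)| \le \max_x |r(t_x)|$, and each $r(t_x) = o((t_x-1)^2) = o(\delta_x^2/Q(x)^2)$, so $\sum_x Q(x) r(t_x) = o(\|\delta\|^2)$ uniformly. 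Combining these three pieces yields the stated local approximation.

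The only genuinely delicate point is the uniformity of the $o(\cdot)$ remainder; because $|\Xc|$ is finite and the expansion is pointwise twice-differentiable at $t=1$, this reduces to taking a maximum over finitely many Taylor remainders, so it is essentially bookkeeping rather than a real obstacle. If one wanted to state the result without the positivity assumption on $Q$, care would be needed near zeros of $Q$, but since the paper assumes $\hat P_i(x)>0$ throughout and applies the lemma only in a neighborhood of such distributions, this issue does not arise here.
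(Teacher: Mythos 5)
Your argument is correct. Note that the paper does not prove this lemma at all --- it is stated as a known result with a citation to Csisz\'ar, so there is no in-paper proof to compare against; what you have written is a self-contained derivation of the cited fact. The substance checks out: with the paper's convention $D_f(P\|Q)=\sum_x Q(x)f\bigl(P(x)/Q(x)\bigr)$, the second-order Taylor expansion with Peano remainder (which needs only twice differentiability \emph{at} $t=1$, as you say) gives a vanishing linear term because $\sum_x\bigl(P(x)-Q(x)\bigr)=0$, a main term $\tfrac12 f''(1)\sum_x \delta_x^2/Q(x)$, and your observation that swapping $Q(x)$ for $P(x)$ in the denominator costs only $\sum_x \delta_x^3/(P(x)Q(x))=o(\|\delta\|^2)$ is exactly the right way to reconcile your natural output with the form stated in the lemma (the two quadratic forms are interchangeable at this order, and the paper indeed uses them interchangeably). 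Two small points worth making explicit if you write this up: (i) the normalization $f(1)=0$ must be invoked, since otherwise a constant $f(1)\sum_x Q(x)=f(1)$ survives; you do address this. (ii) The uniformity of the remainder is even easier than you suggest --- there is a single function $r(t)=o((t-1)^2)$ and finitely many arguments $t_x$ all tending to $1$ uniformly once $Q$ is bounded below, so $\sum_x Q(x)|r(t_x)|\le \max_x \bigl|r(t_x)/(t_x-1)^2\bigr|\cdot\sum_x \delta_x^2/Q(x)=o(\|\delta\|^2)$. The positivity caveat you flag is genuinely needed and is supplied by the paper's standing assumption that the reference distributions have full support.
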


We show the result under the first hypothesis, and similar steps are valid for the second hypothesis. Also, since the second-order approximation of the  R\'enyi divergence and the family of twice differentiable $f$-divergences are only different in a constant, by setting $\alpha$ to be the order of the R\'enyi divergence and $\alpha$ to be $f''(1)$ we can prove the result for both divergences simultaneously.  Consider the first minimization in \eqref{eq:MMlower1} over $Q$, i.e.,
\begin{equation}\label{eq:LRTmis1perturb}
\hat{E}_0= \min_{ Q \in \hat{\Qc}_0  } D(Q\|P_0).
\end{equation}
Observe that by assumption, $\hat P_0(x)>0$ for each $x\in\Xc$. Therefore, for every $\alpha$ there exists a positive $\bar{r}_0$ such that $P_0(x)>0$ for every  $P_0 \in \Bc(\hat P_0,\bar{r}_0)$ (for example in the case of the R\'enyi divergence with $\alpha\geq1$, $P_0(x)>0$ for every finite $r_0$). Hence, for $P_0 \in \Bc(\hat P_0,\bar{r}_0)$, the relative entropy $D(Q\|P_0)$ is continuously differentiable in both $Q, P_0$ for some positive $\bar{r}_0$.  Moreover, the constraints in  \eqref{eq:LRTmis1perturb} are continuously differentiable with respect to $Q$ and also trivially with respect to $P_0$, since the constraints do not depend on $P_0$. Hence, the optimization in \eqref{eq:LRTmis1perturb} is minimizing a continuously differentiable function  over a compact set with continuously differentiable constraints.  Hence, by the maximum theorem \cite{Walker},
$\hat{E}_0$ is  a continuous function of $P_0$  for all $P_0 \in \Bc(\hat P_0,\bar{r}_0)$ with finite radius $\bar{r}_0$. Also, by the envelope theorem\cite{Segal} we have
\begin{equation}
\frac{ \partial \hat{E}_0 }{\partial P_0(x)}= -\frac{\hat{Q}_{\lambda_0}(x)}{P_0(x)}.  
\end{equation} 
Define the vectors
\begin{align}
\nabla \hat{E}_0 &= \bigg( -\frac{\hat{Q}_{\lambda}(x_1)}{\hat{P}_0(x_1)},\dotsc, -\frac{\hat{Q}_{\lambda}(x_{|\Xc|})}{\hat{P}_0(x_{|\Xc|})}\bigg)^T,\\
\thetav_{P_0}		 &= \big(P_0(x_1)-\hat{P}_0(x_1),\dotsc,P_0(x_{|\Xc|})-\hat{P}_0(x_{|\Xc|})\big)^T.
\end{align}			
Applying the Taylor expansion to $\hat{E}_0$ around $P_0=\hat{P}_0$, we obtain
\begin{align}\label{eq:linearapprox}
\hat{E}_0=E_0  +     \thetav_{P_0}^{T}   \nabla \hat{E}_0     + o(\| \thetav_{P_0} \|_{\infty}).
\end{align}
By substituting the expansion \eqref{eq:linearapprox}   for  the first minimization in  \eqref{eq:MMlower1} we obtain
\begin{equation}\label{eq:approxworstlrt}
\underline{\hat{E}}_0 (r_0)  = \min_{P_0 \in \Bc(\hat P_0,r_0)   }  E_0  +     \thetav_{P_0}^{T}   \nabla\hat{E}_0     + o(\|\thetav_{P_0} \|_{\infty}).
\end{equation}

Now, we further approximate the outer minimization constraint in \eqref{eq:MMlower1}. By approximating $d(\hat{P}_0, P_0 )$ we get \cite{Zheng}
\begin{equation}\label{eq:KLapprox}
d(\hat{P}_0 , P_0 ) = \frac{1}{2}  \thetav_{P_0}^T \Jm(\hat{P}_0) \thetav_{P_0} + o (\| \thetav_{P_0} \|^2_{\infty}),
\end{equation}
where 
\begin{equation}
\Jm(\hat{P}_0)=\diag\bigg( \frac{\alpha}{\hat{P}_0(x_1)},\dotsc,\frac{\alpha}{\hat{P}_0(x_{|\Xc|})}\bigg)
\end{equation}
is the Fisher information matrix. Therefore,   \eqref{eq:approxworstlrt} can be approximated as
\begin{align}\label{eq:worstapproxoptlrt}
&\underline{\hat{E}}_0 (r_0)  =  \nonumber \\
& \min_{\substack{ \frac{1}{2} \thetav_{P_0}^T \Jm(\hat{P}_0) \thetav_{P_0} +o (\| \thetav_{P_0} \|^2_{\infty}) \leq r_{0} \\ \onev^T\thetav_{P_0}=0}} \Big \{ E_0  +     \thetav_{P_0}^{T}   \nabla \hat{E}_0+o(\|\thetav_{P_0} \|_{\infty}) \Big \} \\
&~~~~~~~~~= \min_{\substack{ \frac{1}{2} \thetav_{P_0}^T \Jm(\hat{P}_0) \thetav_{P_0}  \leq r_{0} \\ \onev^T\thetav_{P_0}=0}} \Big \{ E_0  +     \thetav_{P_0}^{T}   \nabla \hat{E}_0 \Big \}+o(\sqrt{r_0}),  \label{eq:approxerrorlrt}
\end{align}	
	where to get \eqref{eq:approxerrorlrt} we have taken $o(\|\thetav_{P_0} \|_{\infty})$ out of the minimization and substitute it with $o(\|\thetav^*_{P_0}(r_0) \|_{\infty})$ where $\thetav^*_{P_0} $ is the optimizing solution to the minimization. Moreover, in approximating the inequality constraint, we incur an error of the order $o(\sqrt{r_0})$ in $\|\thetav^*_{P_0}\|_{\infty}$. Also, from the inequality constraint and the restriction it imposes on the length of the vector $\thetav_{P_0}  $ we have that $\|\thetav^*_{P_0}\|_{\infty} \leq c\sqrt{r_0}+o(\sqrt{r_0})$ where $c$ is  independent from $r_0$, from which we obtain \eqref{eq:approxerrorlrt}.

The optimization problem in \eqref{eq:approxerrorlrt} is  convex and hence the KKT conditions are sufficient.  The corresponding Lagrangian is given by
\begin{align}\label{eq:laglrt}
L(\thetav_{P_0}, \lambda,\nu) &= E_0  +     \thetav_{P_0}^{T}   \nabla \hat{E}_0   \nonumber \\
&+ \lambda \Big (\frac{1}{2}\thetav_{P_0}^T \Jm(\hat{P}_0) \thetav_{P_0}  - r_{0} \Big) +\nu ( \onev^T\thetav_{P_0} ).
\end{align}
Differentiating with respect to $\thetav_{P_0}$ and setting to zero, we have
\begin{equation}\label{eq:KKTsenlrt}
\nabla \hat{E}_0 + \lambda \Jm(\hat{P}_0)\thetav_{P_0} +\nu \onev=0.
\end{equation}
Therefore,
\begin{equation}\label{eq:deltaPsolutionlrt}
\thetav_{P_0} =-\frac{1}{\lambda}   \Jm^{-1}(\hat{P}_0) \big (\nabla \hat{E}_0   +\nu \onev \big ).
\end{equation}
Note that if $\lambda=0$ then from \eqref{eq:KKTsenlrt}    $  \nabla \hat{E}_0= -\nu \onev$ which cannot be true for thresholds satisfying \eqref{eq:threshcodsen} since $\hat{Q}_{\lambda} \neq \hat{P}_0$. Therefore, from the complementary slackness condition \cite{Boyd} the inequality constraint in \eqref{eq:approxerrorlrt} should be satisfied with equality.  By solving $\bold{1}^T\thetav_{P_0} =0 $ we get
\begin{align}
 \nu &= -  \frac{\onev^T\Jm^{-1}(\hat{P}_0) \nabla \hat{E}_0}{\onev^T\Jm^{-1}(\hat{P}_0) \onev^T  } \\
 &=\frac{\frac{1}{\alpha} \sum_{x\in\Xc} \hat{P}_0(x) \frac{\hat{Q}_{\lambda}(x)}{\hat{P}_0(x)}}{ \frac{1}{\alpha} \sum_{x\in\Xc} \hat{P}_0(x)   }\\
 &=1  
\end{align}
Also by letting $\frac{1}{2}\thetav_{P_0}^T \Jm(\hat{P}_0) \thetav_{P_0} = r_0$, we have
\begin{align}
2 r_0 {\lambda^2} &= \big (\nabla \hat{E}_0   + \onev \big )^T \Jm^{-1}(\hat{P}_0)  \Jm(\hat{P}_0)         \Jm^{-1}(\hat{P}_0) \big (\nabla \hat{E}_0   + \onev \big )\\
&=\Big (  \nabla \hat{E}_0^T \Jm^{-1}(\hat{P}_0) \nabla \hat{E}_0  \nonumber \\
&~~~~~~~~~~+ 2 \onev^T \Jm^{-1}(\hat{P}_0) \nabla \hat{E}_0  +\onev^T \Jm^{-1}(\hat{P}_0) \onev \Big )\\
&= \frac{1}{\alpha} \sum_{x\in\Xc} \hat{P}_0(x) \frac{\hat{Q}^2_{\lambda}(x)}{\hat{P}_0^2(x)} - \frac{1}{\alpha}\\
&= \frac{1}{\alpha}  {\rm Var}_{\hat{P}_0} \bigg(\frac{\hat{Q}_{\lambda}(X)}{\hat{P}_0(X)}  \bigg), 
\end{align}
and hence
\begin{equation}\label{eq:deltaPlrt}
\thetav_{P_0} =-  \sqrt{\frac{2r_0}{ \frac{1}{\alpha}  {\rm Var}_{\hat{P}_0} \bigg(\frac{\hat{Q}_{\lambda}(X)}{\hat{P}_0(X)}  \bigg)}} \Jm^{-1}(\hat{P}_0)\big  (\nabla \hat{E}_0 +\onev\big ).
\end{equation}
Substituting \eqref{eq:deltaPlrt} into  \eqref{eq:worstapproxoptlrt} yields
\begin{align}
\underline{\hat{E}}_0 (r_0) &= E_0  +     \thetav_{P_0}^{T}   \nabla \hat{E}_0 + o(\sqrt{r_0})\\
&=E_0 -  \sqrt{\frac{2r_0}{ \frac{1}{\alpha}  {\rm Var}_{\hat{P}_0} \bigg(\frac{\hat{Q}_{\lambda}(X)}{\hat{P}_0(X)}  \bigg)}} \cdot  \frac{1}{\alpha}  {\rm Var}_{\hat{P}_0} \bigg(\frac{\hat{Q}_{\lambda}(X)}{\hat{P}_0(X)}  \bigg)  \nonumber \\
&~~+ o(\sqrt{r_0})\\
&= E_0 -  \sqrt{ \frac{2r_0}{\alpha} {\rm Var}_{\hat{P}_0} \bigg(\frac{\hat{Q}_{\lambda}(X)}{\hat{P}_0(X)}  \bigg)  } + o(\sqrt{r_0})
\end{align}	
which concludes the proof.


\section{Proof of Proposition \ref{cor:varderivative}}\label{apx:Lvarderivative}

We show the result  under the first hypothesis and similar steps are valid under the second hypothesis. To prove the Theorem we need the following lemma. 
\begin{lemma}\label{lem:convex}
	Consider the following optimization problem
	\begin{equation}
	E(\gamma)= \min_{ \mathbb{E}_Q [X] \geq \gamma } D(Q\|P),
	\end{equation} 
	where $P,Q \in \Pc(\Xc)$, and $X$  takes values in $\Xc$. Then	$E(\gamma)$ is convex in $\gamma$.
\end{lemma}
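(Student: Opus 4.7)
The plan is to establish convexity by the standard feasibility-plus-convexity argument for the value function of a convex program parameterized by the right-hand side of a linear inequality constraint. First I would fix $\gamma_1, \gamma_2 \in \R$ and $\mu \in [0,1]$, and let $Q_1^{\star}, Q_2^{\star} \in \Pc(\Xc)$ denote minimizing distributions achieving $E(\gamma_1)$ and $E(\gamma_2)$ respectively; existence of these minimizers is immediate since each feasible set $\{Q\in\Pc(\Xc) : \E_Q[X] \geq \gamma_i\}$ is closed in the compact probability simplex and $D(\cdot\|P)$ is lower semi-continuous.

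The crux of the argument is that the constraint $\E_Q[X] \geq \gamma$ is \emph{linear} in $Q$, so the convex combination $Q_\mu = \mu Q_1^{\star} + (1-\mu) Q_2^{\star}$ is automatically feasible at the convex-combination threshold:
\begin{equation}
\E_{Q_\mu}[X] = \mu \E_{Q_1^{\star}}[X] + (1-\mu)\E_{Q_2^{\star}}[X] \geq \mu\gamma_1 + (1-\mu)\gamma_2.
\end{equation}
I would then invoke the well-known convexity of $D(Q\|P)$ in its first argument to obtain
\begin{equation}
E\bigl(\mu\gamma_1 + (1-\mu)\gamma_2\bigr) \leq D(Q_\mu\|P) \leq \mu D(Q_1^{\star}\|P) + (1-\mu) D(Q_2^{\star}\|P),
\end{equation}
and the right-hand side equals $\mu E(\gamma_1) + (1-\mu) E(\gamma_2)$, which is exactly the required convexity inequality.

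There is no substantial obstacle here: the result is essentially a textbook fact about the marginal value function of a convex program whose constraint is linear in both the decision variable and the parameter. The only minor point that needs a brief remark is the case $\gamma > \max_{x\in\Xc} x$, where the feasible set is empty; adopting the standard convention $E(\gamma)=+\infty$ there, the inequality holds vacuously and convexity is preserved. For every $\gamma$ in the effective domain the construction above goes through verbatim.
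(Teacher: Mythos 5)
Your proposal is correct and coincides with the paper's own proof: both take minimizers at $\gamma_1$ and $\gamma_2$, use linearity of the constraint $\E_Q[X]\geq\gamma$ to show the mixture is feasible at the mixed threshold, and then apply convexity of $D(\cdot\|P)$. Your added remarks on existence of minimizers and the empty-feasible-set convention are harmless refinements the paper omits.
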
		

\begin{proof}
	Let 
	\begin{equation}
	Q^{*}_{1} = \argmin_{ \mathbb{E}_Q [X] \geq \gamma_1} D(Q\|P) ~~ Q^{*}_{2} = \argmin_{ \mathbb{E}_Q [X] \geq \gamma_2} D(Q\|P).
	\end{equation}
	From the  convexity of the relative entropy, 	for any   $\beta \in (0,1)$,
	\begin{align}
	&D\big(\beta Q^*_1 +(1-\beta) Q^*_2 \| P \big) \leq \beta  D( Q^*_1  \| P) +(1-\beta)  D( Q^*_2 \| P)\\
	&= \beta  \min_{ \mathbb{E}_Q [X] \geq \gamma_1 } D(Q\|P) +(1-\beta)  \min_{ \mathbb{E}_Q [X] \geq \gamma_2 } D(Q\|P).
	\end{align}			
	Furthermore,  since $Q^*_1, Q^*_2$  satisfy their corresponding optimization constraints, then $\mathbb{E}_{Q^*_1}[X] \geq   \gamma_1$, $\mathbb{E}_{Q^*_2}[X]  \geq  \gamma_2$ , hence
	\begin{equation}
	\mathbb{E}_{\beta Q^*_1 +(1-\beta) Q^*_2}[X]  \geq \beta \gamma_1+ (1-\beta) \gamma_2.
	\end{equation}
	Therefore,  $\beta Q^*_1 +(1-\beta) Q^*_2$ satisfies the optimization constraint when $\gamma= \beta \gamma_1 + (1-\beta) \gamma_2$, then 
	\begin{align}
	&\min_{	\mathbb{E}_{Q} [X] \leq  \beta \gamma_1+(1-\beta) \gamma_2} D(Q\|P) \leq D(\beta Q^*_1 +(1-\beta) Q^*_2 \| P)\\
	&\leq \beta  \min_{ \mathbb{E}_Q [X] \geq \gamma_1 } D(Q\|P) +(1-\beta)  \min_{ \mathbb{E}_Q [X] \geq \gamma_2 } D(Q\|P).
	\end{align}	 
	Hence $E(\gamma)$ is  convex in $\gamma$. 	
\end{proof}
From above lemma we can show that $\lambda$  is a non-decreasing function of $\hat{\gamma}$. From the envelope theorem \cite{Segal}
\begin{equation}
\frac{\partial 	\hat{E}_0  }{\partial \hat{\gamma}} = \lambda^*,
\end{equation}
where $\lambda^*$ is the optimizing $\lambda$ in \eqref{eq:tilted} for the test $\hat{\phi}$. Therefore
\begin{equation}
\frac{\partial  \lambda^* }{\partial \hat{\gamma}}=  \frac{\partial ^2	\hat{E}_0  }{\partial \hat{\gamma}^2} \geq 0,
\end{equation}
where the inequality is from convexity of 	$\hat{E}_0$ respect to $\hat{\gamma}$. Therefore, we only need to consider the behavior of ${\rm Var}_{\hat{P}_0} \Big[\frac{\hat{Q}_{\lambda}(X)}{\hat{P}_0(X)}  \Big]$  as $\lambda$ changes. Taking the derivative of variance respect to $\lambda$, we have
\begin{align}
\frac{\partial }{\partial \lambda}{\rm Var}_{\hat{P}_0} \Bigg[\frac{\hat{Q}_{\lambda}(X)}{\hat{P}_0(X)}  \Bigg]=&\sum_{x\in \Xc}  \frac{2{\hat{Q}_{\lambda}}(x)}{\hat{P_0}(x)} \frac{\partial \hat{Q}_{\lambda}(x) }{\partial \lambda}\\
=& \sum_{x\in \Xc}  \frac{2{\hat{Q}_{\lambda}}(x)}{\hat{P_0}(x)}  \Bigg( \hat{Q}_{\lambda}(x)  \log \frac{\hat{P}_1(x)}{\hat{P}_0(x)}  \nonumber \\ 
&-\hat{Q}_{\lambda}(x)  \sum_{x' \in \Xc} \hat{Q}_{\lambda}(x')\log \frac{\hat{P}_1(x')}{\hat{P}_0(x')} \Bigg ) \\
=&2 \mathbb{E}_{\hat{Q}_{\lambda}} \bigg[ \frac{\hat{Q}_{\lambda}(X)}{\hat{P}_0(X)} \log  \frac{\hat{P}_1(X)}{\hat{P}_0(X)} \bigg  ] \nonumber \\
 &-2 \mathbb{E}_{\hat{Q}_{\lambda}} \bigg [ \frac{\hat{Q}_{\lambda}(X)}{\hat{P}_0(X)}\bigg ] \mathbb{E}_{\hat{Q}_{\lambda}} \bigg [ \log  \frac{\hat{P}_1(X)}{\hat{P}_0(X)} \bigg ].
\end{align}
Substituting $\hat{Q}_{\lambda}(X)$ as a function of $\lambda$ we get
\begin{align}
&\frac{\sum_{a\in \Xc} \hat{P}_0^{1-\lambda}(a) \hat{P}_1^{\lambda}(a)  }{2} \frac{\partial }{\partial \lambda}{\rm Var}_{\hat{P}_0} \bigg[\frac{\hat{Q}_{\lambda}(X)}{\hat{P}_0(X)}  \bigg]\nonumber \\
=&\mathbb{E}_{\hat{Q}_{\lambda}} \bigg [ \bigg (\frac{\hat{P}_{1}(X)}{\hat{P}_0(X)} \bigg  )^{\lambda} \log  \frac{\hat{P}_1(X)}{\hat{P}_0(X)} \bigg  ] \nonumber \\
& -  \mathbb{E}_{\hat{Q}_{\lambda}} \bigg [ \bigg (\frac{\hat{P}_1(X)}{\hat{P}_0(X)} \bigg ) ^{\lambda} \bigg ] \mathbb{E}_{\hat{Q}_{\lambda}} \bigg [ \log  \frac{\hat{P}_1(X)}{\hat{P}_0(X)}  \bigg ]. \label{eq:varlogsum}
\end{align}
Let $r(X)= \Big (\frac{\hat{P}_1(X)}{\hat{P}_0(X)}\Big )^{\lambda}$, then
\begin{align} \label{eq:varlogsum}
\mathbb{E}_{\hat{Q}_{\lambda}} &\bigg [ \bigg (\frac{\hat{P}_{1}(X)}{\hat{P}_0(X)} \bigg  )^{\lambda} \log  \frac{\hat{P}_1(X)}{\hat{P}_0(X)} \bigg  ]  \notag \\  
&~~~~~~~~~~~~~~~-\mathbb{E}_{\hat{Q}_{\lambda}} \bigg [ \bigg (\frac{\hat{P}_1(X)}{\hat{P}_0(X)} \bigg ) ^{\lambda} \bigg ] \mathbb{E}_{\hat{Q}_{\lambda}} \bigg [ \log  \frac{\hat{P}_1(X)}{\hat{P}_0(X)}  \bigg ]  \notag \\
&=\frac{1}{\lambda} \mathbb{E}_{\hat{Q}_{\lambda}} [ r(X)  \log r(X) ]- \frac{1}{\lambda}  \mathbb{E}_{\hat{Q}_{\lambda}} [ r(X) ] \mathbb{E}_{\hat{Q}_{\lambda}} [ \log r(X)  ]. 
\end{align} 
Note that $\hat{Q}_{\lambda}(x),r(x)$ are positive for all $x\in \Xc$. Therefore, using the log-sum inequality \cite{Cover} for the first term and Jensen inequality \cite{Cover} for the second term in \eqref{eq:varlogsum}, we obtain
\begin{align}
&\frac{\lambda \sum_{a\in \Xc} \hat{P}_0^{1-\lambda}(a) \hat{P}_1^{\lambda}(a)  }{2}\frac{\partial }{\partial \lambda}{\rm Var}_{\hat{P}_0} \Bigg[\frac{\hat{Q}_{\lambda}(X)}{\hat{P}_0(X)}  \Bigg] \nonumber \\ 
& \geq	\mathbb{E}_{\hat{Q}_{\lambda}} [ r(X) ] \log \mathbb{E}_{\hat{Q}_{\lambda}} [  r(X)  ]-  \mathbb{E}_{\hat{Q}_{\lambda}} [ r(X) ] \mathbb{E}_{\hat{Q}_{\lambda}} [ \log r(X)  ]  \\
&\geq		\mathbb{E}_{\hat{Q}_{\lambda}} [ r(X) ] \log \mathbb{E}_{\hat{Q}_{\lambda}} [  r(X)  ]-  \mathbb{E}_{\hat{Q}_{\lambda}} [ r(X) ] \log \mathbb{E}_{\hat{Q}_{\lambda}} [  r(X)  ]\\
&=0.
\end{align} 
Also, the above inequalities are met with equality when both log-sum and Jensen's inequalities are met with equality, which happens when $\lambda=0$. Therefore, for $ \lambda>0$,  ${\rm Var}_{\hat{P}_0} \Big[\frac{\hat{Q}_{\lambda}(X)}{\hat{P}_0(X)}  \Big]$ is an increasing function of $\lambda$  and consequently 
\begin{equation}
\frac{\partial }{\partial \hat{\gamma}}\theta_0(\hat{P}_0,\hat{P}_1,\hat{\gamma}) \geq 0.
\end{equation}


\section{Proof of Theorem \ref{thm:hoeffupper}}\label{apx:Thoeffupper}

	By Sanov's theorem, the error exponent is given by
	\begin{align}\label{eq:hoefmisexp}
	\hat{E}_0&=\min_{Q: D(Q\|\hat{P}_0) \geq \hat{\gamma}} D(Q\|P_0).
	\end{align}
	The above optimization problem corresponds to the minimization of a convex function over the complement of a convex set, which achieves its optimum value on its boundary. It is clear that when $D(P_0\|\hat{P}_0) \geq \hat{\gamma}$ the error exponent equals to zero since $Q=P_0$ satisfies the optimization constraint in \eqref{eq:hoefmisexp} and $D(P_0\|P_0)=0$. Hence, we assume $D(P_0\|\hat{P}_0) < \hat{\gamma}$. Note that since $P_0$ lies in the relative entropy ball centred at $\hat{P}_0$, the line passing through $P_0$ and $\hat{P}_0$ passes through the interior of the bounded convex set $\Bc(\hat{P}_0,\hat{\gamma})$  and therefore cuts the set's boundary in exactly two points, and there exist a point $Q^*$ such that $P_0$ lies in between of intersecting point and $\hat{P}_0$. By writing $P_0$ as linear combination of $\hat{P}_0$, $Q^*$  such that $D(Q^*\|\hat{P}_0) = \hat{\gamma}$ and $P_0=\beta Q^* +(1-\beta) \hat{P}_0$ where $ 0 \leq \beta \leq 1$ we get
	\begin{align}
	\hat{E}_0 &=  \min_{Q: D(Q\|\hat{P}_0) \geq \hat{\gamma}} D(Q\|\beta Q^* +(1-\beta) \hat{P}_0)  \\
	&\leq  \min_{Q: D(Q\|\hat{P}_0) \geq \hat{\gamma}} \beta D(Q\| Q^*)+(1-\beta) D(Q\| \hat{P}_0)  \\
	&= (1-\beta) \hat{\gamma}, \label{eq:upperHoef} 
	\end{align}
	where the inequality is by the convexity of relative entropy. In order to get equality \eqref{eq:upperHoef} we lower bound the minimization by
	\begin{align}
	\min_{Q: D(Q\|\hat{P}_0) \geq \hat{\gamma}} &\beta D(Q\| Q^*)+(1-\beta) D(Q\| \hat{P}_0)  \\
	& \geq \min_{Q: D(Q\|\hat{P}_0) \geq \hat{\gamma}} (1-\beta) D(Q\| \hat{P}_0)\\
	& \geq (1-\beta) \hat{\gamma},
	\end{align}
	and by choosing $Q=Q^*$  we can achieve this lower bound. Next we find a lower bound to  $\beta$. By definition of $Q^*$ we have
	\begin{equation}
	D\Big(\frac{1}{\beta} P_0 - \frac{1-\beta}{\beta} \hat{P}_0 \| \hat{P}_0 \Big) =\hat{\gamma}.  
	\end{equation}
	Next, by Pinsker's inequality and lower bounding the relative entropy we get
	\begin{equation}\label{eq:alphalower} 
	2\Big\|\frac{1}{\beta} P_0 - \frac{1-\beta}{\beta} \hat{P}_0 - \hat{P}_0 \Big\|_{\rm TV}^2=\frac{2}{\beta^2}\| P_0-\hat{P}_0\|^2_{\rm TV}  \leq \hat{\gamma}.
	\end{equation}
	By equations \eqref{eq:upperHoef} and  \eqref{eq:alphalower} we conclude the result.


\section{Proof of Theorem \ref{thm:lowerworstHoef}}\label{apx:TlowerworstHoef}

	As opposed to the likelihood ratio test, we first consider the minimization over $P_0$ for a fixed $Q$, and proceed with a Taylor expansion of $D(Q\|P_0)$ around $P_0=\hat{P}_0$  to get
	\begin{equation}
	\underline{\hat{E}}_0 (r_0)=  \min_{\substack{P_0: D(\hat{P}_0\| P_0)\leq r_0 \\ Q: D(Q\|\hat{P}_0) = \hat{\gamma}}} D(Q\|\hat{P}_0) +  \thetav_{P_0}^{T}   \nabla {E}_0  +o(\| \thetav_{P_0} \|_{\infty}),
	\end{equation}
	where
	\begin{align}
	\nabla E_0 &= \bigg( -\frac{Q(x_1)}{\hat{P}_0(x_1)},\dotsc, -\frac{{Q}(x_{|\Xc|})}{\hat{P}_0(x_{|\Xc|})}\bigg)^T,\\
	\thetav_{P_0}		 &= \big(P_0(x_1)-\hat{P}_0(x_1),\dotsc,P_0(x_{|\Xc|})-\hat{P}_0(x_{|\Xc|})\big)^T,
	\end{align}	
	and we replaced the inequality constraint by an equality one, since  the problem is that of optimizing a convex function over the complement of the convex set which attains its optimal value on the boundary of the set. Next by optimizing over $P_0$ and similarly to the proof of Theorem \ref{thm:lowerworst}, by substituting $Q$ with $\hat{Q}_{\lambda}$ we get
	
	\begin{align}\label{eq:worstapproxopthoef}
\underline{\hat{E}}_0 (r_0) =& \min_{\substack{ Q: D(Q\|\hat{P}_0) = \hat{\gamma}}}   \min_{\substack{ \frac{1}{2} \thetav_{P_0}^T \Jm(\hat{P}_0) \thetav_{P_0} +o (\| \thetav_{P_0} \|^2_{\infty}) \leq r_{0} \\ \onev^T\thetav_{P_0}=0}} \Big \{ E_0   \nonumber \\
&~~~~~~~~~~~~~~~~~~~ +     \thetav_{P_0}^{T}   \nabla {E}_0+o(\|\thetav_{P_0} \|_{\infty}) \Big \} \\
=&\min_{\substack{ Q: D(Q\|\hat{P}_0) = \hat{\gamma}}}    \min_{\substack{ \frac{1}{2} \thetav_{P_0}^T \Jm(\hat{P}_0) \thetav_{P_0}  \leq r_{0} \\ \onev^T\thetav_{P_0}=0}} \Big \{ E_0  +     \thetav_{P_0}^{T}   \nabla \hat{E}_0 \Big \} \nonumber \\
&+o(\sqrt{r_0}),  \label{eq:approxerrorhoef}
\end{align}	
	where to get \eqref{eq:approxerrorhoef} we have taken the similar steps to the proof of  Theorem \ref{thm:lowerworst}, with the difference that now the vector $\thetav^*_{P_0}  $ is depending on the choice of $Q$. However, this dependency does not change the remainder term order. Observe that, from the inequality constraint and the restriction it imposes on the length of the vector $\thetav_{P_0}  $ we have that   $\frac{1}{2} \thetav_{P_0}^T \Jm(\hat{P}_0) \thetav_{P_0} \leq r_{0} +o(\sqrt{r_0}) $ and hence $\frac{\alpha}{2}\|\thetav^*_{P_0}\|_{2}^2  \leq \max_{x\in \Xc} \hat{P}_0(x) (r_{0} +o(\sqrt{r_0})) $, and by equivalency of $p$-norms, we obtain the remainder term. Finally, similarly to  the Theorem \ref{thm:lowerworst} by solving the inner optimization problem, we have
	\begin{align}
	\underline{\hat{E}}_0 (r_0)=& \min_{\substack{ Q: D(Q\|\hat{P}_0) = \hat{\gamma}}} D(Q\|\hat{P}_0)   - \sqrt{\frac{2}{\alpha}  {\rm Var}_{\hat{P}_0} \Bigg[\frac{Q(X)}{\hat{P}_0(X)}  \Bigg] r_0} \nonumber \\ 
	&+ o(\sqrt{r_0})  \\
	=& \hat{\gamma} - \max_{\substack{ Q: D(Q\|\hat{P}_0) = \hat{\gamma}}} \sqrt{\frac{2}{\alpha}  {\rm Var}_{\hat{P}_0} \Bigg[\frac{Q(X)}{\hat{P}_0(X)}  \Bigg]   r_0} + o(\sqrt{r_0}), 
	\end{align}
which completes the proof.

\section{Proof of Proposition \ref{cor:comparing}}\label{apx:Ccomparing}

	We are comparing the sensitivity of the likelihood ratio test and Hoeffding's test when they achieved the same type-\RNum{1} error exponents, and where the likelihood ratio test uses the test distribution $\hat{P}_1$. Let $\hat{Q}_{\lambda}$ be the minimizing distribution of likelihood ratio test. By  letting $Q=\hat{Q}_{\lambda}$ in \eqref{eq:hoefsen} we get $D(\hat{Q}_{\lambda} \|\hat{P}_0)=\hat{\gamma}$ which satisfies the maximization constraint, therefore
	\begin{align}
	\theta_0^{\rm h}(\hat{P}_0,\hat{\gamma}^{\rm h})  &\geq  \sqrt{\frac{2}{\alpha}  {\rm Var}_{\hat{P}_0} \Bigg[\frac{Q(X)}{\hat{P}_0(X)}  \Bigg]} \\
	&= \theta_0^{\rm lrt}(\hat{P}_0,\hat{P}_1,\hat{\gamma}^{\rm lrt}),
	\end{align}
	which gives the lower bound. To upper bound the sensitivity ratio, we have
	\begin{align}
	\frac{\theta_0^{\rm h}(\hat{P}_0,\hat{\gamma}^{\rm h})}{\theta_0^{\rm lrt}(\hat{P}_0,\hat{P}_1,\hat{\gamma}^{\rm lrt})}\leq   \frac{\max_{{ Q: D(Q\|\hat{P}_0) = {\hat{\gamma}^{\rm h}}}} \sqrt{ {\rm Var}_{\hat{P}_0} \Big[\frac{Q(X)}{\hat{P}_0(X)}  \Big]}  }{\min_{\hat{P}_1: D(\hat{Q}_\lambda\|\hat{P}_0) = \hat{\gamma}^{\rm h} } \sqrt{  {\rm Var}_{\hat{P}_0} \Big[\frac{\hat{Q}_\lambda(X)}{\hat{P}_0(X)}  \Big]} },
	\end{align}
	 where the constraint in the minimization term is the set of all test distributions $\hat{P}_1$ and corresponding $\hat{\gamma}^{\rm lrt}$ such that the likelihood ratio test achieves the same type-\RNum{1} error exponents as Hoeffding's test which equals to $\hat{\gamma}^{\rm h}$. Also, note that ${\rm Var}_{\hat{P}_0} \Big[\frac{Q(X)}{\hat{P}_0(X)}  \Big] = \chi^2(Q\|\hat{P}_0)$ where $\chi^2$ is the chi-squared distance.  For every $\hat{P}_0, Q$ we have $D(Q\|\hat{P}_0) \leq \chi^2(Q\|\hat{P}_0)$ \cite{Choosing}. Therefore, 
	\begin{equation}\label{eq:lower}
	\min_{\hat{P}_1 : D(\hat{Q}_\lambda\|\hat{P}_0) = \hat{\gamma}^{\rm h} } \sqrt{  {\rm Var}_{\hat{P}_0} \Bigg[\frac{\hat{Q}_\lambda(X)}{\hat{P}_0(X)}  \Bigg]} \geq \sqrt{\hat{\gamma}^{\rm h}}.
	\end{equation}
	In addition, we  upper bound the chi-squared distance as follows
	\begin{align}
	\chi^2(Q\|\hat{P}_0)&\leq \frac{1}{\min_{x\in \Xc} \hat{P}_0(x)} \|Q-\hat{P}_0\|_2^2  \\
	&\leq \frac{1}{\min_{x\in \Xc} \hat{P}_0(x)}  \| Q-\hat{P}_0\|_1^2  \\
	&\leq  \frac{4}{\min_{x\in \Xc} \hat{P}_0(x)} D(Q\|\hat{P}_0), \label{eq:chi2KL}
	\end{align}
	where we used Pinsker's inequality \cite{Cover} in the last step. Hence, we have
	\begin{equation}\label{eq:upper}
	\max_{{ Q: D(Q\|\hat{P}_0) = {\hat{\gamma}^{\rm h}}}} \sqrt{ {\rm Var}_{\hat{P}_0} \Bigg[\frac{Q(X)}{\hat{P}_0(X)}  \Bigg] }  \leq \sqrt{\frac{4}{\min_{x\in \Xc} \hat{P}_0(x)} }.
	\end{equation}
	Finally, from \eqref{eq:lower}, \eqref{eq:upper} we conclude \eqref{eq:cor}.  We can also improve the chi-squared upper bound using the following $f$-divergences inequality \cite{Sason}
	\begin{align}
	\chi^2(Q\|\hat{P}_0)&\leq \frac{1}{\kappa(\beta)}  D(Q\|\hat{P}_0)   
	\end{align}
	where for $x \in (1,\infty)$ 
	\begin{equation}
	\kappa(x)=\frac{x\log x +1-x}{(x-1)^2},
	\end{equation}
	and $\beta = \max_{x \in \Xc} \frac{Q(x)}{\hat{P}_0(x)}$. We need to find $\beta$ such that for all distribution $Q$ satisfying the $D(Q\|\hat{P}_0)=\hat{\gamma}^{\rm h}$ this inequality holds.  It is easy to show that $\kappa(x)$ is a non-increasing function for $x \in (1,\infty)$. To show this we take the derivative of $\kappa(x)$,
	\begin{align}
	\frac{d \kappa(x)}{dx}&= \frac{2(1-x)+(x+1)\log x}{(1-x)^3}\\
	&\leq  \frac{2(1-x)+(x+1)(x-1)}{(1-x)^3}\\
	&=\frac{1}{1-x}\\
	&\leq 0
	\end{align}
	where we have used the inequality $\log x \leq x-1$. Therefore for all $D(Q\|\hat{P}_0)=\hat{\gamma}^{\rm h}$
	\begin{equation}
	\chi^2(Q\|\hat{P}_0)\leq \frac{1}{\kappa \Big (\max_{Q: D(Q\|\hat{P}_0)=\hat{\gamma}^{\rm h}} \beta\Big )}  D(Q\|\hat{P}_0). \label{eq: chi2upper} 
	\end{equation}
	For every $x >0$  by the inequality $x \log x \geq x-1$,  for every $Q$ such that $D(Q\|\hat{P}_0)=\hat{\gamma}^{\rm h} $we get
	\begin{align}
	\hat{\gamma}^{\rm h} &\geq \sum_{x\in \Xc} \hat{P}_0(x) \Big(\frac{Q(x)}{\hat{P}_0(x)}-1\Big ).
	\end{align}
	Therefore
	\begin{align}
	\max_{Q \in \Pc(\Xc) : D(Q\|\hat{P}_0)=\hat{\gamma}^{\rm h}} & \max_{x \in \Xc} \frac{Q(x)}{\hat{P}_0(x)}  \nonumber \\ 
	&\leq \max_{Q: \sum_{x\in \Xc} \hat{P}_0(x) \big(\frac{Q(x)}{\hat{P}_0(x)}-1\big) \leq \hat{\gamma}^{\rm h}  }  \max_{x \in \Xc} \frac{Q(x)}{\hat{P}_0(x)} \\
	&\leq \frac{\min(1, \hat{\gamma}^{\rm h} +\min_{x\in \Xc} \hat{P}_0(x))  }{\min_{x\in \Xc} \hat{P}_0(x)  } \label{eq:maxmax} 
	\end{align}  
	 where in the last step we have dropped the $Q \in \Pc(\Xc)$ condition and assigned all the mass to $Q(x_{\rm min})$ where $x_{\rm min} =\argmin_{x\in \Xc} \hat{P}_0(x)$. Finally, substituting \eqref{eq:maxmax} into \eqref{eq: chi2upper} we get \eqref{eq:lrttohoef}.


\section{Proof of Theorem \ref{thm:seqMM}}\label{apx:TseqMM}

The proof described below holds in general and can be used for continuous probability distributions. From the absolute continuity assumption, let the log-likelihood ratio be  bounded by a positive constant $c$, i.e.,
\begin{equation} \label{eq:llrbound}
\bigg |\log \frac{\hpo}{\hpt}\bigg| \leq c ~~~ \forall x.
\end{equation}
We use the following results.
\begin{theorem}[\cite{Woodroofe}] \label{thm:converge}
	Let $S_n=\sum_{i=1}^{n} Z_i$ be a random walk where $Z_i$ is some non-lattice random variable\footnote{A random variable $Z$ is said to be lattice if and only if $\sum_{k=-\infty}^{\infty} \text{Pr}[ Z=a+kd]=1$ for some non-negative $a,d$. Otherwise, it is said to be non-lattice.} generated in the i.i.d fashion with $\mathbb{E}[Z_i] >0$.  For $\gamma>0$, let
	\begin{equation}
	\tau=\inf\{n\geq 1: S_n \geq  \gamma\}.
	\end{equation}
	Also, let $R_{\gamma}\triangleq S_{\tau}-\gamma$.
	Then $R_{\gamma}$ converges in distribution to a random variable $R$ with distribution $Q$ as $\gamma\to\infty$. 
	Moreover, if $Z$ is lattice random variable, then $R_{\gamma}$ has a limiting distribution $Q_d$ as $\gamma\to \infty$ through multiples of $d$. 
\end{theorem}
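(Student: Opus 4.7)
The plan is to reduce the overshoot question to the classical excess (forward recurrence time) distribution of a renewal process and then apply the key renewal theorem in its non-lattice (Blackwell/Smith) form and its lattice counterpart. First I would introduce the strictly ascending ladder epochs $\tau_0=0,\tau_1,\tau_2,\dots$ of $S_n$, defined recursively by $\tau_{k+1}=\inf\{n>\tau_k:S_n>S_{\tau_k}\}$, together with the ladder heights $H_k=S_{\tau_k}-S_{\tau_{k-1}}$. Because $\E[Z_1]>0$, the strong law gives $S_n/n\to \E[Z_1]$ a.s., so $S_n\to\infty$ a.s.; hence every $\tau_k$ is a.s.\ finite, and by the strong Markov property $\{H_k\}$ are i.i.d.\ strictly positive random variables with $\E[H_1]<\infty$ (immediate under the boundedness assumption used elsewhere in the paper, and in general a consequence of Wald's identity combined with $\E[\tau_1]<\infty$).

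Next I would observe that $\tau=\inf\{n\geq 1:S_n\geq\gamma\}$ coincides with the first ladder epoch whose cumulative ladder height reaches $\gamma$: if $N(\gamma)=\inf\bigl\{k\geq 1:\sum_{j\leq k}H_j\geq\gamma\bigr\}$, then $S_\tau=\sum_{j\leq N(\gamma)}H_j$ and therefore
\begin{equation}
R_\gamma=\sum_{j=1}^{N(\gamma)}H_j-\gamma.
\end{equation}
In other words, $R_\gamma$ is precisely the excess at level $\gamma$ of the ordinary renewal process driven by the i.i.d.\ positive inter-arrivals $\{H_k\}$, a well-studied object.

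Third, I would invoke the key renewal theorem. Writing $\Pp[R_\gamma\leq t]=\int_{[0,\gamma]}\Pp[H_1>\gamma-s,\ \gamma-s\leq H_1\leq \gamma-s+t]\,dU(s)$, where $U$ is the renewal measure of $\{H_k\}$, Blackwell's theorem in the non-lattice case yields
\begin{equation}
\lim_{\gamma\to\infty}\Pp[R_\gamma\leq t]=\frac{1}{\E[H_1]}\int_{0}^{t}\Pp[H_1>u]\,du,
\end{equation}
which defines the limiting distribution $Q$. In the lattice case, restricting $\gamma$ to multiples of the span $d$ and using the lattice Blackwell theorem produces the analogous limit $Q_d$ with mass function proportional to $\Pp[H_1\geq kd]$.

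The main obstacle is not the application of the renewal theorem itself but the transfer of the (non-)lattice property from the original increments $Z$ to the ladder height $H_1$, whose law is only implicitly defined through the Wiener--Hopf/Pollaczek--Spitzer factorisation. The cleanest route, which I would follow, is to characterise the support group of $H_1$ by concatenating finite sample paths: if $Z$ has support generating the subgroup $\langle\mathrm{supp}(Z)\rangle\subseteq\R$, one shows by explicit path constructions of positive probability that the support of $H_1$ generates the same intersection of that group with $(0,\infty)$, so $H_1$ is non-lattice whenever $Z$ is non-lattice, and has the same span $d$ in the lattice case. Once this is established, non-latticeness of $H_1$ legitimises Blackwell's theorem and the two statements of the theorem follow.
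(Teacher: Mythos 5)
The paper offers no proof of this statement at all --- it is imported verbatim from Woodroofe's monograph on nonlinear renewal theory --- so there is no internal argument to compare yours against. What you have written is the standard renewal-theoretic proof of the overshoot limit, and its skeleton is sound: because the first passage of level $\gamma>0$ necessarily sets a new running maximum (for $n<\tau$ one has $S_n<\gamma\le S_\tau$, and $S_0=0<\gamma$), the time $\tau$ is a strict ascending ladder epoch, $S_\tau$ is a partial sum of the i.i.d.\ ladder heights $H_k$, and $R_\gamma$ is exactly the excess at level $\gamma$ of the ordinary renewal process with inter-arrivals $H_k$; Blackwell's theorem (respectively its arithmetic version along multiples of the span $d$) then yields the stationary excess distribution $\frac{1}{\E[H_1]}\int_0^t\Pp[H_1>u]\,du$ as the limit. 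Two steps deserve more care than your sketch gives them. First, $\E[H_1]<\infty$ is not ``immediate from Wald plus $\E[\tau_1]<\infty$,'' since $\E[\tau_1]<\infty$ is itself the nontrivial ingredient; it follows from Spitzer's identity $\E[\tau_1]=\exp\bigl(\sum_{n\ge1}n^{-1}\Pp[S_n\le 0]\bigr)$ together with the convergence of that series when $0<\E[Z_1]<\infty$ (in the paper's application it is even easier, because the increments are bounded by $c$ via \eqref{eq:llrbound}, so $H_1\le c$ almost surely). Second, the preservation of the (non-)lattice character and of the span $d$ in passing from $Z$ to $H_1$ is, as you correctly flag, the only genuinely delicate point; your plan of identifying the closed subgroup generated by the support of $H_1$ through explicit path concatenations of positive probability is the right one and is how the classical treatments handle it. With those two gaps filled, your argument is a complete and self-contained proof of the cited result.
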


The next result shows that under conditions \eqref{eq:posdrift}, the mismatched sequential probability ratio test stops at a finite time.

\begin{lemma}\label{lem:finite}
	
	Let $\hta_0$ be the the smallest time that the mismatched sequential probability ratio test crosses threshold $\hgo$, i.e.,  
	\begin{align}\label{eq:tau1}
	\hta_{0}=\inf \{n\geq1: \hat{S}_n& \geq \hgo\}.  
	\end{align}   
	Also, assume that conditions \eqref{eq:posdrift} hold. Then,
	\begin{equation}\label{eq:finiteupper}
	\PP_0[\hta_0 \geq n] \leq      e^{d\hgo} e^{-(n-1) E(0)},
	\end{equation}
	where $E(0),d >0$. Also, as $ \hgo\to  \infty $,  $\hta_0\to  \infty$ almost surely.
\end{lemma}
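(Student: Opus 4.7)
The plan is to obtain the exponential tail bound on $\hta_0$ via a one-line Chernoff argument on the random walk $\hat{S}_n$, exploiting the positive drift guaranteed by condition \eqref{eq:posdrift}, and then to deduce almost-sure divergence of $\hta_0$ from the boundedness of the log-likelihood increments.

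First I would note that under $P_0$ the i.i.d.\ increments $\hat{Z}_i = \log\frac{\hat{P}_0(X_i)}{\hat{P}_1(X_i)}$ have mean $\mu \defeq \mathbb{E}_{P_0}[\hat{Z}] = D(P_0\|\hat{P}_1) - D(P_0\|\hat{P}_0) > 0$ by \eqref{eq:posdrift}, and are bounded by $c$ in absolute value by \eqref{eq:llrbound}. The event $\{\hta_0 \geq n\}$ is contained in $\{\hat{S}_{n-1} < \hgo\}$, because crossing the upper threshold has not yet occurred after $n-1$ steps.

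Next I would apply a standard Chernoff bound to the lower tail: for any $d > 0$,
\begin{equation*}
\PP_0[\hta_0 \geq n] \leq \PP_0[\hat{S}_{n-1} < \hgo] \leq e^{d\hgo}\,\mathbb{E}_{P_0}\!\left[e^{-d\hat{Z}}\right]^{n-1}.
\end{equation*}
Setting $M(d) \defeq \mathbb{E}_{P_0}[e^{-d\hat{Z}}]$, we have $M(0)=1$ and $M'(0) = -\mu < 0$, and $M$ is finite in a neighborhood of $0$ because $\hat{Z}$ is bounded. Hence there exists $d>0$ with $M(d) < 1$, and defining $E(0) \defeq -\log M(d) > 0$ yields the desired inequality $\PP_0[\hta_0 \geq n] \leq e^{d\hgo}\,e^{-(n-1)E(0)}$.

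For the almost-sure divergence, I would use the boundedness $|\hat{Z}_i|\leq c$ to conclude $\hat{S}_n \leq nc$ surely, so that $\hta_0 \geq \hgo/c$, which forces $\hta_0 \to \infty$ as $\hgo \to \infty$. (Finiteness of $\hta_0$ itself for each fixed $\hgo$ is immediate from the SLLN, since $\hat{S}_n/n \to \mu > 0$ almost surely under $P_0$, so $\hat{S}_n$ crosses any finite threshold in finite time.) I do not anticipate a serious obstacle here; the only subtlety is making sure the chosen $d$ can be picked independently of $\hgo$ and $n$, which is ensured by the boundedness of $\hat{Z}$ and the strict positivity of $\mu$, both of which hold uniformly in the setup.
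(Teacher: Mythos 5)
Your proof is correct and follows essentially the same route as the paper: a Chernoff/exponential-tilting bound on $\PP_0[\hat{S}_{n-1}<\hat{\gamma}_0]$ using the positive drift from \eqref{eq:posdrift}, followed by the deterministic bound $\hat{S}_n\leq nc$ to force $\hat{\tau}_0\geq \hat{\gamma}_0/c\to\infty$. The only (harmless) difference is that you fix a single tilting parameter $d$ with $M(d)<1$ directly, whereas the paper optimizes over $s$ to define $E(\gamma)$ and then extracts $d=s^*(0)$ via convexity and the envelope theorem; both yield the claimed form of the bound.
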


\begin{proof}
	By Chernoff bound \cite{Dembo}, the probability of passing the threshold  under the first hypothesis at a time after $n$ can be upper bounded by
	\begin{align}
	\PP_0[\hta_0 \geq n] &\leq \PP_0 \Big[ \hat{S}_{n-1}  \leq \hgo\Big  ]\\
	&= \PP_0 \Bigg[   \sum_{i=1}^{n-1} \log \frac{\hat{P}_0(x_i)}{\hat{P}_1(x_i)} \leq \hgo\Bigg  ] \\
	& \leq e^{-(n-1)  E\big(\frac{\hgo}{n-1}\big)} \label{eq:chernoff},
	\end{align}
	where 
	\begin{equation}\label{eq:lagrangelem}
	E(\gamma)= \sup_{s \geq 0} \Big \{- s{\gamma} - \hat{\kappa}(s) \Big\} , 
	\end{equation}
	and
	\begin{equation}
	\hat{\kappa}(s)= \log \E_{P_0} \Bigg[ \frac{\hat{P}_1^s}{\hat{P}_0^s} \Bigg],
	\end{equation}
	is the cumulant function of the mismatched log-likelihood ratio. Note that for each $s$ the objective function in \eqref{eq:lagrangelem} is linear in $\gamma$, so $E(\gamma)$ is the pointwise supremum of a family of linear functions, hence convex \cite{Boyd}. By the convexity of $E(\gamma)$ we have the following lower bound
	\begin{equation}\label{eq:linearseq}
	E \Big (\frac{\hgo}{n-1} \Big ) \geq   E(0)+    \frac{\partial  E(\gamma)}{\partial \gamma} \Bigg|_{\gamma =0}  \frac{\hgo}{n-1}.
	\end{equation}
	In order to show that $ E(0)>0$ it suffices to show that $\hat{\kappa}'(s=0) <0$. Taking derivative of $\hat{\kappa}(s)$ respect to $s$ and setting $s=0$,  we have
	\begin{align}
	\hat{\kappa}'(0)&= \E_{P_0} \Bigg [\log \frac{\hat{P}_1}{\hat{P}_0} \Bigg ]\\
	&= D(P_0\|\hat{P}_0)- D(P_1\|\hat{P}_1)<0,
	\end{align}
	where the last step is by the assumption in \eqref{eq:posdrift}. Finally, by the envelope theorem \cite{Segal} we get 
	\begin{align}
	\frac{\partial  E(\gamma)}{\partial \gamma} \Bigg|_{\gamma =0} = -s^*(\gamma=0),
	\end{align} 
	where $s^*({\gamma=0})$ is the optimizing value of $s$  in \eqref{eq:lagrangelem} evaluated when $\gamma=0$. (Note that this value is unique since $\hat{\kappa}(s)$ is strictly convex in $s$ \cite{Dembo}). By the constraint of the optimization problem in \eqref{eq:lagrangelem}, we have  $s\geq0$. Also, from $\hat{\kappa}'(0)<0$, we get $s^*({\gamma=0})\neq0$. Therefore, we conclude that $s^*(\gamma=0) >0$ and
	\begin{align}
	d&\triangleq { \Bigg | \frac{\partial  E(\gamma)}{\partial \gamma} \Big|_{\gamma =0} \Bigg | } = s^*(\gamma=0) > 0. 
	\end{align} 
	Finally, substituting \eqref{eq:linearseq} into \eqref{eq:chernoff} we get \eqref{eq:finiteupper}. Furthermore,
	\begin{align}
	\PP_0[\hta_0 \leq n] &\leq  \PP_0 \Bigg[ \sum_{i=1}^{k} \log \frac{\hat{P}_0(x_i)}{\hat{P}_1(x_i)} \geq \hgo, k \leq n\Bigg  ]\\
	&\leq  \PP_0 \Big[   kc \geq \hgo, k\leq n \Big]\\
	&\leq \PP_0[nc\geq \hgo]. 
	\end{align}
	Taking $\hgo > nc$, then $\PP_0[\hta_0  \leq n]=0$. Therefore, as $\hgo\rightarrow  \infty$ then $\hta_0\rightarrow \infty$ a.s.
	%
\end{proof}

We now proceed with the proof of the Theorem. We show the result for the type-\RNum{2} error probability; a similar proof holds for the type-\RNum{1} case. The type-\RNum{2} probability of error of mismatched sequential probability ratio test is
\begin{align}
\het &=\E_{P_1} \big[ \mathds{1} \{\hat{S}_{\hta} \geq \hgo\} \big ]\\
&=\E_{P_0} \big [ e^{-{S}_{\hta}}  \mathds{1} \{\hat{S}_{\hta} \geq \hgo\}  \big ], \label{eq:errprobseq1}
\end{align}
where $S_{\hta}$ is the log-likelihood ratio under no mismatch in \eqref{eq:LLR} evaluated at the time where the mismatched test stops. Recall the definition of  $\hta_{0}$ in \eqref{eq:tau1}
and $\hat{R}_{\hgo}=\hat{S}_{\hta_{0}}-\hgo$. Observe that, if $\hat{S}_{\hta} \geq \hgo$, then we have $\hta=\hta_{0}$. Multiplying the exponent in \eqref{eq:errprobseq1}  by $\frac{\hat{S}_{\hta_{0}} }{\hat{S}_{\hta_{0}} }$ and substituting $\hro$ we get
\begin{align}\label{eq:errprobseq2}
\het &= \E_{P_0} \Big [ e^{-{S}_{\hta_{0}}}    \mathds{1} \{\hat{S}_{\hta} \geq \hgo\}   \Big ]  \\
&= \E_{P_0} \Big[ e^{-\frac{{S}_{\hta_{0}}}{\hat{S}_{\hta_{0}}}.\hat{S}_{\hta_{0}} }  \mathds{1} \{\hat{S}_{\hta} \geq \hgo\} \Big ]  \\
&=\E_{P_0} \Big[ e^{-\frac{{S}_{\hta_{0}}}{\hat{S}_{\hta_{0}}}(\hro+\hgo)} \mathds{1}\{\hat{S}_{\hta} \geq \hgo\} \Big ]. \label{eq:err}
\end{align}
Let $\mu=\frac{S_{\hta_{0}}}{\hta_{0}}$, $\hat{\mu}=\frac{\hat{S}_{\hta_{0}}}{\hta_{0}}$. By Lemma \ref{lem:finite},  $\hta_{0} \rightarrow \infty$ as $\hgo \rightarrow \infty$ a.s., and therefore by the WLLN
\begin{align}
\mu &\xrightarrow[]{p} D(P_0\|P_1),\\
\hat{\mu} &\xrightarrow[]{p} D(P_0\|\hat{P}_1)-D(P_0\|\hat{P}_0). 
\end{align} 
Also, since $\hat{\mu}>0$ almost surely, by using the continuous mapping theorem \cite{Resnick} we have
\begin{align}
\frac{\mu}{\hat{\mu}}= \frac{S_{\hta_{0}}}{\hat{S}_{\hta_{0}}}\xrightarrow[]{p} \frac{D(P_0\|P_1)}{ D(P_0\|\hat{P}_1)-D(P_0\|\hat{P}_0)}. 
\end{align} 		  
Moreover, by Theorem \ref{thm:converge},  $\hro$  converges in distribution to a random variable $\hat{R}_0$ with limiting distribution $\hat{Q}_0$ under $P_0$ (through multiples of d in the lattice case). By the Slutsky's theorem \cite{Resnick},
\begin{equation}
{\frac{{S}_{\hta_{0}}}{\hat{S}_{\hta_{0}}}\cdot\hro} \xrightarrow[]{d} \frac{D(P_0\|P_1)}{ D(P_0\|\hat{P}_1)-D(P_0\|\hat{P}_0)} \hat{R}_0,
\end{equation}
Writing \eqref{eq:err} as
\begin{align}
&\het  e^{\frac{D(P_0\|P_1)}{ D(P_0\|\hat{P}_1)-D(P_0\|\hat{P}_0)} \hgo} \nonumber \\
&= \E_{P_0} \Big[ e^{-\frac{{S}_{\hta_{0}}}{\hat{S}_{\hta_{0}}}(\hro+\hgo)+ \frac{D(P_0\|P_1)}{ D(P_0\|\hat{P}_1)-D(P_0\|\hat{P}_0)} \hgo  } \mathds{1}\{\hat{S}_{\hta} \geq \hgo\} \Big ], 
\end{align}
and letting $\hgo \rightarrow \infty$, by Slutsky's theorem we get 
\begin{align}
\lim_{\hgo \rightarrow \infty}\het  e^{\frac{D(P_0\|P_1)}{ D(P_0\|\hat{P}_1)-D(P_0\|\hat{P}_0)} \hgo} &= \E_{P_0} \Big [ e^{-\frac{D(P_0\|P_1)}{ D(P_0\|\hat{P}_1)-D(P_0\|\hat{P}_0)} \hat{R}_0} \Big ]\\
&=\hat{c}_1  .  
\end{align}

To prove \eqref{eq:SPRTthresh}, we show the converges of $\hta_0$ in probability as well as its uniform integrability. Therefore, we can conclude its convergence in $L^1$ norm  (and hence in expectation). Finally, from the convergence of $\hta_0$, we obtain the convergence of $\hta$. First, by the finiteness of $\hta_0$ for every $\hgo$ and definition of $\hta_0$, there exist a finite $\hta_0$ with probability one such that
\begin{align}\label{eq:convergpseq}
\hat{s}_{\hta_0-1} < \hgo \leq \hat{s}_{\hta_0}~~~   \text{w.p}.1.
\end{align}

Also, let $n = \frac{\hat{\gamma}_0}{c+1}$, where $c$ is the log-likelihood ratio bound in \eqref{eq:llrbound},  we have
\begin{align}
&\PP\Bigg [ \Bigg |\frac{\hat{S}_{\hta_0}}{\hta_0}- \big (D(P_0\|\hat{P}_1)-D(P_0\|\hat{P}_0)\big ) \Bigg | > \epsilon  \Bigg] \nonumber \\ 
&= \PP\Bigg [ \Bigg |\frac{\hat{S}_{\hta_0}}{\hta_0}-\big(D(P_0\|\hat{P}_1)-D(P_0\|\hat{P}_0)\big) \Bigg | > \epsilon , \hta_0 >n  \Bigg] \nonumber \\
&+\PP\Bigg [ \Bigg |\frac{\hat{S}_{\hta_0}}{\hta_0}-\big(D(P_0\|\hat{P}_1)-D(P_0\|\hat{P}_0)\big) \Bigg | > \epsilon , \hta_0 \leq n \Bigg]  \\
& \leq \PP\Bigg [ \bigcup_{t=n}^\infty \Bigg |\frac{\hat{S}_{t}}{t}-\big(D(P_0\|\hat{P}_1)-D(P_0\|\hat{P}_0)\big) \Bigg | > \epsilon \Bigg]  \nonumber \\ 
&~~~+ \PP\big [\hta_0 \leq n \big] 
\end{align}
where $ \PP\big [\hta_0 \leq n \big]=0$ for  the choice of $n$ by Lemma \ref{lem:finite}. Now by letting $\hgo \rightarrow \infty$, and hence $n\rightarrow \infty$ we get
\begin{align}
&\lim_{\hgo \rightarrow \infty} \PP\Bigg [ \Bigg |\frac{\hat{S}_{\hta_0}}{\hta_0}- \big (D(P_0\|\hat{P}_1)-D(P_0\|\hat{P}_0)\big ) \Bigg | > \epsilon  \Bigg]  \nonumber \\ 
&\leq  \PP\Bigg [\bigcap_{n=1}^{\infty} \bigcup_{t=n}^\infty \Bigg |\frac{\hat{S}_{t}}{t}-\big(D(P_0\|\hat{P}_1)-D(P_0\|\hat{P}_0)\big) \Bigg | > \epsilon \Bigg] \\
&=0
\end{align}
by the strong law of large numbers. Therefore, we have  
\begin{align}
\frac{\hat{S}_{\hta_0}}{\hta_0} \xrightarrow[]{p} D(P_0\|\hat{P}_1)-D(P_0\|\hat{P}_0),\label{eq:convergp1}\\
\frac{\hat{S}_{\hta_0-1}}{\hta_0-1} \xrightarrow[]{p} D(P_0\|\hat{P}_1)-D(P_0\|\hat{P}_0).\label{eq:convergp2}
\end{align}
Therefore, by \eqref{eq:convergpseq}, \eqref{eq:convergp1}, \eqref{eq:convergp2} we can conclude that 
\begin{equation}\label{eq:convP1adv}
\frac{\hta_0}{\hgo}  \xrightarrow[]{p}   \frac{1}{ D(P_0\|\hat{P}_1)-D(P_0\|\hat{P}_0)} 
\end{equation}  	
as	$\hgo\rightarrow  \infty$.

To show the convergence in $L^1$ we only need to prove the uniform integrability of the sequence of random variables $\frac{\hta_0}{ \hgo}$, where $\hta_0$ is a random variable that depends on  the strictly positive parameter $\hgo$. Equivalently, for some $\epsilon >0$, we need to show that,
\begin{equation}\label{eq:uniform}
\lim_{t \rightarrow \infty} \sup_{\hgo \geq \epsilon} \mathbb{E}_{P_0} \Bigg [\frac{\hta_0}{ \hgo} \mathds{1}\Big \{\frac{\hta_0}{ \hgo} \geq t \Big  \}     \Bigg] =0.
\end{equation}
We can upper bound the given expectation in \eqref{eq:uniform} as  
\begin{align}
\underbrace{\mathbb{E}_{P_0} \Bigg [\frac{\hta_0- \floor{t  \hgo}   }{\hgo} \mathds{1}\Big \{{\hta_0} \geq \floor{t  \hgo}  \Big  \}     \Bigg]}_{A}+  \underbrace{ t \mathbb{E}_{P_0} \Bigg [  \mathds{1}\Big \{\frac{\hta_0}{ \hgo} \geq t \Big  \}     \Bigg]}_{B}.
\end{align}
The second term can be upper bounded by \eqref{eq:finiteupper} as
\begin{align}
B&= t \PP_0[ \hta_0 \geq t  \hgo ] \leq t e^{E(0)} e^{-\hgo(tE(0)-d)}.
\end{align}
The first expectation can be also written as the following sum 
\begin{align}
A&= \frac{1}{\hgo} \sum_{m=1}^{\infty} \PP_0\big [\hta_0-\floor{t\hgo}\geq m \big ],
\end{align}
and by  \eqref{eq:finiteupper} 
\begin{align}
A\leq \frac{1}{\hgo} t e^{-\hgo (t E(0)-d)}  \sum_{m=1}^{\infty}     e^{-(m-2) E(0)}.
\end{align}
Hence $A$ and $B$ are vanishing as $t\rightarrow\infty$ for every $\hgo$ 
giving the uniform integrability of $\frac{\hta_0}{ \hgo}$, and hence convergence in $L^1$  \cite{Bill}, i.e,
\begin{equation} \label{eq:convergT1}
\lim_{\hgo \rightarrow \infty} \mathbb{E}_{P_0} \Bigg [ \Bigg|\frac{\hta_0}{ \hgo}- \frac{1}{D(P_0\|\hat{P}_1)-D(P_0\|\hat{P}_0)} \Bigg | \Bigg]=0.
\end{equation}	

Finally, we prove the convergence of $\hta$. By \eqref{eq:MMexp1}, \eqref{eq:convP1adv} and the union bound, we obtain 
\begin{align}
&\PP_0 \bigg[\bigg|\frac{\hta}{ \hgo}- \frac{1}{D(P_0\|\hat{P}_1)-D(P_0\|\hat{P}_0)} \bigg | \geq \epsilon  \bigg] \nonumber \\ 
&\leq \PP_0 \bigg[\bigg|\frac{\hta}{ \hgo}- \frac{1}{D(P_0\|\hat{P}_1)-D(P_0\|\hat{P}_0)} \bigg | \geq \epsilon , \hat \phi =0 \bigg] \nonumber \\
& + \PP_0[\hat{\phi}=1 ]\\
&=\PP_0 \bigg[\bigg|\frac{\hta_0}{ \hgo}- \frac{1}{D(P_0\|\hat{P}_1)-D(P_0\|\hat{P}_0)} \bigg | \geq \epsilon\bigg] + \heo,  
\end{align}	
which tends to $0$ as $\hgo \rightarrow \infty$, establishing the convergence of $\frac{\hta}{ \hgo}$ in probability. Now, using that $\hta\leq \hta_0$ we have	
\begin{align}\label{eq:upper1}
\mathbb{E}_{P_0} \Bigg [\frac{\hta}{ \hgo} \mathds{1}\Big \{\frac{\hta}{ \hgo} \geq t \Big  \}     \Bigg] \leq  \mathbb{E}_{P_0} \Bigg [\frac{\hta_0}{ \hgo} \mathds{1}\Big \{\frac{\hta_0}{ \hgo} \geq t \Big  \}     \Bigg].
\end{align}
Therefore, uniform integrability of $\hta_0$ gives the uniform integrability of $\hta$, and hence convergence in $L^1$ norm and also expectation of $\frac{\hta}{ \hgo}$, which concludes the proof.


\section{Proof of Theorem \ref{thm:negdrift}}\label{apx:Tnegdrift}

\begin{proof}
	Defining $\hta_1$ similar to \eqref{eq:tau1}, we have
	\begin{equation}
	\hta_{1}=\inf \{n\geq1: \hat{S}_n < -\hgt\}. 
	\end{equation} 
	The probability of making the right decision can be bounded as	
	\begin{align}
	\PP_0[\hat{\phi}=0] &=\PP_0\big [\hat{S}_n ~ \text{passes } \hgo \text{ before passing }  - \hgt     \big ] \\
	& \leq    \sum_{n=1}^{\infty} \PP_0 \big [ \hta_0 \leq n,  \hta_1 > n\big]\\
	&\leq    \sum_{n=1}^{\infty}  \min\big\{ \PP_0  [ \hta_0 \leq n ],  \PP_0  [\hta_1 > n]  \big  \}. 
	\end{align}
	We can bound both terms similar to  the proof of lemma \ref{lem:finite} by
	\begin{align}
	\PP_0[\hat{\phi}=1]& \leq  \sum_{n=1}^{\infty}  \min\Big\{ e^{-a\hat{\gamma}_0} e^{-n \tilde{E}(0)},  e^{-a\hat{\gamma}_1} e^{-n \tilde{E}(0)}   \Big  \} \\
	& =\min\Big\{{c_0}e^{-a\hat{\gamma}_0} , {c_1}e^{-a\hat{\gamma}_1} \Big \},
	\end{align}
	where $a>0,$ $\tilde{E}(0) >0$ due to \eqref{eq:negdrift}. Finally, 
	\begin{equation}
	\heo \geq 1- \max\Big\{{c_1}e^{-a\hat{\gamma}_0} , {c_2}e^{-a\hat{\gamma}_1} \Big \}
	\end{equation}
	which goes to $1$ as $\hat{\gamma}_0, \hat{\gamma}_1$ approach infinity.
\end{proof}


\section{Proof of Corollary \ref{cor:exp}}\label{apx:Cexp}

\begin{proof}	
	Theorem \ref{thm:seqMM} gives an asymptotic expression for the error probability  $\hat{\epsilon}_i$ and expected stopping time $\mathbb{E}_{P_i}[\tau]$  of the mismatched sequential probability ratio test  for $i\in \{0,1\}$ as a function of thresholds. To find the largest error exponents of the test as defined in \eqref{eq:tradeseqMM1} and \eqref{eq:tradeseqMM2} we should find the largest thresholds $\hgo, \hgt$ such that they satisfy the expected time condition since type-\RNum{1} and type-\RNum{2}  error exponents are increasing function of $\hgo, \hgt$ by \eqref{eq:MMexp1}, \eqref{eq:MMexp2}. It is easy to check that thresholds in corollary are the largest thresholds satisfying the expected stopping time conditions, which concludes the proof.
\end{proof}



\section{Proof of Theorem \ref{thm:lowerworstseq}}\label{apx:Tlowerworstseq}
We show the result under hypothesis $0$, and similar steps are valid for hypothesis $1$. Observe that \eqref{eq:threshMMM1} can be written as 
\begin{align}
\hEo &= D(P_0\|P_1)\notag\\ 
&~~\times \min\Bigg \{  \frac{D(\hat{P}_1\|\hat{P}_0)}{D(\hat{P}_0\|\hat{P}_1)}, \frac{D(P_1\|\hat{P}_0)-D(P_1\|\hat{P}_1)}{D(P_0\|\hat{P}_1)-D(P_0\|\hat{P}_0)} \Bigg \}. \label{eq:threshMMM1sim} 
\end{align}	
From \eqref{eq:worst_case} and \eqref{eq:threshMMM1sim}, we need to compute two minimizations, the first of which over $P_0$. To this end, we exchange the order of these minimizations and  apply a Taylor series expansion to the first term of \eqref{eq:threshMMM1sim} around $P_0=\hat{P}_0, P_1=\hat{P}_1$ we obtain

\begin{align}
&\hEo =\notag\\
& \Big ( D(\hat{P}_0\| \hat{P}_1) + \dv_0^T  \thetav_{P_0}  + \dv_1^T  \thetav_{P_1} + o(\| \thetav_{P_0} \|_{\infty}+\| \thetav_{P_1} \|_{\infty}) \Big )  \nonumber \\ 
& ~~~\times  \min\Bigg \{  \frac{D(\hat{P}_1\|\hat{P}_0)}{D(\hat{P}_0\|\hat{P}_1)}, \frac{D(\hat{P}_1\|\hat{P}_0)}{D(\hat{P}_0\|\hat{P}_1)} \Big (1+ \frac{(\dv_2-\rho_0 \dv_1)^T \thetav_{P_1} }{D(\hat{P}_1\|\hat{P}_0)} \nonumber \\
& ~~~- \frac{\dv_0^T \thetav_{P_0} }{D(\hat{P}_0\|\hat{P}_1)}  + o(\| \thetav_{P_0} \|_{\infty}+\| \thetav_{P_1} \|_{\infty})  \Big)   \Bigg \} \\
&=D(\hat{P}_1\|\hat{P}_0) + \min \Big \{  \rho_0   \dv_0^T  \thetav_{P_0}   +\rho_0   \dv_1^T  \thetav_{P_1} ,   \dv_2^T  \thetav_{P_1}  \Big \}  \nonumber \\
&~~~+ o(\| \thetav_{P_0} \|_{\infty}+\| \thetav_{P_1} \|_{\infty}),
\label{eq:expansion}
\end{align}	
where for $i=0,1$,
\begin{align}
\thetav_{P_i}&= \big(P_i(x_1)-\hat{P}_i(x_1),\dotsc,P_i(x_{|\Xc|})-\hat{P}_i(x_{|\Xc|})\big)^T,\\
\dv_0&= \bigg( 1+\log \frac{\hat{P}_{0}(x_1)}{\hat{P}_1(x_1)},\dotsc,1+\log \frac{\hat{P}_{0}(x_1)}{\hat{P}_1(x_1)}  \bigg)^T,\\
\dv_1&= \bigg( -\frac{\hat{P}_{0}(x_1)}{\hat{P}_1(x_1)},\dotsc, -\frac{\hat{P}_{0}(x_{|\Xc|})}{\hat{P}_1(x_{|\Xc|})}\bigg)^T,\\
\dv_2&= \bigg( 1+\log \frac{\hat{P}_{1}(x_1)}{\hat{P}_0(x_1)},\dotsc,1+\log \frac{\hat{P}_{1}(x_1)}{\hat{P}_0(x_1)}  \bigg)^T+\rho_0\dv_1,
\end{align}	
and $\rho_0=\frac{D(\hat{P}_1\|\hat{P}_0)}{D(\hat{P}_0\|\hat{P}_1)}$.
By substituting expansion \eqref{eq:expansion} into  \eqref{eq:worst_case} we obtain
\begin{align}\label{eq:approxworst}
\underline{\hat{E}}_0(r_0) = D(&\hat{P}_1 \|\hat{P}_0) + \min \Big \{ \rho_0 \min_{\substack{P_0 \in \Bc(\hat P_0,r_0)\\ P_1 \in \Bc(\hat P_1,r_1)}}    \dv_0^T  \thetav_{P_0}   +   \dv_1^T  \thetav_{P_1}, \nonumber \\
& \min_{\substack{P_0 \in \Bc(\hat P_0,r_0)\\ P_1 \in \Bc(\hat P_1,r_1)}}    \dv_2^T  \thetav_{P_1}  \Big \}   + o(\| \thetav_{P_0} \|_{\infty}+\| \thetav_{P_1} \|_{\infty}).
\end{align}
Now, we further approximate the outer minimization constraint in \eqref{eq:worst_case}, or, equivalently, the minimizations over the divergence balls in \eqref{eq:approxworst} to get
\begin{align}\label{eq:worstapproxopt}
\underline{\hat{E}}_0(r_0) &= D(\hat{P}_1\|\hat{P}_0) + \min \Bigg \{ \rho_0 \min_{\substack{P_0 \in \underline\Bc(\hat P_0,r_0)\\ P_1 \in \underline\Bc(\hat P_1,r_1)}}    \dv_0^T  \thetav_{P_0}   +   \dv_1^T  \thetav_{P_1},   \nonumber \\
&~~~~~~~~~\min_{\substack{P_0 \in \underline\Bc(\hat P_0,r_0)\\ P_1 \in \underline\Bc(\hat P_1,r_1)}}    \dv_2^T  \thetav_{P_1}  \Bigg \}   + o(\| \thetav_{P_0} \|_{\infty}+\| \thetav_{P_1} \|_{\infty}).
\end{align}	
where
\begin{equation}
\underline\Bc(\hat P_i,r_i) = \big\{\thetav_i \in \R^{|\Xc|}: \thetav_{P_i}^T \Jm_i \thetav_{P_i}  \leq2 r_{i} ,\onev^T\thetav_{P_i}=0 \big\}.
\end{equation}
and
\begin{equation}
\Jm_i=\diag\bigg( \frac{\alpha}{\hat{P}_i(x_1)},\dotsc,\frac{\alpha}{\hat{P}_i(x_{|\Xc|})}\bigg)
\end{equation}
is the Fisher information matrix corresponding to hypothesis $i$. Next by optimizing over $P_0$ and $P_1$ and similarly to the proof of Theorem \ref{thm:lowerworst}, by substituting $P_i$ with $\hat{P}_{i}$ we get \eqref{eq:worstapproxseq}.

\section{Proof of Theorem \ref{thm:adverLRT}}\label{apx:adverLRT}

	 Assume $\hat{Q}$ fixed. Let
	\begin{align}
	\underline{\hat{E}}_0(\hat{Q},r)=  \min_{\substack{ Q: d(Q,\hat{Q})\leq r}} D(Q\|P_0).
	\end{align}
	 To derive the Taylor expansion of the optimization we expand the $d(\hat{Q},Q)$ and  $D(Q\|P_0)$ around $Q=\hat{Q}$  to get 
	\begin{equation}
	\underline{\hat{E}}_0 (\hat{Q},r)= \min_{\substack{Q: \frac{1}{2}  \thetav_{\hat{Q}}^T	 \Jm(\hat{Q})  \thetav_{\hat{Q}}	 \leq r \\ \onev^T \thetav_{\hat{Q}}=0	 }} D(\hat{Q}\|P_0) +  \thetav_{\hat{Q}}^{T}   \nabla {E}_0  +o(\| \thetav_{\hat{Q}} \|_{\infty})
	\end{equation}
	where
	\begin{align}
	\nabla E_0 &= \bigg(1+\log \frac{\hat{Q}(x_1)}{{P}_0(x_1)},\dotsc, 1+\log \frac{\hat{Q}(x_{|\Xc|})}{{P}_0(x_{|\Xc|})}\bigg)^T,\\
	\thetav_{\hat{Q}}		 &= \big(Q(x_1)-\hat{Q}(x_1),\dotsc,Q(x_{|\Xc|})-\hat{Q}(x_{|\Xc|})\big)^T,\\
	\Jm&=\diag\bigg( \frac{\alpha}{\hat{Q}(x_1)},\dotsc,\frac{\alpha}{\hat{Q}(x_{|\Xc|})}\bigg).
	\end{align}	
	Solving this convex optimization, we obtain
	\begin{equation}
	\underline{\hat{E}}_0 (\hat{Q},r)=  D(\hat{Q}\|P_0) - \sqrt{ \frac{2}{\alpha}\text{Var}_{\hat{Q}} \Big( \log \frac{\hat{Q}}{P_0} \Big ) r} +o(\sqrt{r}).
	\end{equation}
	Next, minimizing over $\hat{Q}$ we have
	\begin{equation}
	\underline{\hat{E}}_0 (r)= \min_{\hat{Q} \in \mathcal{Q}^{\rm adv}_0 }  D(\hat{Q}\|P_0) - \sqrt{ \frac{2}{\alpha}\text{Var}_{\hat{Q}} \Big( \log \frac{\hat{Q}}{P_0} \Big ) r} +o(\sqrt{r}).
	\end{equation}
	We can expand $\underline{E}_0$ around $r=0$ as
	\begin{equation}
	\underline{\hat{E}}_0 (r)=  \underline{\hat{E}}_0 (r=0)+ \frac{\partial  \underline{\hat{E}}_0 (r) }{\partial \sqrt{r}} \Bigg|_{r=0} \sqrt{r} +o(\sqrt{r}).
	\end{equation}
	By the envelope theorem \cite{Segal}, we have that
	\begin{equation}\label{eq:lrtsensamp}
	\frac{\partial  \underline{\hat{E}}_0 (r) }{\partial \sqrt{r}} \Bigg|_{r=0}= -\sqrt{ \frac{2}{\alpha}\text{Var}_{{Q}_\lambda} \Big( \log \frac{{Q}_\lambda}{P_0} \Big ) },
	\end{equation}
	where we used the fact that $\hat{Q}= Q_{\lambda}$ when $r=0$, where $Q_\lambda$ is the optimizing distribution in \eqref{eq:tilted}. Finally, setting $\underline{\hat{E}}_0 (r=0)=E_0$, concludes the proof.


\section{Proof of Corollary \ref{cor:samp_dist}}  \label{apx:samp_dist}
We prove the result for $i=0$, and the same holds for the type-\RNum{2} sensitivity. We can write the sensitivity of type-\RNum{1} error exponent to sample mismatch as 
\begin{equation}\label{eq:thetadv}
\theta_0^{\rm adv}= \sum_{a \in \Xc } P_0(a) \frac{Q_\lambda(a)}{P_0(a)} \log ^2 \Bigg ( \frac{Q_\lambda(a)}{P_0(a)} \Bigg ) -D^2(Q_\lambda\|P_0).
\end{equation}
For every $x\geq0$, we can show the following inequality
\begin{equation}
x\log^2 x \leq (x-1)^2.
\end{equation}
Using this we have
\begin{equation}
\theta_0^{\rm adv}\leq \sum_{a \in \Xc } P_0(a) \Bigg(\frac{Q_\lambda(a)}{P_0(a)}-1   \Bigg)^2 =\chi^2(Q_\lambda\|P_0)=\theta_0^{\rm dist}. 
\end{equation}
 To prove the inequality let $f_1(x)= (x-1)^2 - x\log^2 x$. Taking the second derivative we have
\begin{equation}
f_1''(x)=2\Big(1-\frac{\log x}{x} -\frac{1}{x}\Big) \geq 0,
\end{equation}
where we used $\log x \leq x-1$. Hence  $f_1(x)$ is convex and  the first order condition is  sufficient to find the minimum of $f_1(x)$. Setting $x=1$ we get $f_1(1)=0, f_1'(1)=0$ and therefore $f_1(x)\geq 0$.

To prove the lower bound, for every $x\geq 0$   we have
\begin{equation}
x-1 \leq x \log x.
\end{equation}
 Applying this inequality to \eqref{eq:thetadv} we get
 \begin{align}
&\theta_0^{\rm adv}\notag\\ 
&\geq \sum_{a \in \Xc } P_0(a) \Bigg (\frac{Q_\lambda(a)}{P_0(a)}\Bigg )^{-1} \Bigg(\frac{Q_\lambda(a)}{P_0(a)}-1   \Bigg)^2  -D^2(Q_\lambda\|P_0)\\
 &\geq  \Bigg(\min_i\frac{P_0(i)}{Q_\lambda(i)} \Bigg )  \theta_0^{\rm dist}  -E_0^2,
 \end{align}
which concludes the proof.


\section{Proof of Theorem \ref{thm:advGLRT}}  \label{apx:advGLRT}

	We show the result under the first hypothesis; similar steps are valid for the second hypothesis.  Unlike the likelihood ratio test, we first consider the minimization over $P_0$ for a fixed $Q$ and we perform a Taylor expansion of $D(Q\|P_0)$ around $Q=\hat{Q}$  to get
	\begin{equation}\label{eq:optsenH}
	\underline{\hat{E}}_0 (r)= \min_{\substack{ Q, \hat{Q}: D(\hat{Q}\|P_0)=  \gamma \\ d(Q,\hat{Q})\leq r  }} D(\hat{Q}\|P_0) +  \thetav_{Q}^{T}   \nabla {E}_0 +o(\| \thetav_{Q} \|_{\infty}),
	\end{equation}
	where
	\begin{align}
	\nabla E_0 &= \bigg( 1+\log \frac{\hat{Q}(x_1)}{{P}_0(x_1)},\dotsc, 1+\log \frac{\hat{Q}(x_{|\Xc|})}{P_0(x_{|\Xc|})}\bigg)^T,\\
	\thetav_{Q}		 &= \big(Q(x_1)-\hat{Q}(x_1),\dotsc,Q(x_{|\Xc|})-\hat{Q}(x_{|\Xc|})\big)^T.
	\end{align}	
	We have replaced the inequality constraint with equality since the the optimal value of the minimization will be attained at the boundary. Next, by solving \eqref{eq:optsenH} over $Q$ for fixed $\hat{Q}$, we get
	
	\begin{align}
	\underline{\hat{E}}_0 (r)=& \min_{\substack{ \hat{Q}: D(\hat{Q}\|P_0) = \gamma}} D(\hat{Q}\|P_0)   - \sqrt{\frac{2}{\alpha}  {\rm Var}_{\hat{Q}} \Bigg[\log \frac{\hat{Q}(X)}{P_0(X)}  \Bigg] r}  \nonumber \\
	&+ o(\sqrt{r})  \\
	=& \gamma- \max_{\substack{ \hat{Q}: D(\hat{Q}\|P_0) = \gamma}} \sqrt{\frac{2}{\alpha}  {\rm Var}_{\hat{Q}} \Bigg[\log \frac{\hat{Q}(X)}{P_0(X)}  \Bigg]   r}   + o(\sqrt{r}). 
	\end{align}
	Similarly for the type-\RNum{2} error exponent, we have
	\begin{align}
	\underline{\hat{E}}_1 (r)=& \min_{\substack{ \hat{Q}: D(\hat{Q}\|P_0) \leq  \gamma}} D(\hat{Q}\|P_1)   - \sqrt{\frac{2}{\alpha}  {\rm Var}_{\hat{Q}} \Bigg[\log \frac{\hat{Q}(X)}{P_1(X)}  \Bigg] r}  \nonumber \\
	& + o(\sqrt{r}). 
	\end{align}
	Next, by the envelope theorem  \cite{Segal} and similarly to the proof of Theorem \ref{thm:adverLRT}  we get \eqref{eq:Hoefadver1}.


\section{Proof of Theorem \ref{thm:SPRTadver}}  \label{apx:SPRTadver}

	Assume samples are drawn by $P_0$. First, we find a bound to the probability of error as a function of the threshold. The type-\RNum{1} probability of error of sequential probability ratio test under disturbed samples $\Th'$ can be upper bounded by
	\begin{align}
	&\epsilon_0 \notag\\&\leq  \sum_{t=1}^{\infty} \PP_0\Big [ t(D(\Th' \| {P}_0)-D(\Th'\| {P}_1)) \geq  \tilde{\gamma}_1 ,  \Th \in \Bc(\Th',r)  \Big ].
	\end{align} 
	By the method of types we have
	\begin{align}\label{eq:uppersensamp}
	\epsilon_0 &\leq \sum_{t=1}^{\infty} \sum_{\substack{\hat{Q} \in		\hat{\mathcal{Q}}_{\tilde{\gamma}_1}(t)\\ Q:  d(Q,\hat{Q})\leq r } } e^{ -t D(Q\| P_0) }\\
	&\leq \sum_{t=1}^{\infty} (t+1)^{|\mathcal{X}|} e^{- \underline{E}_{ \tilde{\gamma}_1}(r,t)},
	\end{align} 
	where 
	\begin{equation}\label{eq:set}
	\hat{\mathcal{Q}}_{\gamma}(t)=\Big \{\hat{Q}:   D(\hat{Q}\| {P}_0)-D(\hat{Q}\| {P}_1)   \geq \frac{\gamma}{t} \Big \},
	\end{equation} 
	\begin{equation}\label{eq:Eadverseq}
	\underline{E}_{\gamma}(r,t)=t \min_{\hat{Q} \in		\hat{\mathcal{Q}}_{{\gamma}}(t)}  \min_{ Q: d(Q,\hat{Q})\leq r } D(Q\|P_0).
	\end{equation}	
	Let $ \tilde{\gamma}_1=\gamma_1+\frac{ |\Xc|+2}{\lambda^*_1} \log (t+1)$,  where $\lambda_1^*$ is the optimal Lagrange multiplier corresponding to the constraint in \eqref{eq:set}  of optimization in \eqref{eq:Eadverseq} when $\gamma=\gamma_1$. We expand   $\underline{E}_{ \tilde{\gamma}_1}(r,t)$ around $ \tilde{\gamma}_1=\gamma_1$. Similarly to Lemma \ref{lem:convex}  it can be shown that $\underline{E}_{\gamma}(r,t)$ is convex in $\gamma$, hence 
	\begin{equation}
	\underline{E}_{\tilde{\gamma}_1}(r,t) \geq \underline{E}_{\gamma_1}(r,t) + \frac{\partial  \underline{E}_{\gamma}(r,t) }{\partial \gamma}\Bigg |_{\gamma=\gamma_1}  \frac{ |\Xc|+2}{\lambda^*_1} \log (t+1).
	\end{equation}
	  By the envelope theorem we have
	\begin{equation}
	\frac{\partial  \underline{E}_{\gamma}(r,t) }{\partial \gamma}\Bigg |_{\gamma=\gamma_1} ={\lambda_1^*} \geq 0.
	\end{equation}
	 Furthermore, the inequality is strict if $\frac{\gamma_1}{t}\geq -D(P_0\|P_1)$, hence by choosing $\gamma_1\geq0$  for every $t$ this condition is satisfied. Hence we can upper bound \eqref{eq:uppersensamp} by
	\begin{align}
	\epsilon_0 & \leq \sum_{t=1}^{\infty} (t+1)^{-2} e^{- \underline{E}_{\gamma_1}(r,t)},\\
	& \leq \frac{\pi^2}{6} e^{- \min_{t\geq1} \underline{E}_{\gamma_1}(r,t)}. \label{eq:uppersensamp_app}
	\end{align} 
	Next, by Taylor expanding $\underline{E}_{\gamma_1}(r,t)$, we have 
	\begin{equation}\label{eq:Taylorseqsamp}
	\underline{E}_{\gamma_1}(r,t)= \underline{E}_{\gamma_1}\big(r=0,t\big)+\frac{\partial  \underline{E}_{\gamma_1}(r,t) }{\partial \sqrt{r}}\Bigg |_{r=0,t} \sqrt{r} +o(\sqrt{r}).
	\end{equation}
	Also, using \eqref{eq:lrtsensamp} in the proof of theorem \ref{thm:adverLRT},  we get
	\begin{equation}\label{eq:senseqsam}
	\frac{\partial  \underline{E}_{\gamma_1}(r,t) }{\partial \sqrt{r}}\Bigg |_{r=0,t}=-t \sqrt{\frac{2}{\alpha}\text{Var}_{Q_{\lambda(t)}} \Bigg(\log \frac{Q_{\lambda(t)}}{{P}_0}  \Bigg)}, 
	\end{equation}
	where $Q_{\lambda(t)}$ is the optimizing distribution  in \eqref{eq:tilted} for the case where $\gamma=\frac{\gamma_1}{t}$ in \eqref{eq:constraint1}. 
	Let 
	\begin{equation}\label{eq:exptime}
	\underline{E}_{\gamma_1}(r)=\min_{t \geq 1}  \underline{E}_{\gamma_1}(r,t).
	\end{equation}  
	Taking derivatives respect to $\sqrt{r}$ we have
	\begin{align}
	\frac{d \underline{E}_{\gamma_1}(r) }{d \sqrt{r}} &=\frac{\partial  \underline{E}_{\gamma_1}(r,t^*(r)) }{\partial \sqrt{r}}  + \frac{\partial  \underline{E}_{\gamma_1}(r,t^*(r))}{\partial t^*(r)} \cdot \frac{d t^*(r)}{d \sqrt{r}}.
	\end{align}
	where $t^*(r)$ is the minimizing $t$ in \eqref{eq:exptime} as the function of $r$. By the first order condition $ \frac{\partial  \underline{E}_{\gamma_1}(r,t^*(r))}{\partial t^*(r)} =0$ , we get
	\begin{align}
	\frac{d \underline{E}_{\gamma_1}(r) }{d \sqrt{r}} = \frac{\partial  \underline{E}_{\gamma_1}(r,t^*(r)) }{\partial \sqrt{r}},  
	\end{align}
	 hence
	\begin{equation}\label{eq:envseq}
	\frac{d \underline{E}_{\gamma_1}(r) }{d \sqrt{r}}\Bigg |_{r=0}=  \frac{\partial  \underline{E}_{\gamma_1}(r,t^*(r)) }{\partial \sqrt{r}} \Bigg |_{r=0,t^*(r=0)}.  
	\end{equation}
	To find the $t^*(r=0)$, note that $D({Q}\| {P}_1)\geq 0$, hence
	\begin{align}
	\underline{E}_{\gamma_1}(r=0,t)&=\min_{{Q}:   D({Q}\| {P}_0)  \geq \frac{\gamma_1}{t}+D({Q}\| {P}_1)  }  tD(Q\|P_0) \label{eq:optime}\\
	&\geq  \gamma_1.
	\end{align}	
	Letting $\gamma_1=nD(P_1\|P_0)$,  $t=n$ will achieve this minimum. Additionally, $t^*(r=0)=n$ is the unique solution. To see this we can write the optimization in \eqref{eq:optime} in the dual form as

	\begin{equation}
	\underline{E}_{\gamma_1}(r=0,t)= \max_{\lambda \geq 0} \gamma_1\lambda -t \log \Big (  \sum_{x\in \Xc} P_0^{1-\lambda}(x) P_1^{\lambda}(x) \Big ). 
	\end{equation}     
	Since $\underline{E}_{\gamma_1}(r=0,t)$ is the supremum of linear functions in $t$, therefore it is convex in $t$. Also by the envelope theorem, we have
	\begin{equation}
	\frac{\partial \underline{E}_{\gamma_1}(r=0,t)}{\partial t}=- \log \Big (  \sum_{x\in \Xc} P_0^{1-\lambda^*}(x) P_1^{\lambda^*}(x) \Big ), 
	\end{equation}
	and setting this to zero, we can conclude that first order condition only satisfies if $\lambda=0$ or $\lambda=1$, i.e., $Q_\lambda=P_0$ or $Q_\lambda=P_1$ should be the optimizer in  \eqref{eq:optime}, and it is clear that only $\frac{\gamma_1}{t}=D(P_1\|P_0)$ can satisfy this condition which  shows the uniqueness of the solution. Then by \eqref{eq:senseqsam},  \eqref{eq:envseq} and substituting $t^*(r=0)=n$ , we obtain that
	\begin{equation}\label{eq:firstder}
	\frac{d \underline{E}_{\gamma_1}(r) }{d \sqrt{r}}=-n\sqrt{\frac{2}{\alpha}\text{Var}_{P_1} \Bigg(\log \frac{P_1}{{P}_0}  \Bigg)}.
	\end{equation}
	Also, we have
	\begin{equation}\label{eq:zeroder}
	\underline{E}_{\gamma_1}(r=0)=\underline{E}_{\gamma_1}\big(r=0,t^*(r=0)\big)=nD(P_1\|P_0).
	\end{equation}
	Finally, By  \eqref{eq:firstder}, \eqref{eq:zeroder} and Taylor expanding $\underline{E}_{\gamma_1}(r)$ around $r=0$ as the function of $\sqrt{r}$, we get
	\begin{align}\label{eq:seqsampsenexp}
	\epsilon_0 \leq c \cdot e^{- n\Big (D(P_1\|P_0)-\sqrt{\frac{2}{\alpha}\text{Var}_{P_1} \big(\log \frac{P_1}{{P}_0}  \big) } \Big )},
	\end{align} 
	where $c$ is a positive constant. Next, we find the  worst-case expected stopping time $	\mathbb{E}_{P_0}[\underline{\hta}], 	\mathbb{E}_{P_1}[\underline{\hta}]$. We can write the accumulated log-likelihood ratio $\hat{S}_{n}$  evaluated at adversarial samples with the type $\Th'$ by
	\begin{equation}
	\frac{\hat{S}_{n}}{n}=\frac{S_{n}}{n}+ \thetav_{\Th}^{T}   \nabla \hat{S},   
	\end{equation}
	where
	\begin{align}
	\nabla \hat{S} &= \bigg( \log \frac{P_{0}(x_1)}{P_1(x_1)},\dotsc, \log \frac{P_{0}(x_{|\Xc|})}{P_1(x_{|\Xc|})}\bigg)^T,\\
	\thetav_{\Th}		 &= \big(\Th'(x_1)-\Th(x_1),\dotsc,\Th'(x_{|\Xc|})-\Th(x_{|\Xc|})\big)^T,
	\end{align}			
	and $\Th$ is the type of the original samples at time $n$. Assume $\Th$ is fixed and the adversary is trying to maximize the stopping time, or equivalently reducing the $\hat{S}_n$ under the first hypothesis. Therefore, letting $\underline{\hat{S}}_n= \min_{ \hat{Q}: d(\Th, \hat{Q})\leq r}\hat{S}_n $ we have
	\begin{align}
	\frac{\underline{\hat{S}}_n}{n}&=\frac{S_n}{n}+ \min_{ \hat{Q}: d(\Th,\hat{Q})\leq r} \thetav_{\Th}^{T}   \nabla \hat{S}
	\end{align}
	Note that, this is a convex optimization problem, and similarly to the proof of Theorem \ref{thm:adverLRT} we get
	\begin{align}
	\frac{\underline{\hat{S}}_n}{n}=\frac{S_n}{n}- \sqrt{\frac{2}{\alpha}\text{Var}_{Q} \Bigg(\log \frac{P_0}{{P}_1} \Bigg)r } +o(\sqrt{r}).
	\end{align}
	Let       
	\begin{align}\label{eq:tau1a}
	\underline{\hta}_{0}=\inf \{n\geq1: {\underline{\hat{S}}_n}& \geq\tilde{ \gamma}_0 \},  
	\end{align}   
	to be the worst-case stopping time under the adversarial perturbation.  Similarly to proof of the Theorem \ref{thm:lowerworstseq}, it is easy to show that the worst case stopping time  $\underline{\hta}_{0}$ tends to infinity as $\tilde{\gamma}_0\rightarrow \infty$, and also for every finite $\tilde{\gamma}_0$ the stopping time is finite with probability one. 	Let $\tilde{\gamma}_0 =\gamma_0 +\frac{ |\Xc|+2}{\lambda^*_2} \log (t+1)$  where $\lambda_2^*$ is the optimal Lagrange multiplier defined similarly to $\lambda_1^*$ for the type-\RNum{2} error exponent. By SLLN  and  continuous mapping theorem as $\gamma_0 \rightarrow \infty$ we have
	\begin{align}
	\frac{\underline{\hat{S}}_{\underline{\hta}_{0}}}{\underline{\hta}_{0}}   \xrightarrow[]{p}  D(P_0\|P_1)- \sqrt{\frac{2}{\alpha}\text{Var}_{P_0} \Bigg(\log \frac{P_0}{{P}_1}  \Bigg) r } +o(\sqrt{r}),\label{eq:convergp1samp}\\ 
	\frac{\underline{\hat{S}}_{\underline{\hta}_{0}-1}}{\underline{\hta}_{0}-1}   \xrightarrow[]{p}  D(P_0\|P_1)- \sqrt{\frac{2}{\alpha}\text{Var}_{P_0} \Bigg(\log \frac{P_0}{{P}_1}  \Bigg)r } +o(\sqrt{r}). \label{eq:convergp2samp}
	\end{align}
     It is easy to show that there exist a finite $\underline{\hta}_{0}$ with probability one such that
	\begin{align}\label{eq:convergp}
	\hat{s}_{\underline{\hta}_{0}-1} \leq \tilde{\gamma}_0 < \hat{s}_{\underline{\hta}_{0}}~~~   \text{with probability } 1.
	\end{align}
	Therefore, from \eqref{eq:convergp1samp}, \eqref{eq:convergp2samp}, and \eqref{eq:convergp}, we conclude that 
	\begin{equation}\label{eq:convP1}
	\frac{\underline{\hta}_{0}}{\gamma_0}  \xrightarrow[]{p}   \Bigg ( D(P_0\|P_1)- \sqrt{\frac{2}{\alpha}\text{Var}_{P_0} \Bigg(\log \frac{P_0}{{P}_1}  \Bigg)r } +o(\sqrt{r}) \Bigg )^{-1}, 
	\end{equation}  	
	as	$\gamma_0 \rightarrow  \infty$.

	Similarly to the proof of Theorem \ref{thm:lowerworstseq}  we can show the convergence in expectation by proving the uniform integrability of $\frac{\underline{\hta}_0}{ \gamma_0}$ as $\gamma_0 \rightarrow \infty$ as well as convergence of $\frac{\underline{\hta}}{\gamma_0}$ by using the convergence of $\frac{\underline{\hta}_{0}}{\gamma_0}  $  . Hence
	\begin{align} \label{eq:SPRTadvtime}
	\mathbb{E}_{P_0}[\underline{\hta}]=& \frac{ \gamma_0}{D(P_0\|P_1)}+\frac{\gamma_0 \sqrt{\frac{2}{\alpha}\text{Var}_{P_0} \Big(\log \frac{P_0}{{P}_1}  \Big)r }}{D^2(P_0\|P_1)}  \nonumber \\
	& +o(1)+o(\sqrt{r}),\\ 
	\mathbb{E}_{P_1}[\underline{\hta}]=& \frac{ \gamma_1}{D(P_1\|P_0)}+\frac{\gamma_1 \sqrt{\frac{2}{\alpha}\text{Var}_{P_1} \Big(\log \frac{P_1}{{P}_0}  \Big)r }}{D^2(P_1\|P_0)}  \nonumber \\
	& +o(1)+o(\sqrt{r}),
	\end{align}
	Finally letting $\gamma_0=nD(P_0\|P_1), \gamma_1=D(P_1\|P_0)$ we get
	\begin{align} \label{eq:SPRTadvtime}
	\mathbb{E}_{P_0}[\underline{\hta}]&=n+\frac{ \sqrt{\frac{2}{\alpha}\text{Var}_{P_0} \Big(\log \frac{P_0}{{P}_1}  \Big)r }}{D(P_0\|P_1)}n +o(1)+o(\sqrt{r}),\\ 
	\mathbb{E}_{P_1}[\underline{\hta}]&=n+\frac{ \sqrt{\frac{2}{\alpha}\text{Var}_{P_1} \Big(\log \frac{P_1}{{P}_0}  \Big)r }}{D(P_1\|P_0)}n +o(1)+o(\sqrt{r}),
	\end{align}
	 and using \eqref{eq:seqsampsenexp} the worst case error exponent will satisfy
	\begin{align}
	&\underline{\hat{E}}_0 (r)	\underline{\hat{E}}_1 (r) \geq D(P_0\|P_1)D(P_1\|P_0)   \nonumber\\ 
	& \times\Bigg (1- \frac{ 2\sqrt{\frac{2}{\alpha}\text{Var}_{P_0} \big(\log \frac{P_0}{{P}_1}  \big)r }}{D(P_0\|P_1)}-\frac{2 \sqrt{\frac{2}{\alpha}\text{Var}_{P_1} \big(\log \frac{P_1}{{P}_0}  \big)r }}{D(P_1\|P_0)}\Bigg)\notag\\& +o(\sqrt{r}),
	\end{align}
	which concludes the proof.

\bibliographystyle{ieeebib}
\bibliographystyle{ieeetr}
\bibliography{journal_abbr,Mismatch_Journal}

\begin{thebibliography}{10}

\bibitem{Lehmann}
E.~L. {Lehmann} and J.~P. {Romano},
\newblock {\em Testing statistical hypotheses},
\newblock Springer Texts in Statistics. Springer, New York, third edition,
  2005.

\bibitem{poor2013introduction}
H.~V. {Poor},
\newblock {\em An introduction to signal detection and estimation},
\newblock Springer, 2013.

\bibitem{Neyman}
J.~Neyman and E.~S. Pearson,
\newblock ``On the problem of the most efficient tests of statistical
  hypotheses,''
\newblock {\em Philosophical Transactions of the Royal Society of London.
  Series A, Containing Papers of a Mathematical or Physical Character}, vol.
  231, pp. 289--337, 1933.

\bibitem{Wald1}
A.~Wald,
\newblock ``{Sequential Tests of Statistical Hypotheses},''
\newblock {\em The Annals of Mathematical Statistics}, vol. 16, no. 2, pp. 117
  -- 186, 1945.

\bibitem{Wald}
A.~Wald and J.~Wolfowitz,
\newblock ``Optimum character of the sequential probability ratio test,''
\newblock {\em Ann. Math. Statist.}, vol. 19, no. 3, pp. 326--339, 09 1948.

\bibitem{Hoeffding}
W.~Hoeffding,
\newblock ``Asymptotically optimal tests for multinomial distributions,''
\newblock {\em Ann. Math. Statist.}, vol. 36, no. 2, pp. 369--401, 04 1965.

\bibitem{ziv1988classification}
J.~Ziv,
\newblock ``On classification with empirically observed statistics and
  universal data compression,''
\newblock {\em IEEE Trans. Inf. Theory}, vol. 34, no. 2, pp. 278--286, 1988.

\bibitem{CoverClass}
T.~Cover. and P.~Hart,
\newblock ``Nearest neighbor pattern classification,''
\newblock {\em IEEE Trans. Inf. Theory}, vol. 13, no. 1, pp. 21--27, 1967.

\bibitem{Gutman}
M.~Gutman,
\newblock ``Asymptotically optimal classification for multiple tests with
  empirically observed statistics,''
\newblock {\em IEEE Trans. Inf. Theory}, vol. 35, no. 2, pp. 401--408, 1989.

\bibitem{Huber}
P.~J. {Huber},
\newblock ``A robust version of the probability ratio test,''
\newblock {\em Ann. Math. Statist.}, vol. 36, no. 6, pp. 1753--1758, 12 1965.

\bibitem{Kassam}
S.~A. {Kassam} and H.~V. {Poor},
\newblock ``Robust techniques for signal processing: A survey,''
\newblock {\em Proceedings of the IEEE}, vol. 73, no. 3, pp. 433--481, March
  1985.

\bibitem{Levy}
B.~C. {Levy},
\newblock ``Robust hypothesis testing with a relative entropy tolerance,''
\newblock {\em IEEE Trans. Inf. Theory}, vol. 55, no. 1, pp. 413--421, Jan
  2009.

\bibitem{renyi1961measures}
A.~R\'enyi,
\newblock ``On measures of entropy and information,''
\newblock in {\em Proc. 4th Berkeley Symposium on Mathematical Statistics and
  Probability, Volume 1: Contributions to the Theory of Statistics}, 1961.

\bibitem{Harremos}
T.~van Erven and P.~Harremo\"es,
\newblock ``R\'enyi divergence and {Kullback-Leibler} divergence,''
\newblock {\em IEEE Trans. Inf. Theory}, vol. 60, no. 7, pp. 3797--3820, 2014.

\bibitem{ali1966general}
S.~M. Ali and S.~D. Silvey,
\newblock ``A general class of coefficients of divergence of one distribution
  from another,''
\newblock {\em Journal of the Royal Statistical Society: Series B
  (Methodological)}, vol. 28, no. 1, pp. 131--142, 1966.

\bibitem{csiszar1967information}
I.~Csisz\'ar,
\newblock ``Information-type measures of difference of probability
  distributions and indirect observation,''
\newblock {\em Studia Scientiarum Mathematicarum Hungarica}, vol. 2, pp.
  229--318, 1967.

\bibitem{Cover}
T.~M. {Cover} and J.~A. {Thomas},
\newblock {\em Elements of Information Theory},
\newblock Wiley, 2nd edition, July 2006.

\bibitem{Dembo}
A.~{Dembo} and O.~{Zeitouni},
\newblock {\em Large Deviations Techniques and Applications}, vol.~95,
\newblock 01 2010.

\bibitem{Blahut}
R.~{Blahut},
\newblock ``Hypothesis testing and information theory,''
\newblock {\em IEEE Trans. Inf. Theory}, vol. 20, no. 4, pp. 405--417, July
  1974.

\bibitem{Merhav}
O.~{Zeitouni}, J.~{Ziv}, and N.~{Merhav},
\newblock ``When is the generalized likelihood ratio test optimal?,''
\newblock {\em IEEE Trans. Inf. Theory}, vol. 38, no. 5, pp. 1597--1602, 1992.

\bibitem{Csiszar}
I.~Csisz\'ar and P.~C. Shields,
\newblock ``Information theory and statistics: A tutorial,''
\newblock {\em Foundations and Trends in Communications and Information
  Theory}, vol. 1, no. 4, pp. 417--528, 2004.

\bibitem{Poly}
Y.~Polyanskiy and S.~Verd\'u,
\newblock ``Binary hypothesis testing with feedback,''
\newblock in {\em Information Theory and Applications Workshop (ITA)}, 2011.

\bibitem{Berk}
R.~Berk,
\newblock ``{Some Asymptotic Aspects of Sequential Analysis},''
\newblock {\em The Annals of Statistics}, vol. 1, no. 6, pp. 1126 -- 1138,
  1973.

\bibitem{Woodroofe}
M.~Woodroofe,
\newblock {\em Nonlinear Renewal Theory in Sequential Analysis},
\newblock SIAM, CBMS-NSF Regional Conference Series in Applied Mathematics,
  1982.

\bibitem{Boyd}
S.~Boyd and L.Vandenberghe,
\newblock {\em Convex Optimization},
\newblock Cambridge University Press, New York, NY, USA, 2004.

\bibitem{Veeravalli}
Y.~{Li}, S.~{Nitinawarat}, and V.~V. {Veeravalli},
\newblock ``Universal outlier hypothesis testing,''
\newblock {\em IEEE Trans. Inf. Theory}, vol. 60, no. 7, pp. 4066--4082, July
  2014.

\bibitem{Hoeffdingineq}
W.~Hoeffding,
\newblock ``Probability inequalities for sums of bounded random variables,''
\newblock {\em Journal of the American Statistical Association}, vol. 58, no.
  301, pp. 13--30, Mar. 1963.

\bibitem{Walker}
M.~{Walker},
\newblock ``A generalization of the maximum theorem,''
\newblock {\em International Economic Review}, vol. 20, no. 1, pp. 267--272,
  1979.

\bibitem{Segal}
P.~Milgrom and I.~Segal,
\newblock ``Envelope theorems for arbitrary choice sets,''
\newblock {\em Econometrica}, vol. 70, no. 2, pp. 583--601, 2002.

\bibitem{Zheng}
S.~{Borade} and L.~{Zheng},
\newblock ``Euclidean information theory,''
\newblock in {\em 2008 IEEE Int. Z\"urich Seminar on Commun.}, March 2008, pp.
  14--17.

\bibitem{Choosing}
A.~L. Gibbs and F.~E. Su,
\newblock ``On choosing and bounding probability metrics,''
\newblock {\em International Statistical Review}, vol. 70, no. 3, pp. 419--435,
  2002.

\bibitem{Sason}
S.~Igal and V.~Sergio,
\newblock ``$f$ -divergence inequalities,''
\newblock {\em IEEE Trans. Inf. Theory}, vol. 62, no. 11, pp. 5973--6006, 2016.

\bibitem{Resnick}
S.~Resnick,
\newblock {\em A probability path},
\newblock Springer, 2019.

\bibitem{Bill}
P.~Billingsley,
\newblock {\em Probability and Measure},
\newblock John Wiley and Sons, second edition, 1986.

\end{thebibliography}

\newpage

\begin{IEEEbiographynophoto}{Parham Boroumand} received a bachelor's degree in electrical engineering from the Sharif University of Technology in 2018. He is currently working towards the Ph.D. degree in the Signal Processing and Communications Group at the Department of Engineering, University of Cambridge, Cambridge, U.K. His research interests include the areas of statistics, information theory, and signal processing.  \end{IEEEbiographynophoto}

\vspace{-140mm}

\begin{IEEEbiographynophoto}{Albert Guill\'en i F\`abregas}
(S--01, M--05, SM--09, F--22) received the Telecommunications Engineering Degree and
the Electronics Engineering Degree from Universitat Polit\`ecnica de
Catalunya and Politecnico di Torino, respectively in 1999, and the Ph.D.
in Communication Systems from \'Ecole Polytechnique F\'ed\'erale de
Lausanne (EPFL) in 2004.

In 2020, he returned to a full-time faculty position at the Department
of Engineering, University of Cambridge, where he had been a full-time
faculty and Fellow of Trinity Hall from 2007 to 2012. Since 2011 he has
been an ICREA Research Professor at Universitat Pompeu Fabra (currently
on leave). He has held appointments at the New Jersey Institute of
Technology, Telecom Italia, European Space Agency (ESA), Institut
Eur\'ecom, University of South Australia, Universitat Pompeu Fabra, University of Cambridge, as
well as visiting appointments at EPFL, \'Ecole Nationale des
T\'el\'ecommunications (Paris), Universitat Pompeu Fabra, University of
South Australia, Centrum Wiskunde \& Informatica and Texas A\&M
University in Qatar. His specific research interests are in the areas of
information theory, communication theory, coding theory, statistical inference.

Dr. Guill\'en i F\`abregas is a Member of the Young Academy of Europe,
and received the Starting and Consolidator Grants from the European
Research Council, the Young Authors Award of the 2004 European Signal
Processing Conference, the 2004 Best Doctoral Thesis Award from the
Spanish Institution of Telecommunications Engineers, and a Research
Fellowship of the Spanish Government to join ESA. Since 2013 he has been an Editor of the
Foundations and Trends in Communications and Information Theory, Now
Publishers and was an Associate Editor of the \sc{IEEE Transactions on
Information Theory} (2013--2020) and \sc{IEEE Transactions on Wireless
Communications} (2007--2011).
\end{IEEEbiographynophoto}

\end{document}